\def\norm #1{\left\|#1\right\|}
\def\twon #1{\left\|#1\right\|_2}
\def\onen #1{\left\|#1\right\|_1}
\def\frobn #1{\left\|#1\right\|_{\text{F}}}
\def\abs #1{\left|#1\right|}
\def\st{\text{subject to }}
\def\bC{\mathbb{C}}
\def\bR{\mathbb{R}}
\def\bE{\mathbb{E}}
\def\bT{\mathbb{T}}
\def\bZ{\mathbb{Z}}
\def\m #1{\boldsymbol{#1}}
\def\cA{\mathcal{A}}
\def\cC{\mathcal{C}}
\def\cM{\mathcal{M}}
\def\cP{\mathcal{P}}
\def\bee{\begin{equation}}
\def\ene{\end{equation}}
\def\beq{\begin{eqnarray}}
\def\enq{\end{eqnarray}}
\def\lentwo{\setlength\arraycolsep{2pt}}
\newtheorem{lem}{Lemma}
\newtheorem{rem}{Remark}
\newtheorem{thm}{Theorem}
\newtheorem{prop}{Proposition}
\newcommand{\BOX}{\hfill\rule{2mm}{2mm}}
\def\equ #1{\begin{equation}#1\end{equation}}
\def\equa #1{\begin{eqnarray}#1\end{eqnarray}}
\def\sbra #1{\left(#1\right)}
\def\mbra #1{\left[#1\right]}
\def\lbra #1{\left\{#1\right\}}
\def\diag #1{\text{diag}#1}
\def\tr #1{\text{tr}#1}
\def\rank #1{\text{rank}#1}
\def\st {\text{ subject to }}
\title{Vandermonde Decomposition of Multilevel Toeplitz Matrices With Application to Multidimensional Super-Resolution}
\author{Zai Yang, \emph{Member, IEEE,} Lihua Xie, \emph{Fellow, IEEE,} and Petre Stoica, \emph{Fellow, IEEE}
\thanks{This work appeared in part in the {\em Proceedings of the 2015 IEEE International Symposium on Information Theory (ISIT)}, Hong Kong, China, June 2015 \cite{yang2015generalized}.

Z. Yang is with the School of Automation, Nanjing University of Science and Technology, Nanjing 210094, China, and with the School of Electrical and Electronic Engineering, Nanyang Technological University, Singapore 639798 (e-mail: yangzai@ntu.edu.sg).

L. Xie is with the School of Electrical and Electronic Engineering, Nanyang Technological University, Singapore 639798 (e-mail: elhxie@ntu.edu.sg).

P. Stoica is with the Department of Information Technology, Uppsala University, Uppsala, SE 75105, Sweden (e-mail: ps@it.uu.se).}}
\begin{document}
\maketitle

\begin{abstract}
The Vandermonde decomposition of Toeplitz matrices, discovered by Carath\'{e}odory and Fej\'{e}r in the 1910s and rediscovered by Pisarenko in the 1970s, forms the basis of modern subspace methods for 1D frequency estimation. Many related numerical tools have also been developed for multidimensional (MD), especially 2D, frequency estimation; however, a fundamental question has  remained unresolved as to whether an analog of the Vandermonde decomposition holds for multilevel Toeplitz matrices in the MD case. In this paper, an affirmative answer to this question and a constructive method for finding the decomposition are provided when the matrix rank is lower than the dimension of each Toeplitz block. A numerical method for searching for a decomposition is also proposed when the matrix rank is higher. The new results are applied to studying MD frequency estimation within the recent super-resolution framework. A precise formulation of the atomic $\ell_0$ norm is derived using the Vandermonde decomposition. Practical algorithms for frequency estimation are proposed based on relaxation techniques. Extensive numerical simulations are provided to demonstrate the effectiveness of these algorithms compared to the existing atomic norm and subspace methods.
\end{abstract}

\textbf{Keywords:} The Vandermonde decomposition, multilevel Toeplitz matrix, multidimensional frequency estimation, super-resolution, atomic norm.

\section{Introduction}

The Vandermonde decomposition is a classical result by Carath\'{e}odory and Fej\'{e}r dating back to 1911 \cite{caratheodory1911zusammenhang}. To be specific, suppose that $\m{T}$ is an $n\times n$ positive semidefinite (PSD) Toeplitz matrix of rank $r<n$. The result states that $\m{T}$ can be uniquely decomposed as
\equ{\m{T} = \m{A}\m{P}\m{A}^H, \label{eq:vanderdecp1}}
where $\m{P}$ is an $r\times r$ positive definite diagonal matrix and $\m{A}$ is an $n\times r$ Vandermonde matrix whose columns correspond to uniformly sampled complex sinusoids with different frequencies. The result became important in the area of data analysis and signal processing when it was rediscovered by Pisarenko and used for frequency retrieval from the data covariance matrix \cite{pisarenko1973retrieval}. From then on, the Vandermonde decomposition, also referred to as the Carath\'{e}odory-Fej\'{e}r-Pisarenko decomposition, has formed the basis of a prominent subset of methods designated as subspace methods, e.g., multiple signal classification (MUSIC) and estimation of parameters by rotational invariant techniques (ESPRIT) (see the review in \cite{stoica2005spectral}).

The problem of estimating multidimensional (MD) frequencies arises in various applications including array processing, radar, sonar, astronomy and medical imaging. Inspired by the results in the 1D case, several computational subspace methods have been proposed for 2D frequency estimation such as 2D MUSIC \cite{hua1993pencil}, 2D ESPRIT \cite{haardt19952d,li1992two}, matrix enhancement and matrix pencil (MEMP) \cite{hua1992estimating} and the multidimensional folding (MDF) techniques \cite{liu2002almost,liu2006eigenvector, liu2007multidimensional}. However, a fundamental question remains unresolved as to whether an analog of the Vandermonde decomposition result holds true in the 2D or more general in the MD case. Note that the data covariance matrix corresponding to MD frequency estimation is a multilevel Toeplitz (MLT) matrix (see the definition in the next section). Consequently, the question can be phrased as follows:
\begin{quote}Given a PSD, rank-deficient MLT matrix, does it always admit a Vandermonde-like decomposition parameterized by MD frequencies? In other words, can this matrix be always the covariance matrix of an MD sinusoidal signal?
\end{quote}

An answer to the above question has recently become important due to the super-resolution framework in \cite{candes2013towards} which studies the recovery of fine details in a sparse (1D or MD) frequency spectrum from coarse scale time-domain samples. With compressive measurements super-resolution actually generalizes the compressed sensing problem in \cite{candes2006robust} to the continuous (as opposed to discretized/gridded) frequency setting and is referred to as off-grid or continuous compressed sensing \cite{tang2012compressed,yang2014continuous}. The paper \cite{candes2013towards} proposed a convex optimization method based on the atomic norm (or total variation norm, see \cite{aleksanyan1944real,chandrasekaran2012convex}) and proved that in the noiseless case the frequencies can be recovered with infinite precision provided that they are sufficiently separated. Unlike other methods, the atomic norm method is stable in the presence of noise and can deal with missing data \cite{candes2013super,tang2012compressed,bhaskar2013atomic,yang2015gridless,tang2015near}. It was also extended to the multi-snapshot case in array processing \cite{yang2014continuous,yang2014exact}. Moreover, a reweighted atomic norm method with enhanced sparsity and resolution was proposed in \cite{yang2014enhancing}.

The atomic norm is a continuous counterpart of the $\ell_1$ norm. A finite-dimensional formulation of it is required for numerical computations. In the 1D case a semidefinite program (SDP) formulation was provided based on the Vandermonde decomposition of Toeplitz matrices \cite{tang2012compressed,yang2014exact}. However, in the MD case a similar result has not been available as an analog of the Vandermonde decomposition was unknown. Interestingly, an SDP formulation with unspecified parameters was derived in \cite{xu2014precise} based on duality and the bounded real lemma for multivariate trigonometric polynomials \cite{dumitrescu2007positive}, which indeed is related to an MLT matrix whose dimension though is left unspecified.\footnote{Strictly speaking, the SDP formulation of the atomic norm given in \cite{xu2014precise} is a relaxed version since it is based on the so-called sum-of-squares relaxation for nonnegative multivariate trigonometric polynomials \cite{dumitrescu2007positive}.} A relaxed version of this formulation has also been applied in \cite{bendory2015super,heckel2014super} to the 2D case.

In this paper, we generalize the Carath\'{e}odory-Fej\'{e}r's result from the 1D to the MD case and provide an affirmative answer to the question asked above in the case when the matrix rank is lower than the dimension of each Toeplitz block. The new matrix $\m{A}$ in the resulting decomposition (see \eqref{eq:vanderdecp1}), which is still called Vandermonde, is the Khatri-Rao product of several Vandermonde matrices. A constructive method is provided for finding the Vandermonde decomposition. When the matrix rank is higher a numerical approach is also proposed that is guaranteed to find a Vandermonde decomposition if some conditions are satisfied.

To demonstrate the usefulness of the Vandermonde decomposition presented in this paper, we study the MD super-resolution problem with compressive measurements. A precise formulation of the atomic $\ell_0$ norm is derived based on the decomposition. Practical algorithms for solving the atomic $\ell_0$ norm minimization problem are proposed based on convex relaxation as well as on nonconvex relaxation and reweighted minimization. Frequency retrieval is finally accomplished using the proposed Vandermonde decomposition algorithms. Numerical results are provided to demonstrate the advantage of the proposed solutions over the state-of-the-art.

\subsection{Connections to Prior Art}

Similarly to this paper that generalizes the Carath\'{e}odory-Fej\'{e}r's Vandermonde decomposition from the 1D to the MD case, other related generalizations have also been attempted in the literature. In \cite{sidiropoulos2001generalizing} the uniqueness part of the Carath\'{e}odory-Fej\'{e}r's result was generalized to the MD case. Different from our result, \cite{sidiropoulos2001generalizing} focused on the identifiability of parameters of MD complex exponentials given discrete samples of their superposition. An analogous decomposition was presented in \cite{georgiou2000signal} for state-covariance matrices (including the Toeplitz case) of stable linear filters driven by certain time-series. The author also studied its multivariable counterpart in \cite{georgiou2007caratheodory}. Though the state-covariance matrices in \cite{georgiou2007caratheodory} include block Toeplitz matrices, which further include the MLT ones, the decomposition derived in \cite{georgiou2007caratheodory} is less structured than and does not imply the decomposition given in this paper. A similar decomposition of block Toeplitz matrices as in \cite{georgiou2007caratheodory} was also introduced in \cite{gurvits2002largest} which, as we will see, is useful for deriving the result of this paper. A recent paper \cite{chi2015compressive} attempted to generalize the Vandermonde decomposition to the 2D case and provided a result similar to Theorem \ref{thm:vanderdec} in the present paper; however, its proof is incomplete and some derivations are flawed. In fact, its proof is almost identical to that in \cite{gurvits2002largest} for block Toeplitz matrices (see Remark \ref{rem:chi}).

Toeplitz matrices can be viewed as discrete counterparts of Toeplitz operators that together with Hankel ones form an important class in operator theory. In the 1D case, Kronecker discovered in the nineteenth century a Vandermonde-like decomposition of Hankel matrices that holds in general with the exception of degenerate cases \cite{kronecker1895leopold,rochberg1987toeplitz}. The Vandermonde decomposition by Carath\'{e}odory and Fej\'{e}r can be viewed as a more precise result of Kronecker's Theorem obtained by imposing the condition of PSDness that completely avoids the degenerate cases. A recent paper \cite{andersson2015general} studied the multivariable Hankel operators and showed that Kronecker's Theorem in the MD case differs from its 1D counterpart in several key aspects. In particular, an ML Hankel matrix often does not admit a Vandermonde-like decomposition. In contrast, we show in this paper that by imposing the PSDness all the MLT matrices admit a Vandermonde decomposition under an appropriate rank condition. In this context we note that it would be of interest to investigate the continuous counterpart of the result of this paper for multivariable Toeplitz operators.

The Vandermonde decomposition of MLT matrices provides the theoretical basis of several subspace methods for 2D and higher-dimensional frequency estimation proposed in the 1990s, which typically estimate the frequencies from an estimate of the MLT covariance matrix (see, e.g., \cite{hua1993pencil,haardt19952d,li1992two,hua1992estimating}). The super-resolution methods presented in this paper are inspired by \cite{yang2014enhancing} and also can be viewed as covariance-based methods similarly to the subspace methods (see also \cite{yang2014exact, yang2015gridless}). But the main difference is that in this paper the covariance estimates are obtained by optimization of sophisticated covariance-fitting criteria that fully exploit the MLT structure and utilize signal sparsity. Moreover, the used criteria work in the presence of missing data. Note that 2D frequency estimation has also been studied in \cite{chi2015compressive,chen2014robust,andersson2014new}. The paper \cite{chi2015compressive} analyzed the performance of the atomic norm method, which generalizes the result in \cite{tang2012compressed} from the 1D to the 2D case. Both \cite{chen2014robust} and \cite{andersson2014new} exploited the low-rankness of a certain 2-level Hankel matrix formed using the data samples. Since the methods in these papers actually can be applied to the case of general complex exponentials, their use for the frequency estimation of sinusoids appears to produce suboptimal results. Moreover, as mentioned above a Vandermonde-like decomposition may not exist for an ML Hankel matrix \cite{andersson2015general} and therefore the use of the decomposition for parameter retrieval can be problematic.

\subsection{Notations}
Notations used in this paper are as follows. $\bR$, $\bC$ and $\bZ$ denote the sets of real, complex and integer numbers respectively. $\bT$ denotes the unit circle $\left[0,1\right]$ by identifying the beginning and the ending points. Boldface letters are reserved for vectors and matrices. $\cdot^T$ and $\cdot^H$ denote the matrix transpose and the Hermitian transpose. $\onen{\cdot}$ and $\twon{\cdot}$ represent the $\ell_1$ and $\ell_2$ norms. For two vectors $\m{a}$ and $\m{b}$, $\m{a}\leq \m{b}$ is understood elementwise. For two square matrices $\m{A}$ and $\m{B}$, $\m{A}\geq \m{B}$ means that $\m{A}-\m{B}$ is positive semidefinite.


\subsection{Organization of This Paper}
The rest of the paper is organized as follows. Section \ref{sec:notation} introduces some preliminaries. Section \ref{sec:genVanDec} presents the main contribution---the Vandermonde decomposition of MLT matrices---as well as the methods for finding the decomposition. In Section \ref{sec:super-resolution} the obtained results are applied to studying the MD super-resolution problem using an atomic $\ell_0$ norm method. In Section \ref{sec:simulation} extensive numerical simulations are provided to validate the theoretical findings and demonstrate the performance of the proposed super-resolution approaches. Section \ref{sec:conclusion} concludes this paper.

\section{Preliminaries} \label{sec:notation}

\subsection{Toeplitz and MLT Matrices}
Given a complex sequence $\m{u}=\mbra{u_k}$, $k\in\bZ$, an $n\times n$ Toeplitz matrix $\m{T}_{n}$ is defined as
\equ{\m{T}_{n} \coloneqq \begin{bmatrix} u_0 & u_{1} & \dots & u_{n-1} \\ u_{-1} & u_0 & \dots & u_{n-2} \\ \vdots & \vdots & \ddots & \vdots \\ u_{1-n} & u_{2-n} & \dots & u_0 \end{bmatrix}. \label{eq:toep} }
For $d\geq2$, let $\m{n} = \sbra{n_1,\dots,n_d}$ and $\m{n}_{-1}= \sbra{n_2,\dots,n_d}$. Given a $d$-dimensional ($d$D) complex sequence $\m{u}=\mbra{u_{\m{k}}}$, $\m{k}\in\bZ^d$, an $\m{n}$, $d$-level Toeplitz ($d$LT) matrix $\m{T}_{\m{n}}$ is defined recursively as:
\equ{\m{T}_{\m{n}}\coloneqq \begin{bmatrix} \m{T}_{0\m{n}_{-1}} & \m{T}_{1\m{n}_{-1}} & \dots & \m{T}_{\sbra{n_1-1}\m{n}_{-1}} \\ \m{T}_{\sbra{-1}\m{n}_{-1}} & \m{T}_{0\m{n}_{-1}} & \dots & \m{T}_{\sbra{n_1-2}\m{n}_{-1}} \\ \vdots & \vdots & \ddots & \vdots \\ \m{T}_{\sbra{1-n_1}\m{n}_{-1}} & \m{T}_{\sbra{2-n_1}\m{n}_{-1}} & \dots & \m{T}_{0\m{n}_{-1}} \end{bmatrix}, \label{eq:MLT}}
where for $k_1=1-n_1,\dots,n_1-1$, $\m{T}_{k_1\m{n}_{-1}}$ denotes an $\m{n}_{-1}$, $\sbra{d-1}$LT matrix formed using $\mbra{u_{\m{k}}}$, $-\m{n}_{-1}\leq \m{k}_{-1}\leq \m{n}_1$. It can be seen from \eqref{eq:MLT} that $\m{T}_{\m{n}}$ is an $n_1\times n_1$ block Toeplitz matrix in which each block is an $\m{n}_{-1}$, $\sbra{d-1}$LT matrix and thus $\m{T}_{\m{n}}\in\bC^{N\times N}$, where $N=\prod_{l=1}^d n_l$. As an example, in the case of $\m{n}=(2,2)$ we have that
\equ{\m{T}_{\sbra{2,2}} = \begin{bmatrix} u_{00} & u_{01} & u_{10} & u_{11} \\ u_{0(-1)} & u_{00} & u_{1(-1)} & u_{10} \\ u_{(-1)0} & u_{(-1)1} & u_{00} & u_{01} \\ u_{(-1)(-1)} & u_{(-1)0} & u_{0(-1)} & u_{00} \end{bmatrix}. }
Note that a 2LT matrix is also called Toeplitz-block-Toeplitz or doubly Toeplitz in the literature. For notational simplicity, we will omit the index $\m{n}$ in $\m{T}_{\m{n}}$ when it is obvious from the context.

\subsection{Vandermonde Decomposition and Frequency Estimation}

The Vandermonde decomposition of Toeplitz matrices is the basis of subspace methods for 1D frequency estimation. To be specific, let $\m{a}_n\sbra{f}=n^{-\frac{1}{2}}\mbra{1,e^{i2\pi f},\dots, e^{i2\pi (n-1)f}}^T\in\bC^{n}$ denote a uniformly sampled complex sinusoid with frequency $f\in\bT$ and unit power, where $i=\sqrt{-1}$. It follows that for $\m{f}\in\bT^r$, $\m{A}_{n}\sbra{\m{f}} \coloneqq \mbra{\m{a}_n\sbra{f_{1}}, \dots, \m{a}_n\sbra{f_{r}}}\in\bC^{n\times r}$ is a Vandermonde matrix (up to a factor of $n^{-\frac{1}{2}}$). Let us consider the parametric model for frequency estimation
\equ{\m{y} = \m{A}_n\sbra{\m{f}}\m{c} = \sum_{j=1}^r c_j \m{a}_n\sbra{f_j}, \label{eq:paramodel}}
where $c_j=\abs{c_j}e^{i\phi_j}\in\bC$ are complex amplitudes, $\phi_j$ are initial phases, and $\m{y}\in\bC^n$ denotes the sampled data. Assume that the sinusoids have i.i.d. random initial phases. It follows that $\m{P}\coloneqq \bE \m{c}\m{c}^H = \diag\sbra{\abs{c_1}^2,\dots,\abs{c_r}^2}$. Then, the data covariance matrix
\equ{\m{R} = \bE \m{y}\m{y}^H = \m{A}_n\sbra{\m{f}}\m{P}\m{A}_n^H\sbra{\m{f}} = \sum_{j=1}^r \abs{c_j}^2 \m{a}_n\sbra{f_j}\m{a}_n^H\sbra{f_j} \label{eq:Vanderdecmp1}}
is a rank-$r$ PSD Toeplitz matrix (assuming that $r< n$ and $f_j$, $j=1,\dots,r$ are distinct). The sequence $\m{u}$ used to generate the Toeplitz matrix $\m{R}$ is given by
\equ{u_k = \sum_{j=1}^r \abs{c_j}^2 e^{-i2\pi kf_j}, \quad 1-n\leq k\leq n-1. }
The Vandermonde decomposition states that the converse is also true. That is, any PSD Toeplitz matrix $\m{R}$ of rank $r<n$ can always be uniquely decomposed as in \eqref{eq:Vanderdecmp1}. Consequently, the frequencies can be unambiguously retrieved from the data covariance [note that in the presence of white noise, the noise contribution to $\m{R}$ can also be identified (see \cite{pisarenko1973retrieval})]. In practice, $\m{R}$ can only be approximately estimated and subspace methods like MUSIC and ESPRIT have been proposed to carry out the frequency estimation task.

In the MD case, let $\m{f}\in\bT^{d\times r}$ denote a set of $r$, $d$D frequencies $\m{f}_{:j}\in\bT^d$, $j=1,\dots,r$, where $\m{f}_{:j}$ can be understood as the $j$th column of $\m{f}$. Let $\m{f}_{l}$ denote the $l$th row of $\m{f}$ or the set of frequencies for the $l$th dimension, and $f_{lj}$ be the $\sbra{l,j}$th entry. A uniformly sampled $d$D complex sinusoid with frequency $\m{f}_{:j}$ and unit power can be represented by
$\m{a}_{\m{n}}\sbra{\m{f}_{:j}} \coloneqq \m{a}_{n_1}\sbra{f_{1j}}\otimes \dots \otimes \m{a}_{n_d}\sbra{f_{dj}}\in\bC^{N}$, where $\otimes$ denotes the Kronecker product and the index $\m{n}$ implies that the sample size is $n_l$ along the $l$th dimension. It follows that
$\m{A}_{\m{n}}\sbra{\m{f}} \coloneqq \mbra{\m{a}_{\m{n}}\sbra{\m{f}_{:1}}, \dots, \m{a}_{\m{n}}\sbra{\m{f}_{:r}}} = \m{A}_{n_1}\sbra{\m{f}_{1}}\star \dots \star \m{A}_{n_d}\sbra{\m{f}_{d}}\in\bC^{N\times r}$, where $\star$ denotes the Khatri-Rao product (or column-wise Kronecker product). Due to the fact that $\m{A}_{\m{n}}\sbra{\m{f}}$ can be written as the Khatri-Rao product of $d$ Vandermonde matrices, we still call $\m{A}_{\m{n}}\sbra{\m{f}}$ a Vandermonde matrix.
In the problem of MD frequency estimation, the sampled data $\m{y}$ follows a similar parametric model:
\equ{\m{y} = \m{A}_{\m{n}}\sbra{\m{f}}\m{c} = \sum_{j=1}^r c_j \m{a}_{\m{n}}\sbra{\m{f}_{:j}}. \label{eq:paramodel2}}
Under the same assumption on the initial phases, the data covariance matrix
\equ{\begin{split}\m{R}= \bE \m{y}\m{y}^H
&= \m{A}_{\m{n}}\sbra{\m{f}} \m{P}\m{A}_{\m{n}}^H\sbra{\m{f}}\\
&= \sum_{j=1}^r \abs{c_j}^2 \m{a}_{\m{n}}\sbra{\m{f}_{:j}}\m{a}_{\m{n}}^H\sbra{\m{f}_{:j}}\\
&= \sum_{j=1}^r \abs{c_j}^2 \bigotimes_{l=1}^d \m{a}_{n_l}\sbra{f_{lj}} \m{a}_{n_l}^H\sbra{f_{lj}} \label{eq:Vanderdecmp2} \end{split}}
turns out to be a PSD $\m{n}$, $d$LT matrix of rank no greater than $r$ and is generated by the sequence
\equ{u_{\m{k}} = \sum_{j=1}^r \abs{c_j}^2 e^{-i2\pi \m{k}^T \m{f}_{:j}},\quad -\m{n}\leq \m{k}\leq \m{n}.}
The fundamental question as to whether also the converse holds true has remained unresolved, though numerical tools such as extensions of MUSIC and ESPRIT have been developed for 2D frequency estimation from a data covariance estimate.


\section{Vandermonde Decomposition of MLT Matrices} \label{sec:genVanDec}

\subsection{Generalizing the Vandermonde Decomposition} \label{sec:VanDec}

A main contribution of this paper is summarized in the following theorem which generalizes the Carath\'{e}odory-Fej\'{e}r's result from the 1D to the MD case (note that we will omit the index $\m{n}$ in $\m{T}_{\m{n}}$, $\m{a}_{\m{n}}$ and $\m{A}_{\m{n}}$ for simplicity).

\begin{thm} Assume that $\m{T}$ is a PSD $d$LT matrix with $d\geq1$ and $\rank\sbra{\m{T}}=r< \min_j n_j$. Then, $\m{T}$ can be decomposed as
\equ{\m{T} = \m{A}\sbra{\m{f}}\m{P}\m{A}\sbra{\m{f}} = \sum_{j=1}^r p_j \m{a}\sbra{\m{f}_{:j}} \m{a}^H\sbra{\m{f}_{:j}}, \label{eq:vanderdec}}
where $\m{P}=\diag\sbra{p_1,\dots,p_r}$ with $p_j>0$, $\m{f}_{:j}$, $j=1,\dots,r$ are distinct points in $\bT^d$, and the $(d+1)$-tuples $\sbra{\m{f}_{:j},p_j}$, $j=1,\dots,r$ are unique. \label{thm:vanderdec}
\end{thm}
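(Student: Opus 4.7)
The plan is to proceed by induction on the dimension $d$. The base case $d=1$ is the classical Carath\'{e}odory-Fej\'{e}r theorem. For the inductive step I would assume the result for $(d-1)$LT matrices and view the $d$LT matrix $\m{T}$ as an $n_1\times n_1$ block Toeplitz matrix whose blocks $\m{T}_{k_1\m{n}_{-1}}$ are themselves $(d-1)$LT. The strategy is to first peel off the frequency content along the first dimension via a matrix-valued Carath\'{e}odory-Fej\'{e}r-type result, then show the resulting matrix weights inherit the $(d-1)$LT structure, and finally invoke the inductive hypothesis.

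The first step is to invoke a Gurvits-type extension of Carath\'{e}odory-Fej\'{e}r for block Toeplitz matrices. Under the assumption $r<n_1$, this should deliver distinct frequencies $f_{1,k}\in\bT$ and PSD weights $\m{Q}_k$, $k=1,\dots,K$, with $\sum_k \rank\sbra{\m{Q}_k}=r$, such that
\equ{\m{T}=\sum_{k=1}^K \m{a}_{n_1}\sbra{f_{1,k}}\m{a}_{n_1}^H\sbra{f_{1,k}}\otimes \m{Q}_k,}
or equivalently $\m{T}_{m\m{n}_{-1}}=\sum_k e^{i2\pi m f_{1,k}}\m{Q}_k$ for $m=-(n_1-1),\dots,n_1-1$. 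Because $K\leq r<n_1$, the $(2n_1-1)\times K$ Vandermonde system in $m$ is left-invertible, so the $\m{Q}_k$ are uniquely determined by $\m{T}$.

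The second step transfers the Toeplitz structure to each $\m{Q}_k$. For any two index pairs $\sbra{\m{i}_{-1},\m{j}_{-1}}$ and $\sbra{\m{i}'_{-1},\m{j}'_{-1}}$ with $\m{i}_{-1}-\m{j}_{-1}=\m{i}'_{-1}-\m{j}'_{-1}$, the Toeplitz property of $\m{T}$ in the trailing $d-1$ dimensions forces the corresponding entries of $\m{T}_{m\m{n}_{-1}}$ to coincide for every $m$. Subtracting the two scalar identities and using the left-invertibility of the Vandermonde system above, one obtains $\sbra{\m{Q}_k}_{\m{i}_{-1},\m{j}_{-1}}=\sbra{\m{Q}_k}_{\m{i}'_{-1},\m{j}'_{-1}}$ for every $k$, so each $\m{Q}_k$ is a PSD $(d-1)$LT matrix of rank at most $r<\min_{j\geq 2}n_j$. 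The inductive hypothesis then gives $\m{Q}_k=\sum_l p_{k,l}\m{a}_{\m{n}_{-1}}\sbra{\m{g}_{k,l}}\m{a}_{\m{n}_{-1}}^H\sbra{\m{g}_{k,l}}$, and using the identity $\m{a}_{n_1}\sbra{f}\otimes \m{a}_{\m{n}_{-1}}\sbra{\m{g}}=\m{a}_{\m{n}}\sbra{f,\m{g}}$ assembles the two levels into the desired Vandermonde decomposition of $\m{T}$; the total number of atoms equals $\sum_k \rank\sbra{\m{Q}_k}=r$, and all $d$D frequencies are pairwise distinct since the $f_{1,k}$ are distinct across $k$ and the $\m{g}_{k,l}$ are distinct across $l$ within each $k$.

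For uniqueness, the condition $r<\min_j n_j$ guarantees that any collection of $r$ Khatri-Rao Vandermonde atoms with distinct frequencies is linearly independent and hence forms a basis for the range of $\m{T}$; the frequencies can then be uniquely identified, e.g., through null-space/MUSIC-type polynomial relations, after which the weights $p_j$ are determined by an overdetermined positive linear system. The main technical obstacle is the second step---transmitting the Toeplitz structure from $\m{T}$ to each matrix weight $\m{Q}_k$. Prior attempts (notably the incomplete proof in \cite{chi2015compressive} that mimicked Gurvits' block-Toeplitz argument in \cite{gurvits2002largest}) foundered at precisely this point, since the block decomposition by itself pins down no internal structure of the $\m{Q}_k$; the rank bound $r<n_1$ is exactly what makes the first-level Vandermonde inversion available and thereby enables the structure transfer.
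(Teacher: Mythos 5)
Your proof is correct, and the existence part takes a genuinely different route from the paper's. Both arguments open with the Gurvits--Barnum decomposition of the block Toeplitz matrix, but from there they diverge. The paper applies the inductive hypothesis to the corner block $\m{T}_0=\m{V}\m{V}^H$, writes $\m{V}=\m{A}_{\m{n}_{-1}}\sbra{\widetilde{\m{f}}_{-1}}\widetilde{\m{P}}^{1/2}\m{O}$, and then needs a nontrivial structural lemma (Lemma \ref{lem:Tdiag}: any representation $\m{A}\sbra{\m{f}}\m{C}\m{A}^H\sbra{\m{f}}$ of an MLT matrix with $r<\min_j n_j$ distinct frequencies forces $\m{C}$ to be diagonal), whose proof routes through Kronecker's theorem and the Ellis--Lay factorization of finite-rank Hankel matrices, followed by a case split on repeated first-level frequencies. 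You instead group the Gurvits atoms by distinct first-level frequency into PSD weights $\m{Q}_k$ and observe that, because each block diagonal $\m{T}_l$ is itself a $\sbra{d-1}$LT matrix and the $\sbra{2n_1-1}\times K$ Vandermonde system linking $\lbra{\m{T}_l}$ to $\lbra{\m{Q}_k}$ is left-invertible when $K\leq r<n_1$, each $\m{Q}_k$ is a linear combination of $\sbra{d-1}$LT matrices and hence $\sbra{d-1}$LT; induction then applies to each $\m{Q}_k$ separately, and the rank bookkeeping $\sum_k\rank\sbra{\m{Q}_k}=r$ follows from the linear independence of the $\m{a}_{n_1}\sbra{f_{1,k}}$. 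This bypass is more elementary and avoids both the Hankel machinery and the case analysis; what the paper's route buys is Lemma \ref{lem:Tdiag} as a standalone fact, which it also uses to diagnose the gap in \cite{chi2015compressive}. Two points in your write-up deserve tightening: the linear independence of up to $\min_j n_j$ distinct Khatri--Rao atoms (which underlies your uniqueness claim) is not automatic and should be justified by embedding the frequencies into a full separable grid and invoking the Kronecker-product rank identity, exactly as in Lemma \ref{lem:linindep}; and the ``MUSIC-type'' identification of the frequencies is more cleanly phrased as the paper does it, via $\m{A}\sbra{\m{f}'}\m{P}'^{1/2}=\m{A}\sbra{\m{f}}\m{P}^{1/2}\m{U}'$ for a unitary $\m{U}'$ and the resulting linear dependence of $r+1\leq\min_j n_j$ atoms.
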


\begin{rem} The sufficient condition that $\rank\sbra{\m{T}}=r< \min_j n_j$ of Theorem \ref{thm:vanderdec} is tight. Indeed, for $r\geq \min_j n_j$ we can always find $\m{T}$ matrices that admit infinitely many Vandermonde decompositions of order $r$. Consider the case of $r= \min_j n_j$ as an example. For $d=1$, first note that $\m{T}$ is invertible. For any $f_1\in\bT$, let $p_1=\mbra{\m{a}_1^H\sbra{f_1}\m{T}^{-1}\m{a}_1\sbra{f_1}}^{-1}$. Then it is easy to see that $\m{T} - p_1\m{a}_1\sbra{f_1}\m{a}_1^H\sbra{f_1}$ is a PSD Toeplitz matrix of rank $r-1$ and thus it has a unique Vandermonde decomposition of order $r-1$. This means that we have found a Vandermonde decomposition of order $r$ for $\m{T}$. Since $f_1$ above can be chosen arbitrarily, there exist infinitely many such decompositions. For $d\geq2$, without loss of generality, assume that $n_1= r$ and $n_2,\dots,n_d\geq r$. Then for any PSD Toeplitz matrices $\m{T}_{n_l}\in\bC^{n_l\times n_l}$ with $\rank\sbra{\m{T}_{n_1}}=r$ and $\rank\sbra{\m{T}_{n_l}}=1$, $l=2,\dots,d$, the $d$LT matrix $\m{T}=\bigotimes_{l=1}^d \m{T}_{n_l}$ admits infinitely many Vandermonde decompositions of order $r$ because $\m{T}_{n_1}$ does so (as shown previously). \label{rem:tight}
\end{rem}

\begin{rem} For $d=2$ a similar result to Theorem \ref{thm:vanderdec} was recently presented in \cite[Proposition 2]{chi2015compressive}; however, its proof is incomplete and certain derivations are flawed. In particular, the main part of the proof in \cite{chi2015compressive} is nothing but the first step of ours that follows from \cite{gurvits2002largest} and holds for general block Toeplitz matrices (see below). Moreover, Eq. (44) in \cite{chi2015compressive}, which provides a Vandermonde decomposition of $\m{T}$ and concludes Proposition 2 in \cite{chi2015compressive}, does not hold true. To see that, consider the case where $\lbra{f_{2j}}_{j=1}^r$ have identical entries. Then, the $\lbra{f_{2j}}_{j=1}^r$ constructed in \cite[Eq. (40)]{chi2015compressive} are not unique (note that a typo exists in \cite[Eq. (40)]{chi2015compressive} where $f_{1i}$ should be $f_{2i}$). It follows that the Vandermonde decomposition constructed in \cite[Eq. (44)]{chi2015compressive} is not unique either, which cannot be true according to Theorem \ref{thm:vanderdec} of this paper. \label{rem:chi}
\end{rem}

To prove Theorem \ref{thm:vanderdec}, we first consider the uniqueness part which essentially follows from the following lemma.

\begin{lem} Assume that $\lbra{f_{1j}}_{j=1}^{n_1}$, \dots, $\lbra{f_{dj}}_{j=1}^{n_d}$ are $d\geq 1$ sets of distinct points in $\bT$. Then,
\equ{\lbra{\m{a}\sbra{f_{1j_1},\dots,f_{dj_d}}:\; j_l = 1,\dots,n_l,\; l=1,\dots,d} \label{eq:colvec}}
are linearly independent. \label{lem:linindep}
\end{lem}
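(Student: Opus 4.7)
The plan is to recognize that the collection in \eqref{eq:colvec}, when arranged as the columns of a square matrix under an appropriate (say lexicographic) ordering of the index tuples $(j_1,\ldots,j_d)$, is exactly a Kronecker product of univariate Vandermonde matrices, and then to invoke invertibility of each factor together with standard properties of the Kronecker product.

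More concretely, I would first recall that by definition
\equ{\m{a}\sbra{f_{1j_1},\dots,f_{dj_d}} = \m{a}_{n_1}\sbra{f_{1j_1}}\otimes \cdots \otimes \m{a}_{n_d}\sbra{f_{dj_d}},}
so that if I set $\m{A}_{n_l}\sbra{\m{f}_l}=\mbra{\m{a}_{n_l}\sbra{f_{l1}},\dots,\m{a}_{n_l}\sbra{f_{ln_l}}}$ and collect the vectors in \eqref{eq:colvec} into an $N\times N$ matrix $\m{M}$ using the lexicographic ordering of the multi-index $(j_1,\dots,j_d)$, then by the mixed-product property of the Kronecker product I obtain the clean identity $\m{M}=\m{A}_{n_1}\sbra{\m{f}_1}\otimes\cdots\otimes \m{A}_{n_d}\sbra{\m{f}_d}$.

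Next, each factor $\m{A}_{n_l}\sbra{\m{f}_l}$ is, up to the constant prefactor $n_l^{-1/2}$ used in the definition of $\m{a}_{n_l}$, an $n_l\times n_l$ Vandermonde matrix in the distinct nodes $\{e^{i2\pi f_{lj}}\}_{j=1}^{n_l}$; since these nodes are distinct by assumption, each such factor is nonsingular. Using $\det(\m{A}\otimes \m{B})=\det(\m{A})^{\dim \m{B}}\det(\m{B})^{\dim \m{A}}$ inductively, $\det \m{M}\neq 0$, so $\m{M}$ is nonsingular and its columns are linearly independent, which is the desired conclusion.

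There is no real obstacle here; the only thing to be careful about is choosing the ordering of the tuples so that the matrix assembled from the columns is literally the Kronecker product (and not merely permutation-equivalent to one), but since linear independence is invariant under column permutations this is a cosmetic point rather than a genuine difficulty.
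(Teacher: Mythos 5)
Your proposal is correct and follows essentially the same route as the paper: assemble the vectors into the Kronecker product $\m{A}_{n_1}\sbra{\m{f}_1}\otimes\cdots\otimes\m{A}_{n_d}\sbra{\m{f}_d}$, note each factor is an invertible Vandermonde matrix since the nodes are distinct, and conclude the product is nonsingular. The only cosmetic difference is that you use the determinant identity for Kronecker products where the paper uses rank multiplicativity; both yield the same conclusion.
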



\begin{proof} For $d=1$ the result is well known and its proof is therefore omitted. It follows that $\m{A}_{n_l}\sbra{\m{f}_l}$, $l=1,\dots,d$ are all invertible. For $d\geq 2$, note that the vectors in \eqref{eq:colvec} form the $N\times N$ matrix $\bigotimes_{l=1}^d \m{A}_{n_l}\sbra{\m{f}_l}$. We complete the proof by the fact that
\equ{\rank\sbra{\bigotimes_{l=1}^d \m{A}_{n_l}\sbra{\m{f}_l}} = \prod_{l=1}^d \rank\sbra{\m{A}_{n_l}\sbra{\m{f}_l}} = \prod_{l=1}^d n_l= N.}
%
\end{proof}

The existence of the Vandermonde decomposition is proven via a constructive method. The proof is motivated by the decomposition of block Toeplitz matrices given in \cite{gurvits2002largest}, the proof of which is also provided for completeness.

\begin{lem}[\cite{gurvits2002largest}] Let $\m{T}^B\in\bC^{mn\times mn}$ be an $n\times n$ PSD block Toeplitz matrix with $\rank\sbra{\m{T}^B}=r$. Then there exist $\m{V}=\mbra{\dots,\m{v}_j,\dots}\in\bC^{m\times r}$ and $f_j\in\bT$, $j=1,\dots,r$ such that $\m{T}^B$ can be decomposed as
\equ{\begin{split}\m{T}^B
&= \sum_{j=1}^r \m{a}_{n}\sbra{f_{j}}\m{a}_{n}^H\sbra{f_{j}} \otimes \m{v}_j\m{v}_j^H \\
&= \sum_{j=1}^r \sbra{\m{a}_{n}\sbra{f_{j}}\otimes \m{v}_j} \sbra{\m{a}_{n}\sbra{f_{j}}\otimes \m{v}_j}^H. \end{split} \label{eq:Tninvj0}} \label{lem:Tl}
\end{lem}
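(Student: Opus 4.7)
My plan is to factor $\m{T}^B$ in PSD square-root form, extract a unitary block-shift operator from the Toeplitz identity, and then diagonalize it to read off the frequencies. Concretely, I would start by writing $\m{T}^B=\m{Y}\m{Y}^H$ with $\m{Y}\in\bC^{mn\times r}$ of full column rank, partitioned into blocks $\m{Y}_0,\dots,\m{Y}_{n-1}\in\bC^{m\times r}$. The block Toeplitz structure of $\m{Y}\m{Y}^H$ is exactly the family of identities $\m{Y}_k\m{Y}_\ell^H=\m{Y}_{k+1}\m{Y}_{\ell+1}^H$ for $0\le k,\ell\le n-2$, or equivalently $\m{Y}'{\m{Y}'}^H=\m{Y}''{\m{Y}''}^H$, where $\m{Y}'$ and $\m{Y}''$ are the top and bottom $m(n-1)\times r$ submatrices of $\m{Y}$.

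Granted that $\m{Y}'$ has full column rank $r$, the identity $\m{Y}'{\m{Y}'}^H=\m{Y}''{\m{Y}''}^H$ forces the existence of a unique $r\times r$ matrix $\m{M}$ with $\m{Y}''=\m{Y}'\m{M}$ (by comparing thin SVDs), and further $\m{M}\m{M}^H=\m{I}_r$ by cancelling $\m{Y}'$ on the left in $\m{Y}'\m{M}\m{M}^H{\m{Y}'}^H=\m{Y}'{\m{Y}'}^H$, so $\m{M}$ is unitary. Blockwise this says $\m{Y}_{k+1}=\m{Y}_k\m{M}$, which telescopes to $\m{Y}_k=\m{Y}_0\m{M}^k$. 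Since $\m{M}$ is normal, I would next write $\m{M}=\m{Q}\m{\Lambda}\m{Q}^H$ with $\m{Q}$ unitary and $\m{\Lambda}=\diag\sbra{e^{i2\pi f_1},\dots,e^{i2\pi f_r}}$. Setting $\tilde{\m{v}}_j$ to be the $j$th column of $\m{Y}_0\m{Q}$, the $k$th block of the $j$th column of $\m{Y}\m{Q}$ equals $e^{i2\pi k f_j}\tilde{\m{v}}_j$, so that column is $\sqrt{n}\,\m{a}_n\sbra{f_j}\otimes\tilde{\m{v}}_j$. Unitarity of $\m{Q}$ yields $\m{T}^B=\sbra{\m{Y}\m{Q}}\sbra{\m{Y}\m{Q}}^H$, which, after applying the Kronecker mixed-product identity and absorbing the $\sqrt{n}$ factor into $\m{v}_j=\sqrt{n}\,\tilde{\m{v}}_j$, gives precisely $\sum_{j=1}^r\m{a}_n\sbra{f_j}\m{a}_n^H\sbra{f_j}\otimes\m{v}_j\m{v}_j^H$ as claimed.

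The main obstacle is justifying the full-column-rank assumption on $\m{Y}'$, which need not hold automatically and is exactly what makes $\m{M}$ well-defined and unitary; it would fail if the top-left $m(n-1)\times m(n-1)$ block principal submatrix of $\m{T}^B$ had rank strictly less than $r$. I would resolve this via a Carath\'{e}odory-type extension: any PSD block Toeplitz matrix of rank $r$ admits an extension to a PSD block Toeplitz matrix of arbitrary larger block dimension with the same rank $r$, obtained by appending new generating blocks so that the added rows and columns lie in the range of the current matrix (the rank-$r$ constraint uniquely specifies the completion). After extending to sufficiently many blocks, the corresponding $\m{Y}'$ becomes full column rank, the above construction produces a Vandermonde decomposition of the extended matrix, and retaining only the first $n$ entries within each $\m{a}_n\sbra{f_j}\otimes\m{v}_j$ term reproduces the desired decomposition of the original $\m{T}^B$.
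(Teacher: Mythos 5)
Your core construction is the same as the paper's: factor $\m{T}^B=\m{Y}\m{Y}^H$, use the block Toeplitz structure to get $\m{Y}_U\m{Y}_U^H=\m{Y}_L\m{Y}_L^H$, extract a unitary shift, diagonalize it to obtain the $f_j$, and set $\m{v}_j$ from the columns of $\m{Y}_0$ times the eigenvector matrix. That part is fine. The problem is the last paragraph. The ``main obstacle'' you identify is not actually an obstacle: the identity $\m{A}\m{A}^H=\m{B}\m{B}^H$ implies $\m{B}=\m{A}\m{U}$ for \emph{some} unitary $\m{U}$ with no rank assumption whatsoever (this is \cite[Theorem 7.3.11]{horn2012matrix}, which is exactly what the paper invokes); full column rank of $\m{Y}_U$ is needed only for the \emph{uniqueness} of $\m{U}$, which the lemma does not require. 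You created the difficulty yourself by insisting on defining $\m{M}$ via ${\m{Y}'}^{\dag}$-type cancellation rather than taking any unitary realizing the equivalence.

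The proposed fix is then a genuine gap. The claim that every PSD block Toeplitz matrix of rank $r$ admits a rank-preserving PSD block Toeplitz extension to arbitrarily many blocks is asserted without proof, and it is not a routine completion argument: extending by one block adds a single free generating block $\m{T}_n$, yet the rank-preservation condition constrains the entire new block column $\mbra{\m{T}_n^T,\m{T}_{n-1}^T,\dots,\m{T}_1^T}^T$ to lie in the range of $\m{T}^B$ together with a Schur-complement equality pinning the corner to $\m{T}_0$. The natural way to produce such an extension is to set $\m{Y}_n=\m{Y}_0\m{U}^n$ for a unitary $\m{U}$ satisfying $\m{Y}_L=\m{Y}_U\m{U}$ --- but that presupposes precisely the general (rank-free) unitary-equivalence fact you declined to use, and the other natural route (extend the Vandermonde terms $\m{a}_n\sbra{f_j}$ to $\m{a}_{n+1}\sbra{f_j}$) presupposes the decomposition you are trying to prove. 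Also, the parenthetical claim that the rank-$r$ constraint ``uniquely specifies the completion'' is false in the degenerate case, for the same reason $\m{U}$ is non-unique there. Replace the final paragraph by a direct appeal to the general unitary-equivalence theorem and the proof closes.
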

\begin{proof} Since $\m{T}^B$ is PSD with $\rank\sbra{\m{T}^B}=r$, there exists $\m{Y}\in\bC^{mn\times r}$ such that $\m{T}^B=\m{Y}\m{Y}^H$. Write $\m{Y}$ as $\m{Y}=\mbra{\m{Y}_0^H,\dots,\m{Y}_{n-1}^H}^H$ with $\m{Y}_j\in\bC^{m\times r}$, $j=0,\dots,n-1$. Define the upper submatrix $\m{Y}_U = \mbra{\m{Y}_0^H,\dots,\m{Y}_{n-2}^H}^H$ and the lower submatrix $\m{Y}_L = \mbra{\m{Y}_1^H,\dots,\m{Y}_{n-1}^H}^H$. By the block Toeplitz structure of $\m{T}^B$ it holds that
\equ{\m{Y}_U\m{Y}_U^H=\m{Y}_L\m{Y}_L^H.}
Thus there exists a unitary matrix $\m{U}\in\bC^{r\times r}$ such that $\m{Y}_L = \m{Y}_U\m{U}$ (see, e.g., \cite[Theorem 7.3.11]{horn2012matrix}). It follows that
\equ{\m{Y}=\mbra{\m{Y}_0^H,\sbra{\m{Y}_0\m{U}}^H\dots,\sbra{\m{Y}_0\m{U}^{n-1}}^H}^H.}
Let $\m{T}_l$, $l=1-n,\dots,n-1$ denote the matrix on the $l$th block diagonal of $\m{T}^B$. So we have that
\equ{\m{T}_l = \m{Y}_0\m{U}^{-l}\m{Y}_0^H, \quad l=1-n,\dots,n-1. \label{eq:TlYUY}}
Next, write the eigen-decomposition of the unitary matrix $\m{U}$, which is guaranteed to exist, as
\equ{\m{U}=\widetilde{\m{U}}\m{Z}\widetilde{\m{U}}^H, \label{eq:eigdecU}}
where $\m{Z}=\diag\sbra{\dots,z_j,\dots}$ and $\widetilde{\m{U}}$ is another unitary matrix. The eigenvalues $z_j$, $j=1,\dots,r$ have magnitude of 1 and thus $z_j=e^{i2\pi f_j}$, $f_j\in\bT$. Inserting \eqref{eq:eigdecU} into \eqref{eq:TlYUY} and letting $\m{V}=\m{Y}_0\widetilde{\m{U}}$, we have that
\equ{\m{T}_l =\m{V}\m{Z}^{-l}\m{V}^H= \sum_{j=1}^r e^{-i2\pi l f_{j}} \m{v}_j\m{v}_j^H, \label{eq:Tl2}}
where $\m{v}_j$ denotes the $j$th column of $\m{V}$. Finally, \eqref{eq:Tninvj0} follows from \eqref{eq:Tl2}.
\end{proof}

To derive the Vandermonde decomposition based on Lemma \ref{lem:Tl}, a key result that we will use is the following.

\begin{lem} If a $d$LT matrix $\m{T}$, $d\geq1$, can be written as
\equ{\m{T} = \m{A}\sbra{\m{f}} \m{C} \m{A}^H\sbra{\m{f}}, \label{eq:TBCBH}}
where $\m{C}\in\bC^{r\times r}$, $r< \min_j n_j$ and $\m{f}_{:j}$, $j=1,\dots,r$ are distinct points in $\bT^d$, then $\m{C}$ must be a diagonal matrix. \label{lem:Tdiag}
\end{lem}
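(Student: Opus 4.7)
The plan is to use the shift-invariance of $\m{T}$ to derive a commutation relation and then to use the distinctness of the $\m{f}_{:j}$'s to conclude that $\m{C}$ must be diagonal. For each coordinate $l\in\lbra{1,\dots,d}$ I would introduce row-selection matrices $\m{E}_l^\downarrow$ and $\m{E}_l^\uparrow$ that pick out, respectively, the indices $\m{k}$ with $k_l\leq n_l-2$ and with $k_l\geq 1$. The $d$LT structure then yields $\m{E}_l^\downarrow \m{T}\sbra{\m{E}_l^\downarrow}^H = \m{E}_l^\uparrow \m{T} \sbra{\m{E}_l^\uparrow}^H$, while a direct calculation from the Khatri--Rao form of $\m{A}\sbra{\m{f}}$ gives $\m{E}_l^\uparrow \m{A}\sbra{\m{f}} = \m{E}_l^\downarrow \m{A}\sbra{\m{f}} \m{D}_l$, with $\m{D}_l\coloneqq \diag\sbra{e^{i2\pi f_{l1}},\dots,e^{i2\pi f_{lr}}}$ a unitary diagonal matrix. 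Substituting \eqref{eq:TBCBH} into the Toeplitz identity and using $\m{D}_l^H=\m{D}_l^{-1}$ would produce
\equ{\m{E}_l^\downarrow \m{A}\sbra{\m{f}} \sbra{\m{C} - \m{D}_l \m{C} \m{D}_l^{-1}} \sbra{\m{E}_l^\downarrow \m{A}\sbra{\m{f}}}^H = \m{0}.}

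The next step is to cancel the Vandermonde factor by showing that $\m{E}_l^\downarrow \m{A}\sbra{\m{f}}$ has full column rank $r$. Let $r_{l'}\leq r$ denote the number of distinct entries in $\lbra{f_{l'j}}_{j=1}^r$ and let $\tilde{\m{f}}_{l'}$ enumerate them, and set $n_l^\star = n_l - 1$ and $n_{l'}^\star = n_{l'}$ for $l'\neq l$. Each column of $\m{E}_l^\downarrow \m{A}\sbra{\m{f}}$ is proportional to one column of $\bigotimes_{l'=1}^d \m{A}_{n_{l'}^\star}\sbra{\tilde{\m{f}}_{l'}}$, namely the $\sbra{\sigma_1\sbra{j},\dots,\sigma_d\sbra{j}}$-th, where $f_{l'j}=\tilde f_{l',\sigma_{l'}\sbra{j}}$. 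Under the hypothesis $r<\min_j n_j$ one has $r_{l'}\leq r\leq n_{l'}^\star$ for every $l'$, so each rectangular Vandermonde factor has full column rank $r_{l'}$ and the Kronecker product has rank $\prod_{l'}r_{l'}$. Since the $\m{f}_{:j}$'s are distinct in $\bT^d$, the tuples $\sbra{\sigma_1\sbra{j},\dots,\sigma_d\sbra{j}}$ are distinct as well, so the $r$ selected columns form a subset of a linearly independent family and are themselves linearly independent. Cancelling on both sides of the displayed equation would then give $\m{D}_l \m{C} = \m{C} \m{D}_l$ for every $l=1,\dots,d$.

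The conclusion would then be immediate: for any $j\neq j'$, distinctness of $\m{f}_{:j}$ and $\m{f}_{:j'}$ in $\bT^d$ yields some $l$ with $f_{lj}\neq f_{lj'}$, hence $\sbra{\m{D}_l}_{jj}\neq \sbra{\m{D}_l}_{j'j'}$; reading off the $\sbra{j,j'}$ entry of $\m{D}_l\m{C}=\m{C}\m{D}_l$ gives $C_{jj'}\sbra{e^{i2\pi f_{lj}}-e^{i2\pi f_{lj'}}}=0$, so $C_{jj'}=0$. I expect the main obstacle to lie in the rank argument for $\m{E}_l^\downarrow \m{A}\sbra{\m{f}}$: Lemma \ref{lem:linindep} only covers the square tensor-product case with distinct frequencies along each dimension, and the reduction above to a Kronecker product of rectangular Vandermonde factors obtained by grouping repeated values in each $\m{f}_{l}$ is precisely where the strict inequality $r<\min_j n_j$ (rather than $r\leq \min_j n_j$) is genuinely required.
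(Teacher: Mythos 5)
Your proposal is correct, and it takes a genuinely different route from the paper. The paper proves the $d=1$ case by passing to a Hankel matrix and invoking Kronecker's theorem via the Ellis--Lay factorization (with generalized Vandermonde matrices), and then handles $d\geq 2$ by induction: it views $\m{T}$ as block Toeplitz, groups the repeated sub-frequencies $\m{f}_{-1,j}$ through a selection matrix $\m{\Gamma}$, deduces from the induction hypothesis that $\m{C}$ is block diagonal with respect to the partition induced by $\lbra{\m{f}_{-1,j}}$, repeats the argument with the second dimension permuted to the front, and concludes by contradiction from the two block structures. Your argument instead treats all $d$ dimensions symmetrically and without induction: each single-step shift identity $\m{E}_l^\downarrow \m{T}\sbra{\m{E}_l^\downarrow}^H = \m{E}_l^\uparrow \m{T}\sbra{\m{E}_l^\uparrow}^H$ yields the commutation relation $\m{C}\m{D}_l=\m{D}_l\m{C}$, and diagonality follows because distinctness of the $\m{f}_{:j}$ in $\bT^d$ forces the commuting diagonal family $\lbra{\m{D}_l}_{l=1}^d$ to separate the indices. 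The one ingredient you must supply beyond the paper's Lemma \ref{lem:linindep} --- full column rank of the truncated matrix $\m{E}_l^\downarrow\m{A}\sbra{\m{f}}$ when the per-dimension frequencies may repeat but the $d$-tuples are distinct --- is handled correctly by your grouping into rectangular Vandermonde factors of sizes $n_{l'}^\star\times r_{l'}$ with $r_{l'}\leq r\leq n_{l'}^\star$, and you correctly identify that this is exactly where $r<\min_j n_j$ is used. The net effect is a shorter, more elementary, and self-contained proof that avoids Hankel/Kronecker machinery entirely (and, as a byproduct, gives a simpler proof even of the $d=1$ case); what the paper's heavier route buys is mainly the explicit connection to Kronecker's theorem and finite-rank Hankel factorizations emphasized in its discussion of related work.
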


The proof of Lemma \ref{lem:Tdiag} is somewhat complicated and thus it is deferred to Appendix \ref{append:Tdiag}. Note that we first prove the result in the case of $d=1$ using the Kronecker's Theorem for Hankel matrices \cite{rochberg1987toeplitz, ellis1992factorization} and the connection between Toeplitz and Hankel matrices. We then complete the proof for $d\geq2$ using induction.

{\it Proof of Theorem \ref{thm:vanderdec}}: We first show that the Vandermonde decomposition in \eqref{eq:vanderdec}, if it exists, is unique.   To do so, suppose that $\m{T}$ admits another decomposition
\equ{\m{T} = \m{A}\sbra{\m{f}'} \m{P}' \m{A}^H\sbra{\m{f}'}, \label{eq:vanderdec2}}
where, as in \eqref{eq:vanderdec}, $\m{P}'=\diag\sbra{p'_1,\dots,p'_r}$ with $p'_k>0$, and $\m{f}'_{:k}$, $k=1,\dots,r$ are distinct points in $\bT^d$. It follows that $\m{A}\sbra{\m{f}'} \m{P}' \m{A}^H\sbra{\m{f}'}= \m{A}\sbra{\m{f}} \m{P} \m{A}^H\sbra{\m{f}}$ and thus $\m{A}\sbra{\m{f}'} \m{P}'^{\frac{1}{2}} = \m{A}\sbra{\m{f}} \m{P}^{\frac{1}{2}}\m{U}'$, where $\m{U}'$ is an $r\times r$ unitary matrix \cite[Theorem 7.3.11]{horn2012matrix}. So we have that
\equ{\m{A}\sbra{\m{f}'} = \m{A}\sbra{\m{f}} \m{P}^{\frac{1}{2}}\m{U}'\m{P}'^{-\frac{1}{2}}. }
This means that each $\m{a}\sbra{\m{f}'_{:k}}$ is a linear combination of $\lbra{\m{a}\sbra{\m{f}_{:j}}}_{j=1}^r$. In other words, for $k=1,\dots,r$, the $r+1\leq \min_j n_j$ vectors $\lbra{\m{a}\sbra{\m{f}'_{:k}}, \m{a}\sbra{\m{f}_{:1}}, \dots, \m{a}\sbra{\m{f}_{:r}}}$
are linearly dependent. By Lemma \ref{lem:linindep} this can be true only if $\m{f}'_{:k}\in \lbra{\m{f}_{:j}}_{j=1}^r$, implying that $\lbra{\m{f}'_{:k}}_{k=1}^r\subset \lbra{\m{f}_{:j}}_{j=1}^r$. By a similar argument, we can also show that $\lbra{\m{f}_{:j}}_{j=1}^r\subset \lbra{\m{f}'_{:k}}_{k=1}^r$. Consequently, $\lbra{\m{f}'_{:j}}_{j=1}^r$ and $\lbra{\m{f}_{:j}}_{j=1}^r$ are identical. Then it follows that the coefficients $\lbra{p_j}$ and $\lbra{p'_j}$ are identical as well.

We use induction to prove the existence part. First of all, for $d=1$ the result turns out to be the standard Vandermonde decomposition of Toeplitz matrices and thus it holds true. Suppose that a decomposition as in \eqref{eq:vanderdec} exists for $d=d_0-1$, $d_0\geq 2$. It suffices to prove that it also exists for $d=d_0$. We complete the proof in three steps.
In \emph{Step 1}, by viewing $\m{T}$ as an $n_1\times n_1$ block Toeplitz matrix and applying Lemma \ref{lem:Tl}, we have that
\equ{\begin{split}\m{T}
&= \sum_{j=1}^r \m{a}_{n_1}\sbra{f_{1j}}\m{a}_{n_1}^H\sbra{f_{1j}} \otimes \m{v}_j\m{v}_j^H \\
&= \sum_{j=1}^r \sbra{\m{a}_{n_1}\sbra{f_{1j}}\otimes \m{v}_j} \sbra{\m{a}_{n_1}\sbra{f_{1j}}\otimes \m{v}_j}^H, \end{split} \label{eq:Tninvj}}
where $f_{1j}\in\bT$, $j=1,\dots,r$. The following identity that follows from \eqref{eq:Tl2} will also be used later:
\equ{\m{T}_{-l} =\m{V}\m{Z}_1^{l}\m{V}^H, \quad l=0,\dots,n_1-1, \label{eq:Tl22}}
where $\m{Z}_1=\diag\sbra{e^{i2\pi f_{11}},\dots, e^{i2\pi f_{1r}}}$.

In \emph{Step 2}, we consider the first block of $\m{T}$, $\m{T}_0 = \m{V}\m{V}^H$ that is a PSD $\sbra{d_0-1}$LT matrix. Let $r'=\rank\sbra{\m{T}_0}\leq r<\min_j n_j$. By the assumption that a Vandermonde decomposition exists for $d=d_0-1$, $\m{T}_0$ admits the following Vandermonde decomposition:
\equ{\begin{split}\m{T}_0
&= \m{A}_{\m{n}_{-1}}\sbra{\widetilde{\m{f}}_{-1}} \widetilde{\m{P}} \m{A}_{\m{n}_{-1}}^H\sbra{\widetilde{\m{f}}_{-1}}\\
&= \sum_{j=1}^{r'} \widetilde{p}_j \m{a}_{\m{n}_{-1}}\sbra{\widetilde{\m{f}}_{-1,j}} \m{a}_{\m{n}_{-1}}^H\sbra{\widetilde{\m{f}}_{-1,j}}, \end{split} \label{eq:VDT0}}
where $\widetilde{\m{f}}_{-1,j}$ is the $j$th column in $\widetilde{\m{f}}_{-1}\in\bT^{\sbra{d_0-1}\times r}$, $\widetilde{\m{f}}_{-1,j}$, $j=1,\dots,r'$ are distinct, and $\widetilde{\m{P}} = \diag\sbra{\widetilde{p}_1,\dots,\widetilde{p}_{r'}}$ with $\widetilde{p}_j>0$, $j=1,\dots,r'$. Because $\m{T}_0=\m{V}\m{V}^H = \m{A}_{\m{n}_{-1}}\sbra{\widetilde{\m{f}}_{-1}} \widetilde{\m{P}} \m{A}_{\m{n}_{-1}}^H\sbra{\widetilde{\m{f}}_{-1}}$, it holds that \cite[Theorem 7.3.11]{horn2012matrix}
\equ{\m{V} = \m{A}_{\m{n}_{-1}}\sbra{\widetilde{\m{f}}_{-1}}  \widetilde{\m{P}}^{\frac{1}{2}} \m{O}, \label{eq:VinA2}}
where $\m{O}\in\bC^{r'\times r}$ and $\m{O}\m{O}^H = \m{I}$. Inserting \eqref{eq:VinA2} into \eqref{eq:Tl22}, we have that
\equ{\begin{split}\m{T}_{-l}
=& \m{A}_{\m{n}_{-1}}\sbra{\widetilde{\m{f}}_{-1}} \widetilde{\m{P}}^{\frac{1}{2}} \m{O} \m{Z}_1^{l} \m{O}^H \widetilde{\m{P}}^{\frac{1}{2}} \m{A}_{\m{n}_{-1}}^H\sbra{\widetilde{\m{f}}_{-1}} ,\\
& l=0,\dots,n_1-1. \end{split} \label{eq:Tl3}}
It immediately follows from Lemma \ref{lem:Tdiag} that $\widetilde{\m{P}}^{\frac{1}{2}} \m{O} \m{Z}_1^l \m{O}^H \widetilde{\m{P}}^{\frac{1}{2}}$, $l=0,1,\dots, n_1-1$ are diagonal matrices and so are $\m{O} \m{Z}_1^l \m{O}^H$.

In \emph{Step 3}, we show that $\m{O}$ and $\m{V}$ are structured, which together with \eqref{eq:Tninvj} leads to a decomposition of $\m{T}$ as in \eqref{eq:vanderdec}. To do so, let
\equ{\m{D}\sbra{l} \coloneqq \m{O} \m{Z}_1^l \m{O}^H = \sum_{j=1}^r e^{i2\pi l f_{1j}} \m{O}\sbra{j}, \quad l=0,1\dots,n_1-1 \label{eq:Gl}}
be a series of diagonal matrices, where $\m{O}\sbra{j} = \m{o}_j\m{o}_j^H$ and $\m{o}_j$ is the $j$th column of $\m{O}$.
First consider the case when $f_{1j}$, $j=1,\dots,r$ are distinct. For the $\sbra{m,n}$th entry of $\m{D}\sbra{l}$, denoted by $D_{mn}\sbra{l}$, we have the following linear system of equations whenever $m\neq n$:
\equ{\begin{bmatrix} 0\\ 0 \\ \vdots \\ 0 \end{bmatrix} = \begin{bmatrix} D_{mn}(0)\\ D_{mn}(1) \\ \vdots \\ D_{mn}(n_1-1) \end{bmatrix} = \m{A}_{n_1}\sbra{\m{f}_1} \begin{bmatrix} O_{mn}(1)\\ O_{mn}(2) \\ \vdots \\ O_{mn}(r) \end{bmatrix}, \label{eq:Gmnis0}}
where $\m{A}_{n_1}\sbra{\m{f}_1}$ has full column rank since $r< n_1$. It immediately follows that $O_{mn}\sbra{j}=0$, $j=1,\dots,r$ when $m\neq n$, i.e., $\m{O}\sbra{j}$ are diagonal matrices. Moreover, each $\m{O}\sbra{j}$ contains at most one nonzero entry on its diagonal since its rank is at most 1. This means that $\m{o}_j$ has at most one nonzero entry and hence, $\m{v}_j = \m{A}_{\m{n}_{-1}}\sbra{\widetilde{\m{f}}_{-1}} \widetilde{\m{P}}^{\frac{1}{2}} \m{o}_j$ is the product of a scalar and some column in $\m{A}_{\m{n}_{-1}}\sbra{\widetilde{\m{f}}_{-1}} $. As a result, we obtain from \eqref{eq:Tninvj} a decomposition of $\m{T}$ as in \eqref{eq:vanderdec}.

We next consider the other case when some $f_{1j}$'s are identical. Without loss of generality, we assume that $f_{1j}$, $j=1,\dots, r_0\leq r$ are identical and different from the others. By similar arguments we can conclude that $\sum_{j=1}^{r_0} \m{o}_j\m{o}_j^H$ is a diagonal matrix of rank at most $r_0$. Then,
\equ{\begin{split}
&\sum_{j=1}^{r_0} \m{v}_j\m{v}_j^H\\
&= \m{A}_{\m{n}_{-1}}\sbra{\widetilde{\m{f}}_{-1}} \widetilde{\m{P}}^{\frac{1}{2}} \sbra{\sum_{j=1}^{r_0} \m{o}_j\m{o}_j^H} \widetilde{\m{P}}^{\frac{1}{2}} \m{A}_{\m{n}_{-1}}^H\sbra{\widetilde{\m{f}}_{-1}} \\
&= \sum_{j=1}^{r_0} p_j \m{a}_{\m{n}_{-1}}\sbra{\m{f}_{-1,j}}\m{a}_{\m{n}_{-1}}^H\sbra{\m{f}_{-1,j}}, \end{split} \label{eq:sumv}}
where $p_j\geq0$ and $\m{f}_{-1,j} \in \lbra{\widetilde{\m{f}}_{-1,1},\dots,\widetilde{\m{f}}_{-1,r'}}$ for $j=1,\dots,r_0$. Inserting \eqref{eq:sumv} into \eqref{eq:Tninvj}, we have that
\equ{\begin{split}\m{T}
&= \m{a}_{n_1}\sbra{f_{11}}\m{a}_{n_1}^H\sbra{f_{11}} \otimes \sbra{\sum_{j=1}^{r_0} \m{v}_j\m{v}_j^H} \\
&\quad + \sum_{j=r_0+1}^r \m{a}_{n_1}\sbra{f_{1j}}\m{a}_{n_1}^H\sbra{f_{1j}} \otimes \m{v}_j\m{v}_j^H \\
&= \sum_{j=1}^{r_0} p_j \m{a}\sbra{f_{11}, \m{f}_{-1,j}} \m{a}^H\sbra{f_{11}, \m{f}_{-1,j}} \\
&\quad + \sum_{j=r_0+1}^r \sbra{\m{a}_{n_1}\sbra{f_{1j}}\otimes \m{v}_j} \sbra{\m{a}_{n_1}\sbra{f_{1j}}\otimes \m{v}_j}^H. \end{split} }
Therefore, by similarly dealing with $f_{1j}$, $j=r_0+1,\dots,r$ we can obtain a decomposition of $\m{T}$ as in \eqref{eq:vanderdec}.
\BOX

%


\subsection{Finding the Vandermonde Decomposition} \label{sec:MaPP}
The proof of Theorem \ref{thm:vanderdec} provides a constructive method for finding the Vandermonde decomposition of $d$LT matrices. In the case of $d=1$, the result becomes the conventional Vandermonde decomposition which can be computed using the algorithm in \cite{pisarenko1973retrieval} or subspace methods (or the matrix pencil method introduced later). For $d\geq2$, by viewing $\m{T}$ as an $n_1\times n_1$ block Toeplitz matrix, we can first obtain a decomposition as in \eqref{eq:Tninvj} following the proof of Lemma \ref{lem:Tl}. Then it suffices to find the Vandermonde decomposition of the $(d-1)$LT matrix $\m{T}_0$ as in \eqref{eq:VDT0}. The pairing between $f_{1j}$, $j=1,\dots,r$ and $\widetilde{\m{f}}_{-1,j}$, $j=1,\dots,r'=\rank\sbra{\m{T}_0}$ can be automatically accomplished via \eqref{eq:VinA2} in which $\m{O}=\widetilde{\m{P}}^{-\frac{1}{2}} \m{A}_{\m{n}_{-1}}^{\dag}\sbra{\widetilde{\m{f}}_{-1}}\m{V}$, where $\cdot^{\dag}$ denotes the matrix pseudo-inverse. As a result, the Vandermonde decomposition can be computed in a sequential manner, from the 1L to the 2L and finally to the $d$L case.

We next study in detail the computations of $f_{1j}$, $j=1,\dots,r$ and $\m{V}$ in \eqref{eq:Tninvj}. Recalling the proof of Lemma \ref{lem:Tl}, we will use the identities $\m{T}=\m{Y}\m{Y}^H$, $\m{Y}_L = \m{Y}_U\m{U}$, $\m{U} = \widetilde{\m{U}}\m{Z}_1\widetilde{\m{U}}^H$ and $\m{V} = \m{Y}_0\widetilde{\m{U}}$, where $\m{Z}_1=\diag\sbra{z_{11},\dots,z_{1r}}$ with $z_{1j}=e^{i2\pi f_{1j}}$, $j=1,\dots,r$. We consider the matrix pencil $\sbra{\m{Y}_U^H\m{Y}_L, \m{Y}_U^H\m{Y}_U}$. It holds that
\equ{\begin{split}\m{Y}_U^H\m{Y}_L - \lambda \m{Y}_U^H\m{Y}_U
&= \m{Y}_U^H\m{Y}_U \widetilde{\m{U}}\m{Z}_1\widetilde{\m{U}}^H - \lambda \m{Y}_U^H\m{Y}_U \\
&= \m{Y}_U^H\m{Y}_U\widetilde{\m{U}} \sbra{\m{Z}_1 - \lambda\m{I}} \widetilde{\m{U}}^H
\end{split}}
and thus
\equ{\sbra{\m{Y}_U^H\m{Y}_L - z_{1j} \m{Y}_U^H\m{Y}_U } \widetilde{\m{u}}_j = \m{0},\quad j=1,\dots,r,}
where $\widetilde{\m{u}}_j$ denotes the $j$th column of $\widetilde{\m{U}}$. This means that $z_{1j}$ and $\widetilde{\m{u}}_j$, $j=1,\dots,r$ are the eigenvalues and eigenvectors of the matrix pencil $\sbra{\m{Y}_U^H\m{Y}_L, \m{Y}_U^H\m{Y}_U}$ whose computation is a generalized eigenproblem. Following this observation, the algorithm described above for the Vandermonde decomposition is designated as the {\em ma}trix {\em p}encil and (auto-){\em p}airing (MaPP) method. Since the main computational step of MaPP is the factorization $\m{T}=\m{Y}\m{Y}^H$, MaPP requires $O\sbra{N^2r}$ flops.

%

\subsection{The Case of $r\geq \min_j n_j$} \label{Sec:MaPP}
As mentioned in Remark \ref{rem:tight}, the condition that $\rank\sbra{\m{T}}=r< \min_j n_j$ of Theorem \ref{thm:vanderdec} is tight. Even so, it is of interest to study the existence of a Vandermonde decomposition and to find it in the case of $r\geq \min_j n_j$. A numerical algorithm to do this is provided in this subsection. The algorithm is motivated by the following observations that hold true when a decomposition indeed exists:
\begin{enumerate}
  \item Viewing $\m{T}$ as an $n_1\times n_1$ block Toeplitz matrix, $\lbra{f_{1j}}_{j=1}^r$ can be computed using the matrix pencil method described previously.
  \item Let $\m{\cP}_l$, $l=2,\dots,d$ be permutation matrices that are such that
   \equ{\m{\cP}_l\sbra{\bigotimes_{m=1}^d \m{a}_{n_m}\sbra{f'_{m}}} = \m{a}_{n_l}\sbra{f'_{l}} \otimes \sbra{\bigotimes_{m\neq l} \m{a}_{n_m}\sbra{f'_{m}}} \label{eq:Pl}}
   for any $\m{f}'\in\bT^d$. Then,
   \equ{\m{T}^{\sbra{l}}=\m{\cP}_l\m{T}\m{\cP}_l^T \label{eq:Tbral}}
   remains a $d$LT matrix by exchanging the roles of the first and the $l$th dimension of the $d$D frequencies in the Vandermonde decomposition. It follows that $\lbra{f_{lj}}_{j=1}^r$, $l=2,\dots,d$ can be computed up to re-sorting similarly to $\lbra{f_{1j}}_{j=1}^r$.
  \item Let $\m{T} = \sum_{m=1}^r \sigma_m \m{u}_m\m{u}_m^H$ be a truncated eigen-decomposition with $\sigma_{1} \geq \dots \geq \sigma_r >0$. For any $\m{f}_{:j}$, $j=1,\dots,r$ in the Vandermonde decomposition, $\sum_{m=1}^r \twon{\m{u}_m^H \m{a}\sbra{\m{f}_{:j}}}^2 = 1$ is the maximum value of the function $g\sbra{\m{f}'} = \sum_{m=1}^r \twon{\m{u}_m^H \m{a}\sbra{\m{f}'}}^2$, $\m{f}'\in\bT^d$.
\end{enumerate}

The algorithm is implemented as follows. We first compute $\lbra{f_{lj}}_{j=1}^r$, $l=1,\dots,d$ separately from $\m{T}^{\sbra{l}}$, $l=1,\dots,d$ using the matrix pencil method as in the last subsection, where $\m{T}^{\sbra{1}}=\m{T}$. Since the $f_{lj}$ in $\lbra{f_{lj}}_{j=1}^r$, $l=1,\dots,d$ is not necessarily the $f_{lj}$ in the correct $d$-tuple $\m{f}_{:j}\in\bT^d$, we then form the correct $d$-tuples $\m{f}_{:j}$, $j=1,\dots,r$ such that the equation $\sum_{m=1}^r \twon{\m{u}_m^H \m{a}\sbra{\m{f}_{:j}}}^2 = 1$ is satisfied. Finally, the coefficients $p_j$, $j=1,\dots,r$ are obtained using a least-squares method, and a Vandermonde decomposition of $\m{T}$ is found if the residual is zero. The algorithm is still called matrix pencil and pairing (MaPP). Note that when the matrix rank is lower than the dimension of each Toeplitz block, MaPP is guaranteed to find the unique Vandermonde decomposition in which the pairing is done automatically. When the matrix rank is higher, the pairing is done using a search method and MaPP is not guaranteed to find a decomposition. In the latter case, MaPP requires $O\sbra{N^2r + Nr^{d+1}}$ flops. Note that MaPP is similar to the matrix pencil method in \cite{hua1992estimating} for 2D frequency estimation from a Hankel matrix formed using the data samples or from a data covariance estimate.

For $l=1,\dots,d$, let $\m{e}_l\in\lbra{0,1}^d$ be the vector with one at the $l$th entry and zeros elsewhere. This means that $\m{e}_1,\dots, \m{e}_d$ form the canonical basis for $\bR^d$. A theoretical guarantee for MaPP is provided in the following theorem.

\begin{thm} Assume that $\rank\sbra{\m{T}_{\m{n}-\m{e}_1}} = \dots = \rank\sbra{\m{T}_{\m{n}-\m{e}_d}} = \rank\sbra{\m{T}}=r$. Then MaPP is guaranteed to find an order-$r$ Vandermonde decomposition of $\m{T}$ as in \eqref{eq:vanderdec} if it exists. \label{thm:algsucc}
\end{thm}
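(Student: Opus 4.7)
The plan is to verify that each of the three stages of MaPP succeeds whenever an order-$r$ Vandermonde decomposition of $\m{T}$ exists. For the per-dimension frequency recovery stage, fix $l\in\lbra{1,\dots,d}$ and view $\m{T}^{\sbra{l}}=\m{\cP}_l\m{T}\m{\cP}_l^T$ as an $n_l\times n_l$ block Toeplitz matrix. Factor $\m{T}^{\sbra{l}}=\m{Y}\m{Y}^H$ and form the upper submatrix $\m{Y}_U$ as in the proof of Lemma \ref{lem:Tl}. The key observation is that $\m{Y}_U\m{Y}_U^H$ is the leading principal submatrix of $\m{T}^{\sbra{l}}$ consisting of its first $n_l-1$ block rows and block columns, which under the action of $\m{\cP}_l$ coincides with $\m{T}_{\m{n}-\m{e}_l}$. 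Hence the hypothesis $\rank\sbra{\m{T}_{\m{n}-\m{e}_l}}=r$ forces $\rank\sbra{\m{Y}_U}=r$, so $\m{Y}_U$ has full column rank, $\m{Y}_U^H\m{Y}_U$ is invertible, and the unitary matrix $\m{U}$ of Lemma \ref{lem:Tl} is uniquely characterized by $\m{U}=\sbra{\m{Y}_U^H\m{Y}_U}^{-1}\m{Y}_U^H\m{Y}_L$. The generalized eigenvalues of the pencil $\sbra{\m{Y}_U^H\m{Y}_L,\m{Y}_U^H\m{Y}_U}$ therefore coincide with the eigenvalues of $\m{U}$, which equal $\lbra{e^{i2\pi f_{lj}}}_{j=1}^r$ by the argument used in the proof of Theorem \ref{thm:vanderdec}.

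For the pairing step, let $\m{T}=\sum_{j=1}^r p_j\m{a}\sbra{\m{f}_{:j}}\m{a}^H\sbra{\m{f}_{:j}}$ be the assumed decomposition, and let $\lbra{\m{u}_m}_{m=1}^r$ be orthonormal eigenvectors so that $\cR\sbra{\m{T}}$ coincides with the linear span of $\lbra{\m{a}\sbra{\m{f}_{:j}}}_{j=1}^r$. Then $g\sbra{\m{f}'}=\sum_{m=1}^r \twon{\m{u}_m^H\m{a}\sbra{\m{f}'}}^2$ equals the squared norm of the orthogonal projection of $\m{a}\sbra{\m{f}'}$ onto $\cR\sbra{\m{T}}$. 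Since $\twon{\m{a}\sbra{\m{f}'}}^2=1$ for every $\m{f}'\in\bT^d$, we obtain $g\sbra{\m{f}'}\leq 1$ with equality if and only if $\m{a}\sbra{\m{f}'}\in\cR\sbra{\m{T}}$. In particular $g\sbra{\m{f}_{:j}}=1$ for every $j$, so the $r$ true $d$-tuples lie within the candidate set formed by combining the per-dimension frequencies produced in step (i) and all of them pass the $g=1$ test, making them selectable by MaPP.

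Once the $r$ true tuples have been identified, the linear system $\m{T}=\sum_{j=1}^r p_j\m{a}\sbra{\m{f}_{:j}}\m{a}^H\sbra{\m{f}_{:j}}$ in the unknowns $p_j$ is consistent by hypothesis, with the true positive weights as a feasible solution, so the least-squares step attains a zero residual and MaPP outputs a valid order-$r$ Vandermonde decomposition. The main obstacle lies in the first stage: one must carefully verify that the permutation $\m{\cP}_l$ maps the leading principal block of $\m{T}^{\sbra{l}}$ onto $\m{T}_{\m{n}-\m{e}_l}$, so that the rank hypothesis transfers to $\m{Y}_U$, and then confirm that the unitary matrix extracted by the pencil has eigenvalues exactly $\lbra{e^{i2\pi f_{lj}}}_{j=1}^r$. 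Once these structural identifications are in place, the projection interpretation of $g$ and the consistency of the least-squares step deliver the remaining claims by routine arguments.
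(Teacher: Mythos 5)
Your proposal is correct and follows essentially the same route as the paper: the rank hypothesis on $\m{T}_{\m{n}-\m{e}_l}$ (identified, up to the permutation $\m{\cP}_l$, with the leading principal block submatrix $\m{Y}_U\m{Y}_U^H$ of $\m{T}^{\sbra{l}}$) forces $\m{Y}_U$ to have full column rank, so the matrix pencil is well posed and recovers each per-dimension frequency set, after which the projection test and least-squares step are routine. The only point you compress is the verification that the pencil eigenvalues are exactly $\lbra{e^{i2\pi f_{lj}}}_{j=1}^r$ independently of the chosen factor $\m{Y}$; the paper makes this explicit by noting that any two factors satisfy $\m{Y}'=\m{Y}\m{U}'$ with $\m{U}'$ unitary, so all resulting pencils are unitarily similar to the one built from $\m{A}\sbra{\m{f}}\m{P}^{1/2}$, whose eigenvalues are the claimed ones.
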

\begin{proof} Suppose that $\m{T}$ admits an order-$r$ Vandermonde decomposition as in \eqref{eq:vanderdec}. It suffices to show that the sets of frequencies $\lbra{f_{lj}}_{j=1}^r$, $l=1,\dots,d$ are unique and that MaPP can find them. We first consider the computation of $f_{1j}$, $j=1,\dots,r$ using MaPP. Note that $\m{T}_{\m{n}-\m{e}_1}$ is a principal submatrix of the $n_1\times n_1$ block Toeplitz matrix $\m{T}$ obtained by removing the blocks in the last row and column. Following the proof of Lemma \ref{lem:Tl}, suppose that $\m{T}=\m{Y}\m{Y}^H$, $\m{Y}\in\bC^{N\times r}$. The assumption that $\rank\sbra{\m{T}_{\m{n}-\m{e}_1}}=\rank\sbra{\m{T}}=r$ implies that $\m{Y}_U$ has full column rank. It follows that the unitary matrix $\m{U}$ is unique given $\m{Y}$. Moreover, the matrix pencil method is able to find $f_{1j}$, $j=1,\dots,r$.

We next show the uniqueness of $f_{1j}$, $j=1,\dots,r$. Suppose that $\m{T}=\m{Y}'\m{Y}'^H$, $\m{Y}'\in\bC^{N\times r}$. Then there must exist a unitary matrix $\m{U}'$ such that $\m{Y}'=\m{Y}\m{U}'$. It follows that the new matrix pencil
\equ{\m{Y}_U'^H\m{Y}'_L - \lambda \m{Y}_U'^H\m{Y}'_U = \m{U}'^H\sbra{\m{Y}_U^H\m{Y}_L - \lambda \m{Y}_U^H\m{Y}_U} \m{U}' }
has the same eigenvalues as $\m{Y}_U^H\m{Y}_L - \lambda \m{Y}_U^H\m{Y}_U$, which proves the uniqueness.

We now consider the case of $l\geq 2$. Similar to $\m{T}_{\m{n}-\m{e}_1}$, we define the $\sbra{n_l-1,n_1,\dots,n_{l-1}, n_{l+1},\dots, n_d}$, $d$LT matrix $\breve{\m{T}}^{\sbra{l}}$ as a principal submatrix of the $n_l\times n_l$ block Toeplitz matrix $\m{T}^{\sbra{l}}$ in \eqref{eq:Tbral} obtained by removing the blocks in the last row and column. It follows from previous arguments that if $\rank\sbra{\breve{\m{T}}^{\sbra{l}}} = r$, then $\lbra{f_{lj}}_{j=1}^r$ can be uniquely found by MaPP. In fact, this can be readily shown by the fact that $\breve{\m{T}}^{\sbra{l}}$ is identical to $\m{T}_{\m{n}-\m{e}_l}$ up to permutations of rows and columns.
\end{proof}

\begin{rem} It is easy to check that the assumption of Theorem \ref{thm:algsucc} is satisfied under the condition that $r<\min_j n_j$ of Theorem \ref{thm:vanderdec}. Moreover, in the 1D case the assumption of both Theorem \ref{thm:vanderdec} and Theorem \ref{thm:algsucc} turns out to be $N=n_1>r$, as expected.
\end{rem}

The following result suggests that the assumption of Theorem \ref{thm:algsucc} is weak and therefore MaPP can be expected to work for many MLT matrices.

\begin{prop} Assume that
\equ{r\leq N - \frac{N}{\min_l n_l}}
and that $\m{T}$ is given by \eqref{eq:vanderdec} in which $p_j>0$, $j=1,\dots, r$ and the $dr$ frequencies $f_{lj}$, $l=1,\dots,d$, $j=1,\dots, r$ are drawn from a distribution that is continuous with respect to the Lebesgue measure in $\bT^{dr}$. Then, the assumption of Theorem \ref{thm:algsucc} is satisfied almost surely. \label{prop:asshold}
\end{prop}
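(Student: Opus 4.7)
The plan is to recast both rank conditions in terms of Khatri--Rao Vandermonde matrices and then argue that the associated full-column-rank property fails only on a Lebesgue-null subset of $\bT^{dr}$.

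First, I note that the $\sbra{\m{n}-\m{e}_l}$, $d$LT matrix $\m{T}_{\m{n}-\m{e}_l}$ is generated by the same sequence $u_{\m{k}} = \sum_{j=1}^r p_j e^{-i2\pi \m{k}^T \m{f}_{:j}}$ as $\m{T}$, so by direct substitution it inherits the factorization $\m{T}_{\m{n}-\m{e}_l} = \m{A}_{\m{n}-\m{e}_l}\sbra{\m{f}}\,\m{P}\,\m{A}_{\m{n}-\m{e}_l}^H\sbra{\m{f}}$. Because $\m{P}$ is positive definite, $\rank\sbra{\m{T}_{\m{n}-\m{e}_l}} = \rank\sbra{\m{A}_{\m{n}-\m{e}_l}\sbra{\m{f}}}$; likewise $\rank\sbra{\m{T}} = \rank\sbra{\m{A}_{\m{n}}\sbra{\m{f}}}$. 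Since $\m{A}_{\m{n}-\m{e}_l}\sbra{\m{f}}$ is a row-submatrix of $\m{A}_{\m{n}}\sbra{\m{f}}$, it suffices to prove $\rank\sbra{\m{A}_{\m{n}-\m{e}_l}\sbra{\m{f}}} = r$ almost surely for every $l\in\lbra{1,\dots,d}$.

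Second, $\m{A}_{\m{n}-\m{e}_l}\sbra{\m{f}}$ has size $\sbra{N-N/n_l}\times r$, and by hypothesis $r \leq N - N/\min_k n_k \leq N - N/n_l$, so it has at least as many rows as columns. Each of its entries is a monomial in the variables $z_{kj}=e^{i2\pi f_{kj}}$, so every $r\times r$ minor is a trigonometric polynomial in $\m{f}$. The set $\cE_l\subset\bT^{dr}$ on which $\m{A}_{\m{n}-\m{e}_l}\sbra{\m{f}}$ fails to have rank $r$ is the common zero set of these finitely many real-analytic functions; it is therefore either all of $\bT^{dr}$ or a subset of Lebesgue measure zero, in which latter case any absolutely continuous distribution assigns it probability zero.

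Third, to rule out the first alternative I would exhibit an explicit $\m{f}^\star\in\bT^{dr}$ at which $\m{A}_{\m{n}-\m{e}_l}\sbra{\m{f}^\star}$ has full column rank. Since $r \leq \prod_k \sbra{n_k-\delta_{kl}}$, select $r$ distinct index tuples $\sbra{j_1^{\sbra{s}},\dots,j_d^{\sbra{s}}}$, $s=1,\dots,r$, from $\prod_k\lbra{1,\dots,n_k-\delta_{kl}}$; for each dimension $k$ pick pairwise distinct $\alpha_{k,1},\dots,\alpha_{k,n_k-\delta_{kl}}\in\bT$; and set $\m{f}^\star_{:s} = \sbra{\alpha_{1,j_1^{\sbra{s}}},\dots,\alpha_{d,j_d^{\sbra{s}}}}$. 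Applying Lemma \ref{lem:linindep} with the tuple $\m{n}-\m{e}_l$ in place of $\m{n}$, the full collection of $\prod_k\sbra{n_k-\delta_{kl}}$ such Kronecker-product vectors is linearly independent, and hence so is the chosen subset of size $r$. The only bookkeeping concerns Step~1---verifying that $\m{T}_{\m{n}-\m{e}_l}$ really is generated by the same sequence and so inherits the Vandermonde factorization with the same $\sbra{\m{f},\m{P}}$---and the index selection in Step~3; neither presents a genuine obstacle.
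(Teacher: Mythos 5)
Your proof is correct. The first step---reducing the claim to the almost-sure full column rank of the Khatri--Rao Vandermonde matrices $\m{A}_{\m{n}-\m{e}_l}\sbra{\m{f}}$ via the inherited factorization $\m{T}_{\m{n}-\m{e}_l}=\m{A}_{\m{n}-\m{e}_l}\sbra{\m{f}}\m{P}\m{A}_{\m{n}-\m{e}_l}^H\sbra{\m{f}}$ and the positive definiteness of $\m{P}$---is exactly the paper's. Where you diverge is in how the generic rank condition is established: the paper disposes of it in one line by citing \cite[Proposition 4]{jiang2001almost} on almost-sure identifiability of multidimensional harmonic retrieval, whereas you give a self-contained argument: every $r\times r$ minor of $\m{A}_{\m{n}-\m{e}_l}\sbra{\m{f}}$ is a trigonometric polynomial in $\m{f}$, so the exceptional set $\cE_l$ is either all of $\bT^{dr}$ or Lebesgue-null, and the first alternative is excluded by the explicit grid-point witness $\m{f}^\star$, whose full column rank follows from Lemma \ref{lem:linindep} applied with $\m{n}-\m{e}_l$ in place of $\m{n}$. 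Your route buys transparency and self-containment: it makes visible that the hypothesis $r\leq N-N/\min_k n_k$ enters precisely to guarantee that the reduced grid $\prod_k\lbra{1,\dots,n_k-\delta_{kl}}$ has at least $r$ points from which to build the witness, and it keeps the proof entirely within the paper's own toolbox (Lemma \ref{lem:linindep} plus the standard analytic-variety dichotomy). The cost is length relative to the citation; the content of the cited result is, at bottom, the same dichotomy you spell out. The two bookkeeping points you flag (that $\m{T}_{\m{n}-\m{e}_l}$ is generated by the same sequence $u_{\m{k}}$, so the factorization carries over up to an irrelevant positive rescaling from the normalization of $\m{a}_{\m{n}-\m{e}_l}$, and that the distinctness of the index tuples and of the $\alpha_{k,j}$ makes the columns of $\m{f}^\star$ distinct) are indeed routine.
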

\begin{proof} Since $\m{T}=\m{A}\sbra{\m{f}}\m{P}\m{A}^H\sbra{\m{f}}$, it is easy to see that $\m{T}_{\m{n}-\m{e}_l} = \m{A}_{\m{n}-\m{e}_l}\sbra{\m{f}}\m{P}\m{A}_{\m{n}-\m{e}_l}^H\sbra{\m{f}}$. By \cite[Proposition 4]{jiang2001almost}, under the assumptions of Proposition \ref{prop:asshold}, it holds almost surely that $\rank\sbra{\m{A}_{\m{n}-\m{e}_1}\sbra{\m{f}}} = \dots = \rank\sbra{\m{A}_{\m{n}-\m{e}_d}\sbra{\m{f}}} = \rank\sbra{\m{A}\sbra{\m{f}}}=r$ since $r\leq \min_l \sbra{N - \frac{N}{n_l}} = N - \frac{N}{\min_l n_l}$. Then the stated result follows by making use of the fact that $\rank\sbra{\m{T}} = \rank\sbra{\m{A}\sbra{\m{f}}}$ and  $\rank\sbra{\m{T}_{\m{n}-\m{e}_l}} = \rank\sbra{\m{A}_{\m{n}-\m{e}_l}\sbra{\m{f}}}$, $l=1,\dots,d$.
\end{proof}

\section{Application to MD Super-Resolution} \label{sec:super-resolution}

\subsection{MD Super-Resolution via Atomic $\ell_0$ Norm}
The concept of super-resolution introduced in \cite{candes2013towards} refers to the recovery of a (1D or MD) frequency spectrum from coarse scale time-domain samples by exploiting signal sparsity. It circumvents the grid mismatch issue of several recent compressed sensing methods by treating the frequencies as continuous (as opposed to quantized) variables. In this paper we tackle the problem using an atomic $\ell_0$ (pseudo-)norm method instead of the existing atomic norm method. The reasons are three-fold. Firstly, the atomic $\ell_0$ norm exploits the sparsity to the greatest extent possible, while the atomic norm is only a convex relaxation. Secondly, the study of atomic norm methods suffers from a resolution limit condition which is not encountered in the analysis of the atomic $\ell_0$ norm approach (see the 1D case analysis in \cite{yang2014exact,yang2014enhancing}). Lastly, we show that a finite-dimensional formulation exists that can exactly characterize the atomic $\ell_0$ norm, whereas parameter tuning remains a challenging task for the atomic norm (see \cite{xu2014precise}).

Consider the parametric model in \eqref{eq:paramodel2}. We are interested in recovering $\m{f}\in\bC^{d\times r}$ given a set of linear measurements of $\m{y}\in\bC^{N}$. Without loss of generality, we assume that the measurements are given by $\m{z}=\m{L}\m{y}\in\bC^M$, where $\m{L}$ denotes a linear operator. Then all possible vectors $\m{y}$ form a convex set $\m{\cC}\coloneqq\lbra{\m{y}:\; \m{z}=\m{L}\m{y}}$. The atomic $\ell_0$ norm of $\m{y}$ is defined as
\equ{\norm{\m{y}}_{\cA,0} \coloneqq \inf_{c_j\in\bC,\, \m{f}_{:j}\in\bT^d}\lbra{K:\; \m{y} = \sum_{j = 1}^K c_j \m{a}\sbra{\m{f}_{:j}} }. }
We propose the following approach for signal and frequency recovery by exploiting signal sparsity:
\equ{\min_{\m{y}} \norm{\m{y}}_{\cA,0}, \st \m{y}\in\m{\cC}. \label{eq:problem}}
Therefore, we estimate $\m{y}$ using the sparsest candidate $\m{y}^*$, which has an atomic decomposition of the minimum order, and in this process we obtain estimates of $\m{f}_{:j}$, $j=1,\dots,r$ using the frequencies in the atomic decomposition of $\m{y}^*$. Regarding theoretical guarantees for the atomic $\ell_0$ norm minimization method, we have the following result.

\begin{prop} Given $\m{z}=\m{L}\m{y}$ where $\m{y}$ is defined in \eqref{eq:paramodel2}, the atomic decomposition of the optimizer of \eqref{eq:problem} exactly recovers the frequencies $\m{f}_{:j}$, $j=1,\dots,r$ if and only if they can be uniquely identified from $\m{z}$. \label{prop:recoveqident}
\end{prop}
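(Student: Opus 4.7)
The statement is essentially a matter of unpacking definitions, with no use of the Vandermonde decomposition results developed earlier. The plan is to first interpret the phrase ``the frequencies $\m{f}_{:j}$ are uniquely identified from $\m{z}$'' in the natural way: every $\m{y}' \in \m{\cC}$ that admits an atomic decomposition of order at most $\norm{\m{y}}_{\cA,0} \le r$ must use exactly the frequency set $\{\m{f}_{:j}\}_{j=1}^r$ (and hence must coincide with $\m{y}$). Under this reading the two directions of the equivalence follow by elementary arguments, which I would present in sequence.

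For the ``if'' direction I would first note that the true signal $\m{y}$ is feasible for \eqref{eq:problem} and satisfies $\norm{\m{y}}_{\cA,0} \le r$, so the optimal value is at most $r$. Taking any optimizer $\m{y}^\star$ with a minimum-order atomic decomposition $\m{y}^\star = \sum_{j=1}^{K^\star} c^\star_j \m{a}(\m{f}^\star_{:j})$, the feasibility $\m{y}^\star \in \m{\cC}$ together with $K^\star \le r$ and the unique identifiability hypothesis forces $\{\m{f}^\star_{:j}\} = \{\m{f}_{:j}\}$, which is precisely the recovery claim.

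For the ``only if'' direction I would proceed by contrapositive. Assuming unique identifiability fails, there is some $\m{y}' \in \m{\cC}$ with an atomic decomposition of order at most $\norm{\m{y}}_{\cA,0}$ whose frequency set differs from $\{\m{f}_{:j}\}$. If $\norm{\m{y}'}_{\cA,0} < \norm{\m{y}}_{\cA,0}$, then every optimizer has strictly smaller atomic $\ell_0$ norm than $r$, so its decomposition cannot match the $r$ true frequencies. If instead $\norm{\m{y}'}_{\cA,0} = \norm{\m{y}}_{\cA,0}$, then $\m{y}'$ is itself an optimizer of \eqref{eq:problem} whose atomic decomposition does not recover $\{\m{f}_{:j}\}$, again contradicting exact recovery.

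The only real obstacle is the definitional one flagged at the outset: slightly different readings of ``uniquely identified'' lead to cosmetic changes in wording but essentially the same proof. I do not anticipate any nontrivial technical step, and in particular no appeal to Theorem \ref{thm:vanderdec} or to the MaPP construction is required for this proposition; those results will come into play later when one seeks a tractable finite-dimensional reformulation of \eqref{eq:problem}.
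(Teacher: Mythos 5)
Your proposal is correct and follows essentially the same route as the paper: both directions are handled by unpacking the definition of identifiability, using that the true $\m{y}$ is feasible so the optimal order is at most $r$, and arguing the "if" part via contradiction/contraposition (an unrecovered optimizer yields an alternative order-$\le r$ decomposition consistent with $\m{z}$) and the "only if" part symmetrically. The paper's proof is equally brief and likewise makes no use of Theorem \ref{thm:vanderdec} or MaPP.
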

\begin{proof} We first show the `if' part using a proof by contradiction. To do so, suppose that the atomic decomposition of the optimizer of \eqref{eq:problem} cannot recover the frequencies $\m{f}_{:j}$, $j=1,\dots,r$. This means that there exist $\m{f}'_{:j}$, $j=1,\dots,r'\leq r$ and $\m{y}'\in\m{\cC}$ satisfying that $\m{y}'=\sum_{j=1}^{r'} c'_j \m{a}\sbra{\m{f}'_{:j}}$. It follows that $\m{z}=\m{L}\m{y}' = \sum_{j=1}^r c'_j \m{L}\m{a}\sbra{\m{f}'_{:j}}$. Hence, the frequencies cannot be uniquely identified from $\m{z}$, which contradicts the condition of Proposition \ref{prop:recoveqident}.

Using similar arguments we can show that, if the frequencies cannot be uniquely identified from $\m{z}$, then they cannot be recovered from the optimizer of \eqref{eq:problem}. This means that the `only if' part also holds true.
\end{proof}

Proposition \ref{prop:recoveqident} establishes a link between the performance of frequency recovery using the atomic $\ell_0$ norm minimization and the parameter identifiability problem that has been well studied in the full data case where $\m{L}$ is an identity matrix, see \cite{sidiropoulos2001generalizing,jiang2001almost,liu2002almost, liu2006eigenvector}. It is well known that identifiability is a prerequisite for recovery. As a result, the atomic $\ell_0$ norm minimization provides the strongest theoretical guarantee possible.

\subsection{Precise Formulation of the Atomic $\ell_0$ Norm}

For the purpose of computation, a finite-dimensional formulation of the atomic $\ell_0$ norm is required. To do that, we assume that $\norm{\m{y}}_{\cA,0}=r<\overline{r}$, where $\overline{r}$ is known. In fact, we can always let $\overline r= N+1$ since $\m{y}$ can always be written as a linear combination of $N$ sinusoids with different frequencies; of course a tighter bound leads to lower computational complexity (see below).

For any $\m{f}_{:j}\in\bT^d$ and $n'_l\geq n_l$, $l=1,\dots,d$, note that $\m{a}\sbra{\m{f}_{:j}}=\m{a}_{\m{n}}\sbra{\m{f}_{:j}}$ is a subvector of $\m{a}'\sbra{\m{f}_{:j}} \coloneqq \m{a}_{\m{n}'}\sbra{\m{f}_{:j}}$. Let $\m{\Omega}$ be the index set which is such that $\m{a}\sbra{\m{f}_{:j}} = \m{a}'_{\m{\Omega}}\sbra{\m{f}_{:j}}$. Similarly, $\m{T} = \m{T}_{\m{n}}$ is a submatrix of $\m{T}' \coloneqq \m{T}_{\m{n}'} $. We have the following result.

\begin{thm} Assume that $\norm{\m{y}}_{\cA,0} =r < \overline{r}$. Let $n'_l\geq \max\sbra{n_l, \;\overline {r}}$, $l=1,\dots, d$.  Then, $\norm{\m{y}}_{\cA,0}$ equals the optimal value of the optimization problem
\equ{\begin{split}
&\min_{t,\m{T}',\m{y}'} \rank\sbra{ \m{T}'},\\
&\st \begin{bmatrix}t & \m{y}'^H \\ \m{y}' & \m{T}' \end{bmatrix} \geq \m{0},\; \m{y}'_{\m{\Omega}}=\m{y}, \end{split} \label{eq:rankmin_atom0norm}}
where the objective function $\rank\sbra{ \m{T}'}$ can be replaced by $\rank\sbra{\begin{bmatrix}t & \m{y}'^H \\ \m{y}' & \m{T}' \end{bmatrix}}$. \label{thm:atomic0norm}
\end{thm}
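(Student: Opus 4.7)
The plan is to prove $\norm{\m{y}}_{\cA,0} = r^\star$, where $r^\star$ denotes the optimal value of \eqref{eq:rankmin_atom0norm}, by establishing the two inequalities separately. For $r^\star \leq \norm{\m{y}}_{\cA,0}$, I would start from any minimum-order atomic decomposition $\m{y} = \sum_{j=1}^r c_j \m{a}\sbra{\m{f}_{:j}}$ with $r = \norm{\m{y}}_{\cA,0}$ and exhibit the explicit feasible triple $\m{y}' = \sum_j c_j \m{a}'\sbra{\m{f}_{:j}}$, $\m{T}' = \sum_j p_j \m{a}'\sbra{\m{f}_{:j}} \m{a}'^H\sbra{\m{f}_{:j}}$ for arbitrary weights $p_j>0$, and $t = \m{y}'^H \m{T}'^{\dagger} \m{y}'$. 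The consistency $\m{y}'_{\m{\Omega}} = \m{y}$ holds because $\m{a}'$ restricts to $\m{a}$ on $\m{\Omega}$ by construction. Embedding the $r$ distinct $d$-tuples $\m{f}_{:j}$ into a Kronecker grid (possible since $r < \overline{r} \leq \min_l n'_l$, using the same trick as in the uniqueness part of Theorem \ref{thm:vanderdec}) and invoking Lemma \ref{lem:linindep} gives $\rank\sbra{\m{T}'} = r$, and the Schur complement then makes the bordered matrix PSD with rank also equal to $r$.

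For the reverse inequality $\norm{\m{y}}_{\cA,0} \leq r^\star$, let $\sbra{t, \m{T}', \m{y}'}$ attain the minimum. The upper bound just established gives $r^\star \leq r < \overline{r} \leq \min_l n'_l$, so $\rank\sbra{\m{T}'} = r^\star$ is strictly smaller than each Toeplitz block dimension. Theorem \ref{thm:vanderdec} therefore supplies the Vandermonde decomposition $\m{T}' = \sum_{j=1}^{r^\star} p_j \m{a}'\sbra{\m{f}_{:j}} \m{a}'^H\sbra{\m{f}_{:j}}$ with $p_j>0$ and distinct $\m{f}_{:j} \in \bT^d$. Positive semidefiniteness of the bordered matrix, together with the Schur-complement range condition, forces $\m{y}'$ to lie in the range of $\m{T}'$, which is spanned by $\lbra{\m{a}'\sbra{\m{f}_{:j}}}$; hence $\m{y}' = \sum_{j=1}^{r^\star} c_j \m{a}'\sbra{\m{f}_{:j}}$ for some $c_j \in \bC$. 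Restricting to $\m{\Omega}$ then yields $\m{y} = \m{y}'_{\m{\Omega}} = \sum_j c_j \m{a}\sbra{\m{f}_{:j}}$, an atomic decomposition of $\m{y}$ of order $r^\star$, so $\norm{\m{y}}_{\cA,0} \leq r^\star$.

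For the alternative objective, the Schur complement shows the bordered matrix has rank $\rank\sbra{\m{T}'}$ exactly when $t = \m{y}'^H \m{T}'^{\dagger} \m{y}'$ and $\rank\sbra{\m{T}'}+1$ otherwise, so joint minimization in $t$ delivers the same optimum as minimizing $\rank\sbra{\m{T}'}$ alone. The crux of the proof is the converse direction, which depends entirely on the padding condition $n'_l \geq \overline{r}$ to secure the strict rank gap needed to invoke Theorem \ref{thm:vanderdec}; the rest consists of routine Schur-complement bookkeeping and the Kronecker-grid linear-independence check described above.
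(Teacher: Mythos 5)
Your proposal is correct and follows essentially the same route as the paper: an explicit feasible triple built from a minimum-order atomic decomposition gives $r^\star\le\norm{\m{y}}_{\cA,0}$, and the converse uses $r^\star<\min_l n'_l$ to invoke Theorem \ref{thm:vanderdec} on the optimal $\m{T}'$ and the PSD range condition to read off an order-$r^\star$ atomic decomposition of $\m{y}$. The only cosmetic differences are your choice $t=\m{y}'^H\m{T}'^{\dagger}\m{y}'$ with arbitrary weights $p_j$ (the paper takes $t=r$, $p_j=\abs{c_j}^2$, and writes the bordered matrix as a sum of rank-one PSD terms) and your explicit Schur-complement rank accounting for the alternative objective, which the paper handles by the same inequality chain stated more tersely.
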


\begin{proof} Using the fact that $\norm{\m{y}}_{\cA,0} =r$, we have that $\m{y}$ admits an order-$r$ atomic decomposition as $\m{y}=\sum_{j=1}^r c_j\m{a}\sbra{\m{f}_{:j}}$.   Then, we can construct a feasible solution as
\equ{\begin{split}
&\sbra{t,\, \m{T}',\,\m{y}'} \\
&=\sbra{r,\, \sum_{j=1}^r \abs{c_j}^2\m{a}'\sbra{\m{f}_{:j}}\m{a}'^H\sbra{\m{f}_{:j}}, \, \sum_{j=1}^r c_j\m{a}'\sbra{\m{f}_{:j}}} \end{split}}
since $\m{y}'_{\m{\Omega}}= \sum_{j=1}^r c_j\m{a}'_{\m{\Omega}}\sbra{\m{f}_{:j}} = \sum_{j=1}^r c_j\m{a}\sbra{\m{f}_{:j}} = \m{y}$ and
\equ{\begin{split}
&\begin{bmatrix}t & \m{y}'^H \\ \m{y}' & \m{T}' \end{bmatrix} \\
&= \sum_{j=1}^r \begin{bmatrix}1 & c_j\m{a}'^H\sbra{\m{f}_{:j}} \\ c_j\m{a}'\sbra{\m{f}_{:j}} & \abs{c_j}^2\m{a}'\sbra{\m{f}_{:j}}\m{a}'^H\sbra{\m{f}_{:j}} \end{bmatrix} \\
&= \sum_{j=1}^r \begin{bmatrix}1 \\ c_j\m{a}'\sbra{\m{f}_{:j}} \end{bmatrix} \begin{bmatrix}1 \\ c_j\m{a}'\sbra{\m{f}_{:j}} \end{bmatrix}^H \\
&\geq \m{0}. \end{split} }
It follows that $r^*\leq \rank\sbra{\m{T}'} =r =\norm{\m{y}}_{\cA,0}$, where $r^*$ denotes the optimal solution of the problem in \eqref{eq:rankmin_atom0norm}.
On the other hand, when \eqref{eq:rankmin_atom0norm} achieves the optimal value $r^*$ at the optimizer $\sbra{t^*, \m{T}'^*,\m{y}'^*}$, we have that $\rank\sbra{\m{T}'^*}=r^*\leq r<\min_l n'_l$. It follows by Theorem \ref{thm:vanderdec} that $\m{T}'^*$ admits a unique Vandermonde decomposition as $\m{T}'^* = \sum_{j=1}^{r^*} \abs{c_j^*}^2\m{a}'\sbra{\m{f}^*_{:j}}\m{a}'^H\sbra{\m{f}^*_{:j}}$. Therefore, there exists $\m{s}^*$ such that $\m{y}'^* = \sum_{j=1}^{r^*} s_j^*\m{a}'\sbra{\m{f}^*_{:j}}$ since $\m{y}'^*$ lies in the range space of $\m{T}'^*$. It follows that $\m{y}=\m{y}'^*_{\m{\Omega}} = \sum_{j=1}^{r^*} s_j^*\m{a}\sbra{\m{f}^*_{:j}}$ and hence $\norm{\m{y}}_{\cA,0}\leq r^*$. So we conclude that $\norm{\m{y}}_{\cA,0}= r^*$.

When the objective function $\rank\sbra{ \m{T}'}$ is replaced by $\rank\sbra{\begin{bmatrix}t & \m{y}'^H \\ \m{y}' & \m{T}' \end{bmatrix}}$, the stated result follows by a similar argument.   In fact, in the first part of the proof, using the same constructed feasible solution we have that $r^* \leq \rank\sbra{\begin{bmatrix}t & \m{y}'^H \\ \m{y}' & \m{T}' \end{bmatrix}} = r =\norm{\m{y}}_{\cA,0}$. Then, in the second part of the proof, at the optimizer $\sbra{t^*, \m{T}'^*,\m{y}'^*}$, we have that $\rank\sbra{\m{T}'^*} \leq \rank\sbra{\begin{bmatrix}t^* & \m{y}'^{*H} \\ \m{y}'^* & \m{T}'^* \end{bmatrix}} =r^*\leq r<\min_l n'_l$. The same arguments can then be invoked to show that $\norm{\m{y}}_{\cA,0}\leq r^*$.
\end{proof}

By \eqref{eq:rankmin_atom0norm}, the problem in \eqref{eq:problem} can be written as the following rank minimization problem:
\equ{\begin{split}
\min_{t,\m{T}',\m{y}'} \rank\sbra{ \m{T}'}, \st \begin{bmatrix}t & \m{y}'^H \\ \m{y}' & \m{T}' \end{bmatrix} \geq \m{0},\; \m{y}'_{\m{\Omega}} \in \m{\cC}. \end{split} \label{eq:rankmin_problem}}
Once \eqref{eq:rankmin_problem} is solved, the estimate of $\m{y}$ is obtained as $\m{y}'_{\m{\Omega}}$, the frequencies can be retrieved from the Vandermonde decomposition of $\m{T}'$ which can be computed using the MaPP algorithm proposed in Section \ref{sec:MaPP}, and an atomic decomposition of $\m{y}$ follows naturally.

\begin{rem}
In practice, due to the possible absence of a tight upper bound $\overline r$ and/or computational considerations, we may also be interested in the following problem:
\equ{\min_{t,\m{T}} \rank\sbra{ \m{T}},\st \begin{bmatrix}t & \m{y}^H \\ \m{y} & \m{T} \end{bmatrix} \geq \m{0},  \label{eq:rankmin_atom0norm1}}
which corresponds to setting $n'_l=n_l$, $l=1,\dots,d$ in \eqref{eq:rankmin_atom0norm}.
Let $\sbra{t^*,\m{T}^*}$ be the optimal solution of \eqref{eq:rankmin_atom0norm1} with $r^*=\rank\sbra{\m{T}^*}$. Then, by the proof of Theorem \ref{thm:atomic0norm}, \eqref{eq:rankmin_atom0norm1} precisely characterizes $\norm{\m{y}}_{\cA,0}$ if $\m{T}^*$ admits a Vandermonde decomposition of order $r^*$, which is guaranteed by Theorem \ref{thm:vanderdec} if $r^*< \min_l n_l$. Otherwise, the existence of the Vandermonde decomposition of $\m{T}^*$ can be checked using the MaPP proposed in Section \ref{Sec:MaPP}. This provides a checking mechanism for determining whether the dimension-reduced problem in \eqref{eq:rankmin_atom0norm1} achieves $\norm{\m{y}}_{\cA,0}$. At the same time, it also provides an approach to frequency retrieval from the solution of \eqref{eq:rankmin_atom0norm1}. \label{rem:atom0norm_smalldim}
\end{rem}

\subsection{Solution via Convex Relaxation} \label{sec:CVX}
We have shown in Theorem \ref{thm:atomic0norm} that the atomic $\ell_0$ norm minimization is a rank minimization problem. For this type of problem the nuclear norm relaxation has been proven to be a practical and powerful tool \cite{recht2007guaranteed,tang2012compressed}. So we consider the following nuclear norm/trace minimization problem [by relaxing $\rank\sbra{\begin{bmatrix}t & \m{y}'^H \\ \m{y}' & \m{T}' \end{bmatrix}}$ to $\tr\sbra{\begin{bmatrix}t & \m{y}'^H \\ \m{y}' & \m{T}' \end{bmatrix}}=t+ \tr\sbra{\m{T}'}$]:
\equ{\begin{split}
&\min_{t,\m{T}',\m{y}'} t+ \tr\sbra{\m{T}'},\\
&\st \begin{bmatrix}t & \m{y}'^H \\ \m{y}' & \m{T}' \end{bmatrix} \geq \m{0},\; \m{y}'_{\m{\Omega}}=\m{y}. \end{split} \label{eq:tracemin}}
It turns out that \eqref{eq:tracemin} with \emph{appropriate choices} of $n'_l$, $l=1,\dots,d$ is nothing but (the dual problem of) the SDP formulation (up to a factor equal to $\frac{1}{2}$) of the atomic norm \cite{xu2014precise} defined as
\equ{\norm{\m{y}}_{\cA} \coloneqq \inf_{c_j\in\bC,\, \m{f}_{:j}\in\bT^d}\lbra{\sum_j \abs{c_j}:\; \m{y} = \sum_{j} c_j \m{a}\sbra{\m{f}_{:j}} }. }
The atomic norm was shown in \cite{chi2015compressive} to successfully recover the frequencies if they are \emph{sufficiently separated}. It is easy to show that the optimal objective value of \eqref{eq:tracemin} provides a lower bound for the atomic norm. However, the problem of choosing $n'_l$, $l=1,\dots,d$ such that \eqref{eq:tracemin} is guaranteed to characterize $\norm{\m{y}}_{\cA}$ is open. The paper \cite{xu2014precise} provides the only known checking mechanism, given the solution of \eqref{eq:tracemin}, but it involves a $d$D search over all frequencies at which the so-called dual polynomial achieves the maximum magnitude, and is hard to implement.

Using the Vandermonde decomposition of MLT matrices we can provide another checking mechanism as follows. It can be shown that \eqref{eq:tracemin} achieves $\norm{\m{y}}_{\cA}$ if the solution $\m{T}'^*$ admits a Vandermonde decomposition (see also \cite{chi2015compressive}). Therefore, similarly to what was said in Remark \ref{rem:atom0norm_smalldim}, a checking mechanism can be implemented by verifying whether $\rank\sbra{\m{T}'^*}<\min_l n'_l$ holds or otherwise finding a Vandermonde decomposition of $\m{T}'^*$ using MaPP. Compared to the dual polynomial method in \cite{xu2014precise}, this method requires less computations and is more practical.

\subsection{Solution via Reweighted Minimization}
Let
\equ{ \cM\sbra{\m{y}'} \coloneqq
\min_{t,\m{T}'} \rank\sbra{ \m{T}'}, \st \begin{bmatrix}t & \m{y}'^H \\ \m{y}' & \m{T}' \end{bmatrix} \geq \m{0}  \label{eq:rankmin_problem2}}
be a metric of $\m{y}'$. Then, \eqref{eq:rankmin_problem} can be rewritten as
\equ{\min_{\m{y}'} \cM\sbra{\m{y}'}, \st \m{y}'_{\m{\Omega}} \in \m{\cC}. \label{eq:rankmin_problem3}}
To approximately solve \eqref{eq:rankmin_problem3} [or \eqref{eq:rankmin_problem}, \eqref{eq:problem}], inspired by \cite{yang2014enhancing}, we propose a smooth surrogate for $\cM\sbra{\m{y}'}$:
\equ{\begin{split}
\cM^{\epsilon}\sbra{\m{y}'}= &\min_{t,\m{T}'} t + \ln\abs{\m{T}'+\epsilon \m{I}},\\
&\st \begin{bmatrix}t & \m{y}'^H \\ \m{y}' & \m{T}' \end{bmatrix} \geq \m{0}, \end{split} \label{eq:newsparsemetric}}
where $\epsilon>0$ is a regularization parameter.   In \eqref{eq:newsparsemetric}, $\ln\abs{\m{T}'+\epsilon\m{I}}$ is a smooth surrogate for $\rank\sbra{\m{T}'}$ in \eqref{eq:rankmin_problem2} and the additional term $t$ is included in the objective function to control the magnitude of $\m{T}'$ and avoid a trivial solution. Similarly to \cite{yang2014enhancing} we have the following result.

\begin{prop} Let $\epsilon\rightarrow 0$. Then, the following statements hold true:
\begin{enumerate}
\item If $\cM\sbra{\m{y}'}<N'\coloneqq\prod_{l=1}^d n_l'$, then
\equ{\cM^{\epsilon}\sbra{\m{y}}\sim \sbra{\cM\sbra{\m{y}'} - N'}\ln\frac{1}{\epsilon},}
i.e., $\lim_{\epsilon\rightarrow 0} \frac{\cM^{\epsilon}\sbra{\m{y}}} {\sbra{\cM\sbra{\m{y}'} - N'}\ln\frac{1}{\epsilon}} = 1$; otherwise, $\cM^{\epsilon}\sbra{\m{y}'}$ is a constant;
\item Let $\m{T}_{\epsilon}'^*$ be the optimizer of $\m{T}'$ in \eqref{eq:newsparsemetric}. Then, the smallest $N'-\cM\sbra{\m{y}'}$ eigenvalues of $\m{T}_{\epsilon}'^*$ are either zero or approach zero at least as fast as $\epsilon$, and any cluster point of $\m{T}_{\epsilon}'^*$ at $\epsilon=0$ has rank equal to $\cM\sbra{\m{y}'}$.
\end{enumerate} \label{prop:Mepsilony}
\end{prop}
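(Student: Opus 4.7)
The plan is to exploit the spectral identity $\ln|\m{T}'+\epsilon\m{I}|=\sum_{i=1}^{N'}\ln(\lambda_i(\m{T}')+\epsilon)$, which decouples the asymptotic contribution of each eigenvalue: a positive $\lambda_i$ bounded away from zero contributes $\ln\lambda_i+O(1)$ as $\epsilon\to 0$, whereas $\lambda_i=0$ contributes $\ln\epsilon=-\ln(1/\epsilon)$. The argument adapts the one used in the 1D case in \cite{yang2014enhancing}, since only the convex feasible set differs. Write $r\coloneqq\cM(\m{y}')$ throughout.

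For the upper bound in statement 1, I would pick an optimizer $(t^*,\m{T}'^*)$ of the rank problem \eqref{eq:rankmin_problem2}; it is feasible for \eqref{eq:newsparsemetric} and yields
\equ{\cM^\epsilon(\m{y}')\le t^*+\!\sum_{\lambda_i(\m{T}'^*)>0}\!\ln(\lambda_i(\m{T}'^*)+\epsilon)+(N'-r)\ln\epsilon=(\cM(\m{y}')-N')\ln(1/\epsilon)+O(1).}
For $r<N'$ this is the asymptotic upper bound; for $r=N'$ the $(N'-r)\ln\epsilon$ term vanishes and the surrogate is bounded above by a finite constant, handling the ``otherwise'' case.

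The main obstacle is the matching lower bound and statement 2, which I would address by a compactness argument. Let $(t_\epsilon^*,\m{T}_\epsilon'^*)$ be an optimizer of \eqref{eq:newsparsemetric}, with ordered eigenvalues $\lambda_1(\epsilon)\ge\cdots\ge\lambda_{N'}(\epsilon)\ge 0$. The most delicate step is extracting a genuine cluster point $(t_0^*,\m{T}_0'^*)$ along a subsequence $\epsilon_k\to 0$: divergence of any eigenvalue is ruled out by combining the above upper bound with the trivial lower bound $\ln(\lambda_j+\epsilon)\ge\ln\epsilon$ on the remaining eigenvalues (arguing as in \cite{yang2014enhancing}). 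The cluster point inherits the block PSD constraint and is thus feasible for \eqref{eq:rankmin_problem2}, so $r_0\coloneqq\rank(\m{T}_0'^*)\ge r$ by the definition of $\cM$.

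With this cluster point, the $r_0$ largest eigenvalues of $\m{T}_{\epsilon_k}'^*$ stay bounded away from zero and contribute $O(1)$ to $\ln|\m{T}_{\epsilon_k}'^*+\epsilon_k\m{I}|$, while the remaining $N'-r_0$ eigenvalues satisfy $\ln(\lambda_i(\epsilon_k)+\epsilon_k)\ge\ln\epsilon_k$. Summing gives
\equ{\cM^{\epsilon_k}(\m{y}')\ge-(N'-r_0)\ln(1/\epsilon_k)+O(1)\ge(r-N')\ln(1/\epsilon_k)+O(1),}
which together with the upper bound completes statement 1. Statement 2 is then forced by tightness of this asymptotic: $r_0>r$ would make the lower bound strictly exceed the upper bound in leading order, so $r_0=r$ at every cluster point; and if some of the $N'-r$ smallest eigenvalues decayed strictly slower than $\epsilon$ along a subsequence, each such eigenvalue would contribute a log-det term strictly larger than $-\ln(1/\epsilon_k)+O(1)$, again contradicting the tight asymptotic and thus forcing the $O(\epsilon)$ decay claimed.
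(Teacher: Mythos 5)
Your overall architecture (feasible-point upper bound, cluster-point lower bound, then squeezing to obtain both statements) is the same as the paper's, and your treatment of statement 2 reproduces the paper's refinement $\ln\sbra{\lambda_{\epsilon,j}/\epsilon+1}=O(1)$. However, there is a genuine gap at the step you yourself flag as the most delicate: the extraction of a cluster point. The mechanism you give for ruling out divergence of the eigenvalues --- combining the upper bound $\cM^{\epsilon}\sbra{\m{y}'}\le\sbra{N'-r}\ln\epsilon+O(1)$ with the trivial bound $\ln\sbra{\lambda_j+\epsilon}\ge\ln\epsilon$ on the remaining terms --- does not work. It only yields $\ln\sbra{\lambda_1(\epsilon)+\epsilon}\le\sbra{N'-r}\ln\epsilon-\sbra{N'-1}\ln\epsilon+O(1)=\sbra{1-r}\ln\epsilon+O(1)$, i.e.\ $\lambda_1(\epsilon)=O\sbra{\epsilon^{-(r-1)}}$, which is compatible with divergence whenever $r=\cM\sbra{\m{y}'}\ge 2$. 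Without boundedness you cannot pass to a convergent subsequence, the limit $\m{T}_0'^*$ that is supposed to be feasible for \eqref{eq:rankmin_problem2} need not exist, and the entire lower-bound/rank argument collapses.

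The paper closes this with a variational argument that your proposal is missing (its Step 1). Writing the eigen-decomposition $\m{T}_{\epsilon}'^*=\sum_j\lambda_{\epsilon,j}\m{q}_{\epsilon,j}\m{q}_{\epsilon,j}^H$ and setting $\overline{p}_{\epsilon,j}=\abs{\m{q}_{\epsilon,j}^H\m{y}'}^2$, the Schur-complement form of the constraint gives $t_{\epsilon}^*=\sum_j\overline{p}_{\epsilon,j}/\lambda_{\epsilon,j}$ at the optimum, and first-order optimality in each $\lambda_{\epsilon,j}$ forces $\overline{p}_{\epsilon,j}=\lambda_{\epsilon,j}^2/\sbra{\lambda_{\epsilon,j}+\epsilon}\in\sbra{\lambda_{\epsilon,j}-\epsilon,\,\lambda_{\epsilon,j}}$. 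Since $\sum_j\overline{p}_{\epsilon,j}\le\twon{\m{y}'}^2$, this yields $\tr\sbra{\m{T}_{\epsilon}'^*}\le\twon{\m{y}'}^2+N'$ and $t_{\epsilon}^*\le N'$ uniformly for $\epsilon\in\sbra{0,1}$ --- exactly the boundedness your compactness step requires. With that ingredient inserted, the rest of your outline (cluster point feasible for the rank problem, hence $r_0\ge r$; $r_0>r$ contradicting the upper bound in leading order; $O(\epsilon)$ decay of the trailing eigenvalues) matches the paper's Steps 2--4 and is sound.
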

\begin{proof} See Appendix \ref{append:Mepsilony}.
\end{proof}

Next, we consider the minimization of $\cM^{\epsilon}\sbra{\m{y}'}$. Therefore, rather than directly solving \eqref{eq:rankmin_problem}, we replace the objective function in \eqref{eq:rankmin_problem} by that in \eqref{eq:newsparsemetric}. Following the developments in \cite{yang2014enhancing}, a locally convergent iterative algorithm can be derived in which the $j$th iterate of $\m{T}'$, denoted by $\m{T}'_{j}$, is obtained as the optimizer of the following \emph{weighted trace minimization} problem:
\equ{\begin{split}
&\min_{t,\m{T}',\m{y}'} t + \tr\mbra{\sbra{\m{T}'_{j-1}+\epsilon_{j-1} \m{I}}^{-1}\m{T}'},\\
&\st \text{the constraints in \eqref{eq:rankmin_problem}}. \end{split} }
Here $\lbra{\epsilon_j>0:\; j\geq1}$ is a monotonically decreasing sequence. The resulting algorithm is designated as reweighted trace minimization (RWTM). Let $\m{T}'_{0}=\m{0}$ and $\epsilon_0=1$. Then the first iteration is exactly the convex relaxation method introduced in Section \ref{sec:CVX} (see \eqref{eq:tracemin}). The iterative reweighted process has the potential of enhancing sparsity and resolution (see the 1D case results in \cite{yang2014enhancing}). Note that a practical implementation of RWTM will trade off performance for computation time by keeping the number of iterations small.

%

\section{Numerical Simulations} \label{sec:simulation}

\begin{figure}
\centering
  \includegraphics[width=3.5in]{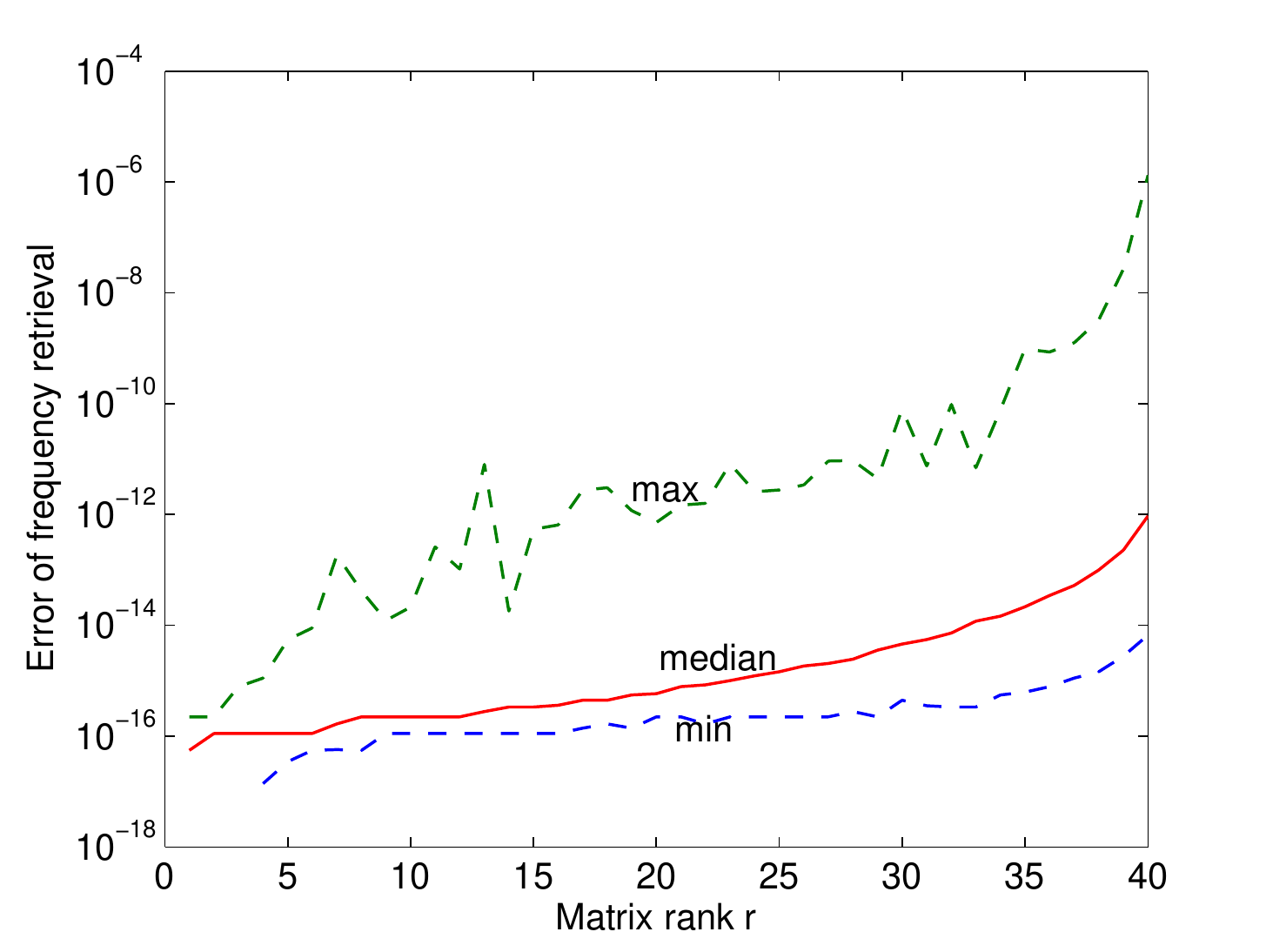}
\centering
\caption{The maximum, median and minimum errors of frequency retrieval in the Vandermonde decomposition result using MaPP. 1000 Monte Carlo runs are carried out for each $r$.} \label{Fig:VanDec}
\end{figure}

\subsection{Vandermonde Decomposition}

We numerically study the performance of the proposed MaPP algorithm for finding the Vandermonde decomposition. In particular, we consider a 2D case with $\m{n}=\sbra{6,8}$ and let $r=1,\dots,N=48$. In each problem instance, $2r$ frequencies are uniformly generated at random in $\bT$ and from them we form $r$ 2D frequencies $\m{f}_{:j}$, $j=1,\dots,r$. The power parameters $p_j$, $j=1,\dots,r$ are generated as $w^2+0.5$, where $w$ has a standard normal distribution. Then the 2LT matrix $\m{T}$ is obtained as in \eqref{eq:vanderdec}. After that, MaPP is used to find a Vandermonde decomposition of $\m{T}$ of order $r$. The error of frequency retrieval is measured as the maximum absolute error of the $2r$ frequencies. For each $r$, 1000 Monte Carlo runs are carried out. Note that MaPP only works when $r\leq N-\frac{N}{\text{min}\sbra{n_1,n_2}}=40$ since otherwise the matrix pencil will have eigenvalues equal to infinity. The simulation result for $r\leq 40$ is presented in Fig. \ref{Fig:VanDec}. Consistently with Theorem \ref{thm:algsucc} and Proposition \ref{prop:asshold}, it can be seen that for $r\leq 40$ MaPP can always retrieve the frequencies and find the Vandermonde decomposition within numerical precision. We observed that a relatively large numerical error occurs in the presence of closely spaced frequencies. Moreover, the numerical errors propagate quickly as $r$ gets close to 40.

\begin{figure*}
\centering
  \subfigure[]{
    \label{Fig:illus_RWTM}
    \includegraphics[width=3.5in]{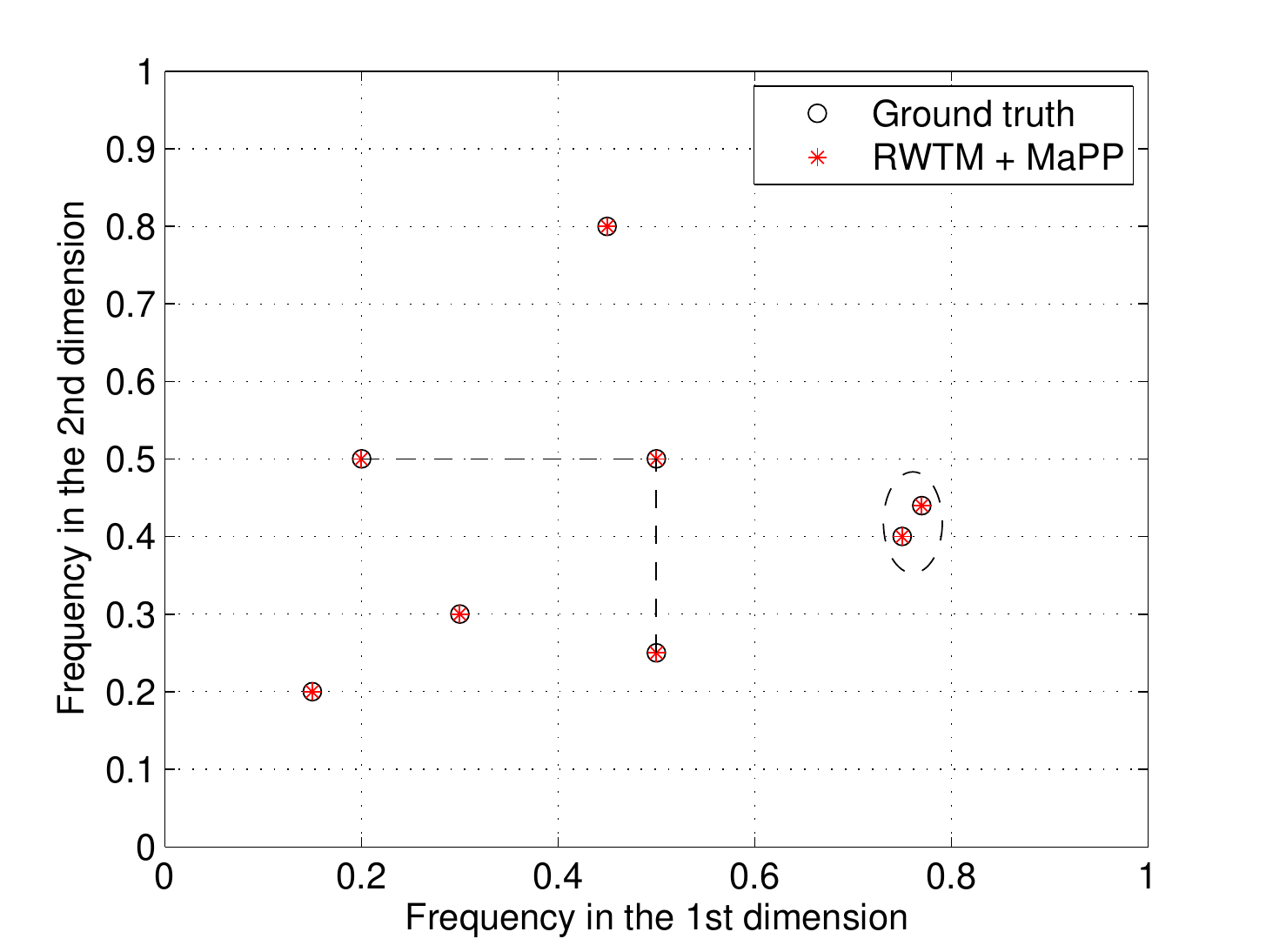}}%
  \subfigure[]{
    \label{Fig:illus_eigenvalue}
    \includegraphics[width=3.5in]{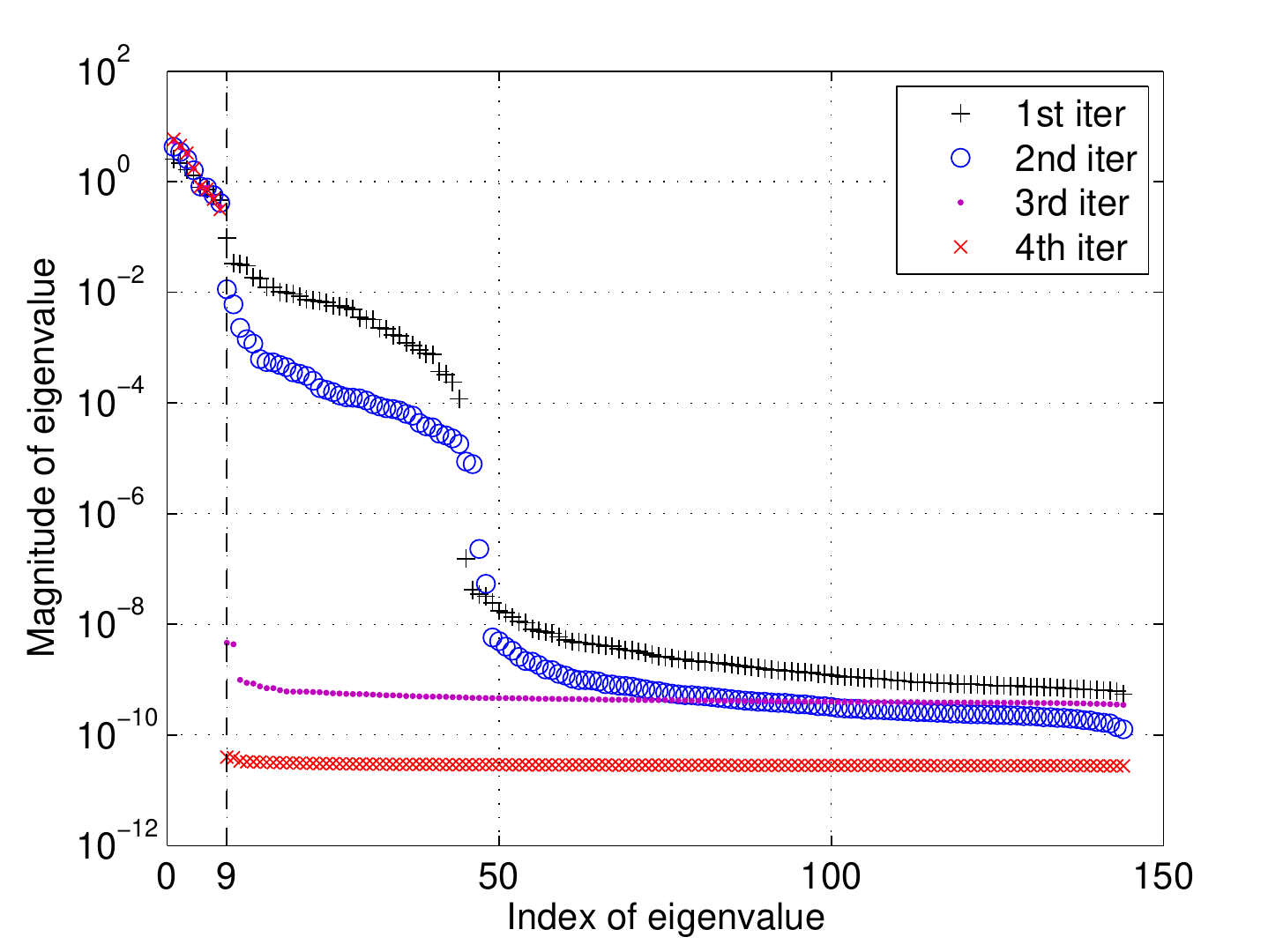}}
  \subfigure[]{
    \label{Fig:illus_contour_CVX}
    \includegraphics[width=3.5in]{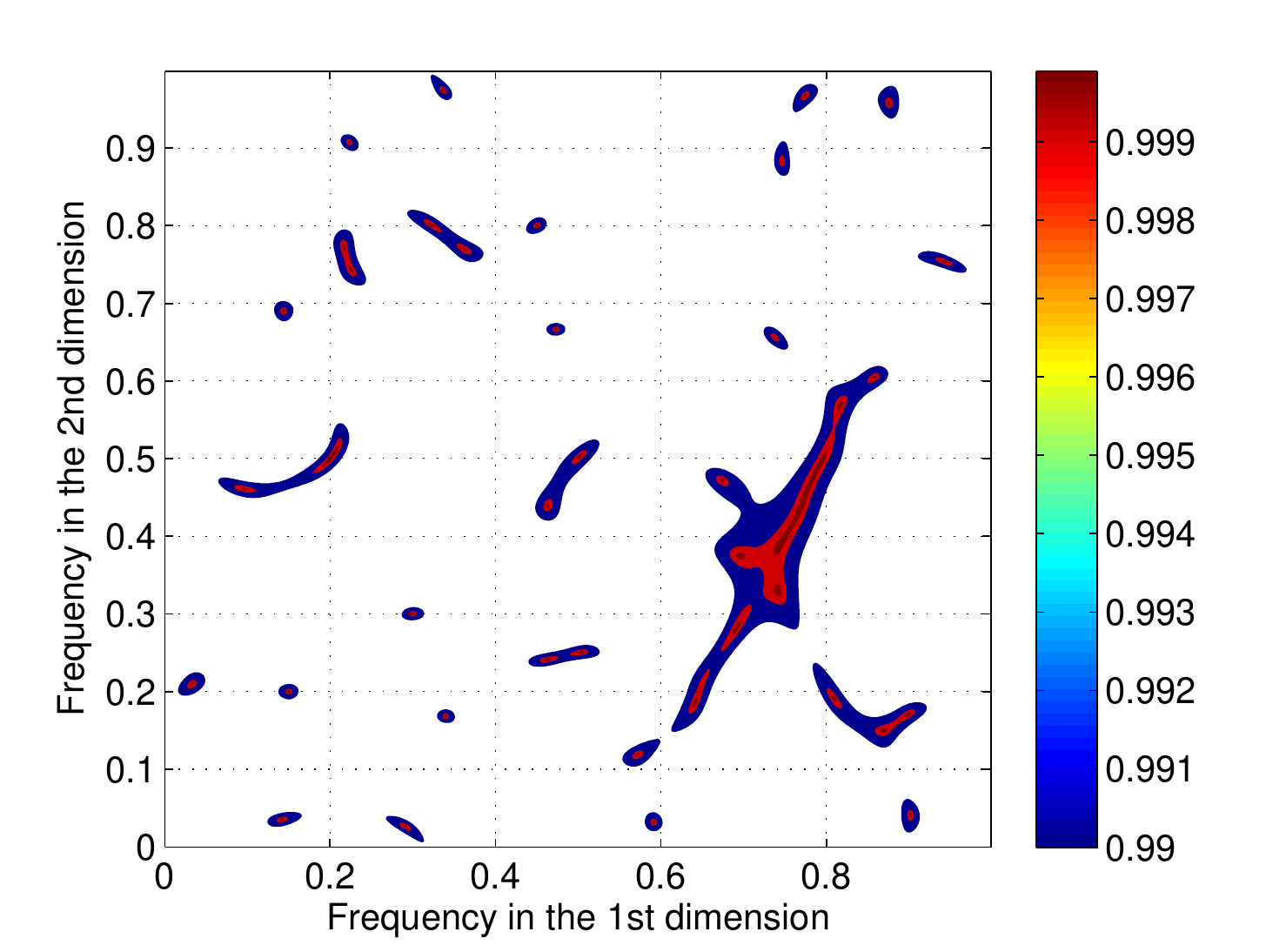}}%
  \subfigure[]{
    \label{Fig:illus_support_CVX}
    \includegraphics[width=3.5in]{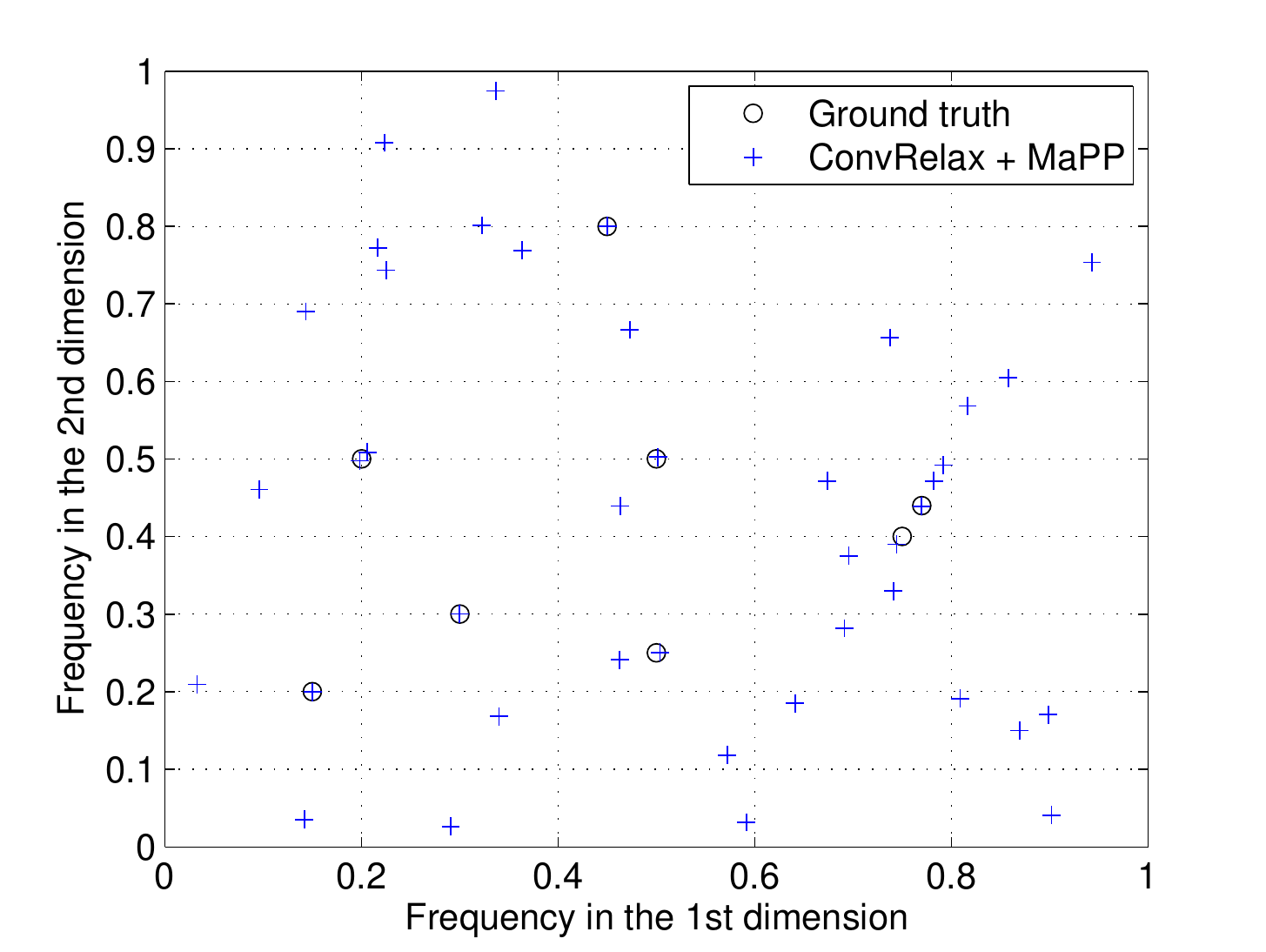}}
\centering
\caption{Super-resolution results in an illustrative example. (a) Ground truth and the frequencies recovered by RWTM + MaPP. (b) Eigenvalues of the iterates of $\m{T}'$. (c) Contour plot of the magnitude of the dual polynomial of ConvRelax. (d) The frequencies recovered by ConvRelax + MaPP (at the first iteration).} \label{Fig:simulation}
\end{figure*}

\subsection{Super-Resolution}
RWTM is implemented in Matlab and the involved SDPs are solved using SDPT3 \cite{toh1999sdpt3}. We set $\m{T}'_0=\m{0}$ and $\epsilon_0=1$, and so the first iteration of RWTM coincides with the convex relaxation method in Section \ref{sec:CVX}. We set $\epsilon_1$ to be equal to $0.1$ times the largest eigenvalue of $\m{T}'_1$ and let
\equ{\epsilon_j = \left\{ \begin{array}{ll}\frac{1}{2}\epsilon_{j-1}, & j=2,\dots,8; \\ \epsilon_8, & j>8.\end{array}\right.}
RWTM is terminated if the $\ell_2$ norm of the relative change of $\m{y}'$ at two consecutive iterations is lower than $10^{-6}$ or the maximum number of iterations, set to 20, is reached. Given the $\m{T}'$ obtained with RWTM, MaPP is used to retrieve the frequency estimates by finding its Vandermonde decomposition.

We first present an illustrative example to demonstrate the effectiveness of the proposed RWTM and frequency retrieval methods. The true values of eight ($r=8$) 2D frequencies are plotted in Fig. \ref{Fig:illus_RWTM}. Two of them share a common frequency value in the first dimension; two share a common value in the second dimension; and another two are closely located with a Euclidean distance of about $0.045$ (as indicated by the black dashed lines or circle). A number of 50 randomly located noiseless measurements are collected from $N=10\times 10$ uniform samples, with the complex amplitudes of the 2D sinusoids being randomly generated from a standard complex normal distribution. Note that the two closely located frequencies are separated by only $\frac{0.45}{n_1}$ which is much smaller than the resolution limit condition in \cite{candes2013towards,chi2015compressive} for the atomic norm method. Assume we know that $r< \overline r = 12$ and so let $n_1'=n_2'=12$. We use RWTM and MaPP to estimate the sinusoidal signal and the frequencies.
%

The RWTM algorithm ends in four iterations. The simulation results are presented in Fig. \ref{Fig:simulation}. It can be seen from Fig. \ref{Fig:illus_eigenvalue} that RWTM gradually reduces the rank of $\m{T}'$ and finally produces a solution of rank 8. From this low-rank solution, the proposed MaPP algorithm successfully retrieves the true frequencies as shown in Fig. \ref{Fig:illus_RWTM}. The estimation errors of the frequencies (in $\ell_2$ norm) are on the order of $10^{-11}$. It takes between 9.5s and 14.8s to run one iteration of RWTM.

It is worth noting that the convex relaxation method, designated by ConvRelax, at the first iteration does not produce a sufficiently low-rank $\m{T}'$. Consequently, it cannot correctly recover the frequencies and the signal. Fig. \ref{Fig:illus_contour_CVX} plots the magnitude of the dual polynomial whose peaks of magnitude 1 are used to identify the frequency poles. Note that in this example many frequency poles are present and a continuous band with magnitude greater than $0.9999$ is present around the two closely located true frequencies. It is a challenging task to accurately locate all these peaks using a 2D search, and therefore it is difficult for the checking mechanism in \cite{xu2014precise} to determine whether ConvRelax realizes the atomic norm method.

Next, we turn to the new checking mechanism proposed in Section \ref{sec:CVX} based on finding a Vandermonde decomposition of $\m{T}'$. In this example, we count the eigenvalues of $\m{T}'$ above a threshold of $10^{-6}$, which gives the number 44 that is used as the matrix rank in MaPP. After that, the proposed MaPP algorithm is used to find a Vandermonde decomposition of order 44 with a relative error, measured by $\frac{\frobn{\m{T}' - \sum_{j=1}^{44} p_j \m{a}\sbra{\m{f}_{:j}} \m{a}^H\sbra{\m{f}_{:j}} }^2} {\frobn{\m{T}'}^2}$, on the order of $10^{-12}$. So we can conclude that ConvRelax indeed realizes the atomic norm method. The frequencies in the Vandermonde decomposition are presented in Fig. \ref{Fig:illus_support_CVX}, which well match the peaks of the dual polynomial shown in Fig. \ref{Fig:illus_contour_CVX}.

In the following simulation we study the sparse recovery capabilities of RWTM and ConvRelax in terms of sparsity-separation phase transition that was first introduced in \cite{yang2014enhancing}. We consider a 2D case with $\m{n}=\sbra{8,8}$. In each Monte Carlo run, a number of $32$ randomly located noiseless measurements are collected from the $N=64$ uniform samples, with the complex amplitudes of the 2D sinusoids being randomly generated from a standard complex normal distribution. The number of sinusoids $r$ that we consider ranges from 1 to 16. Any two 2D frequencies are separated (in $\ell_{\infty}$ norm following \cite{candes2013towards,chi2015compressive}) by at least $\Delta_f$ which takes on the values $0,\frac{0.1}{n_1},\dots,\frac{2}{n_1}$. To randomly generate a set of 2D frequencies, a new 2D frequency is randomly generated at one time and added to the set if the separation condition is satisfied and the process is repeated until $r$ frequencies are obtained. For each pair $\sbra{\Delta_f, r}$, 20 data instances are generated and the signal and its frequencies are estimated using RWTM and ConvRelax. For speed consideration we set $\m{n}'=\m{n}$ in both RWTM and ConvRelax. Successful recovery is declared if the relative root mean squared error (RMSE) of signal recovery is less than $10^{-6}$ and the error of frequency recovery (in $\ell_{\infty}$ norm) is less than $10^{-6}$ (our experience suggests that these two conditions are satisfied or violated jointly).

\begin{figure*}
\centering
  \subfigure[]{
    \label{Fig:succrate_CVX}
    \includegraphics[width=3.5in]{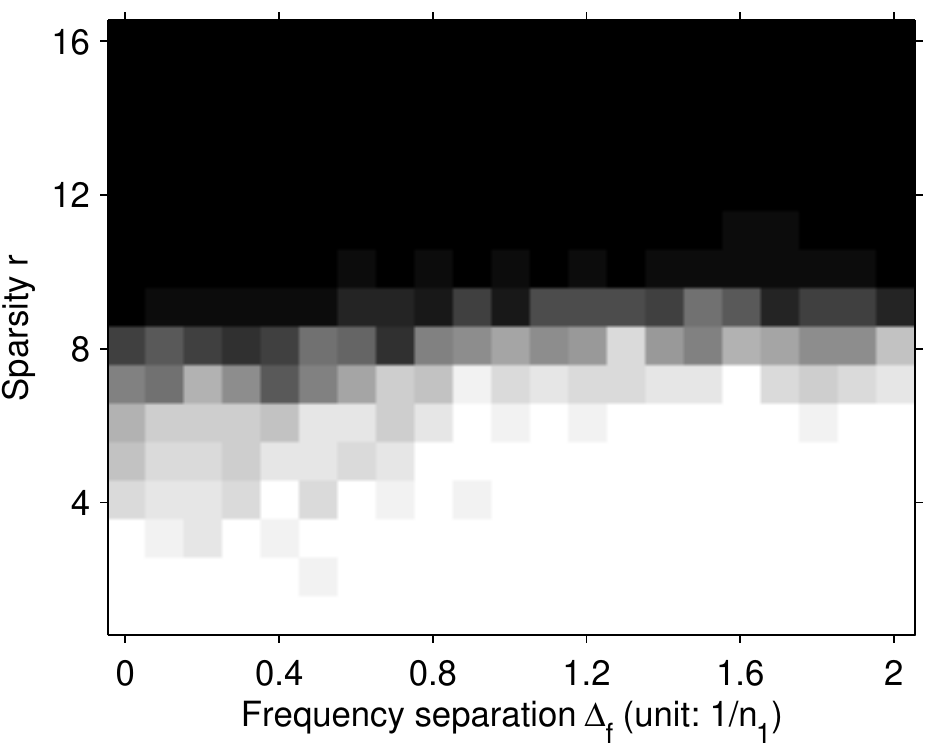}}%
  \subfigure[]{
    \label{Fig:succrate_RWTM}
    \includegraphics[width=3.5in]{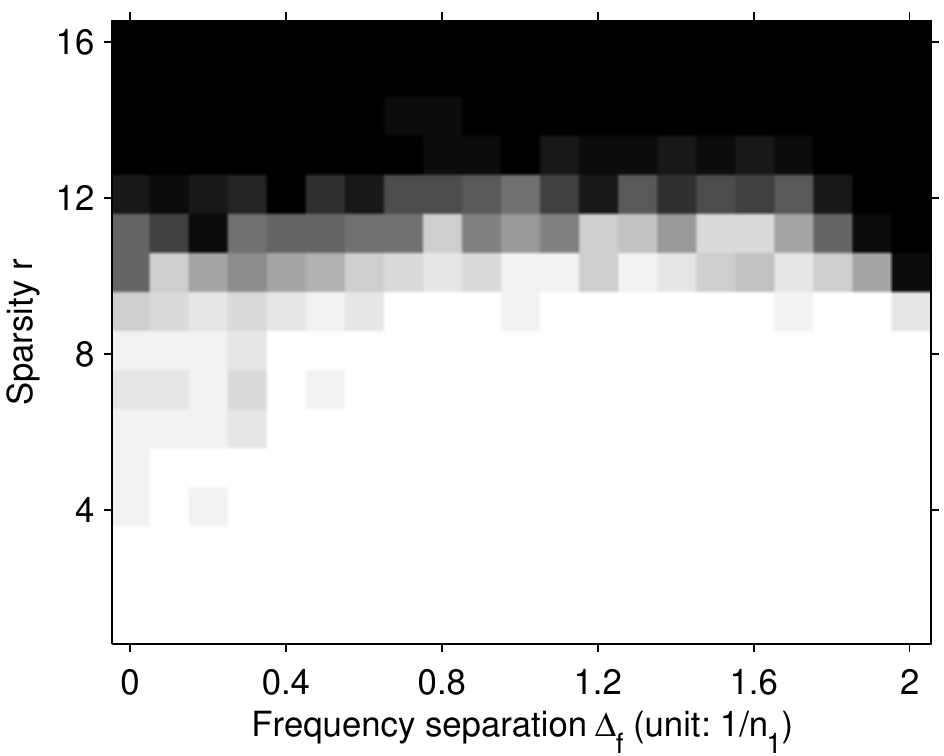}}
  \subfigure[]{
    \label{Fig:succrate_RWTMnotCVX}
    \includegraphics[width=3.5in]{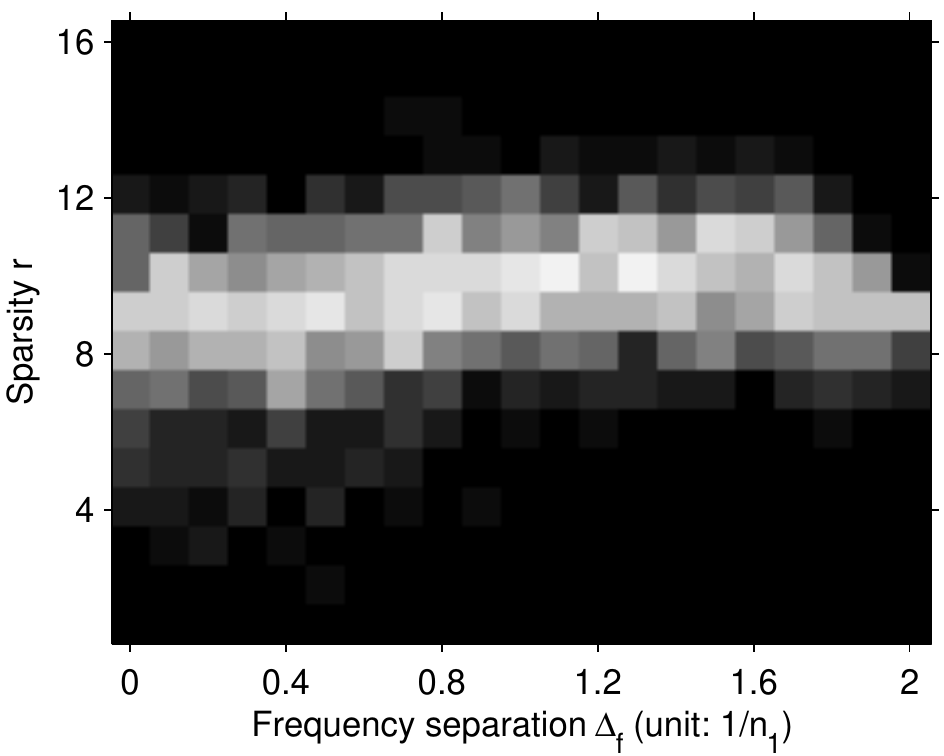}}%
  \subfigure[]{
    \label{Fig:succrate_genfreq}
    \includegraphics[width=3.5in]{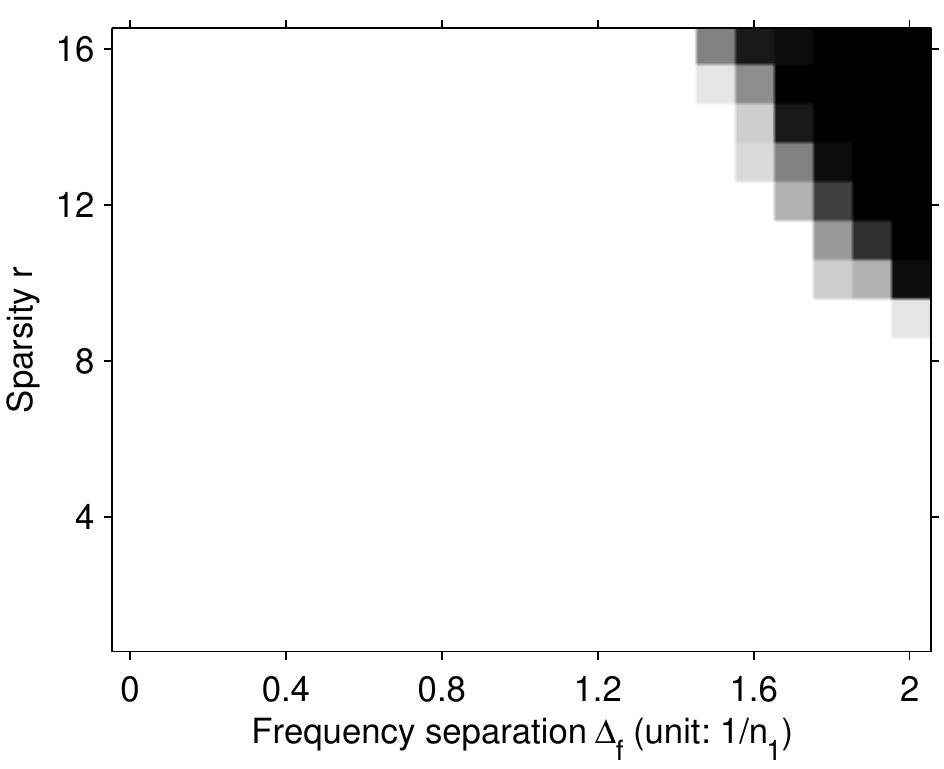}}
\centering
\caption{Super-resolution results. (a) Success rates of ConvRelax. (b) Success rates of RWTM. (c) The percentage of the total number of cases in which RWTM succeeds but ConvRelax fails. (d) Success rates of generating the set of 2D frequencies. Here white means complete success while black means complete failure.} \label{Fig:phasetrans}
\end{figure*}

The simulation results are presented in Fig. \ref{Fig:phasetrans}. In particular, Fig. \ref{Fig:succrate_CVX} and Fig. \ref{Fig:succrate_RWTM} plot, respectively, the success rates of ConvRelax and RWTM, where white means complete success and black means complete failure. Both methods can recover the signal and the frequencies in the regime of few sinusoids and large frequency separation, leading to phase transitions in the sparsity-separation plane. The phase transitions are not sharp, as in \cite{yang2014enhancing}, since the frequencies are separated by \emph{at least} $\Delta_f$ and well separated frequencies can still be obtained for small values of $\Delta_f$. RWTM clearly has an larger success region compared to ConvRelax. To illustrate this more clearly, we plot in Fig. \ref{Fig:succrate_RWTMnotCVX} the percentage of the total number of cases in which RWTM succeeds but ConvRelax fails. It can be seen that a significant number of the generated problems are of such a type, and that they are concentrated in the regime of median sparsity and/or small frequency separation. So, compared to ConvRelax, RWTM has improved sparsity recovery capability and enhanced resolution. Finally, it should be noted that not all problem instances are readily generated. To be specific, the frequency set is hard to generate in a reasonable amount of time in the regime of large $r$ and large $\Delta_f$ (see Fig. \ref{Fig:succrate_genfreq}). ConvRelax and RWTM have low success rates in this regime partly due to this reason.

\begin{figure*}
\centering
  \subfigure[]{
    \label{Fig:comp_MSE1}
    \includegraphics[width=3.5in]{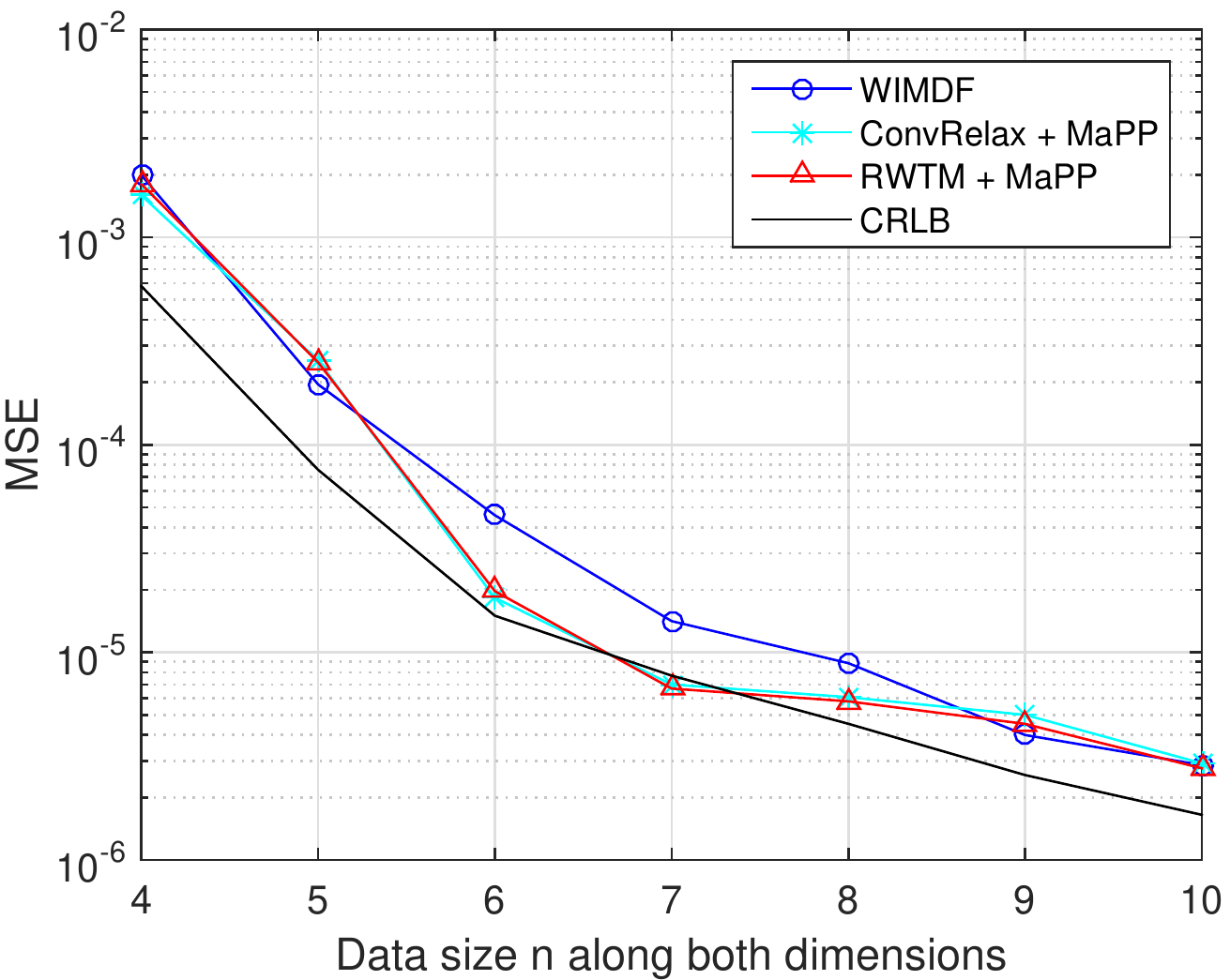}}%
  \subfigure[]{
    \label{Fig:comp_MSE_compressive}
    \includegraphics[width=3.5in]{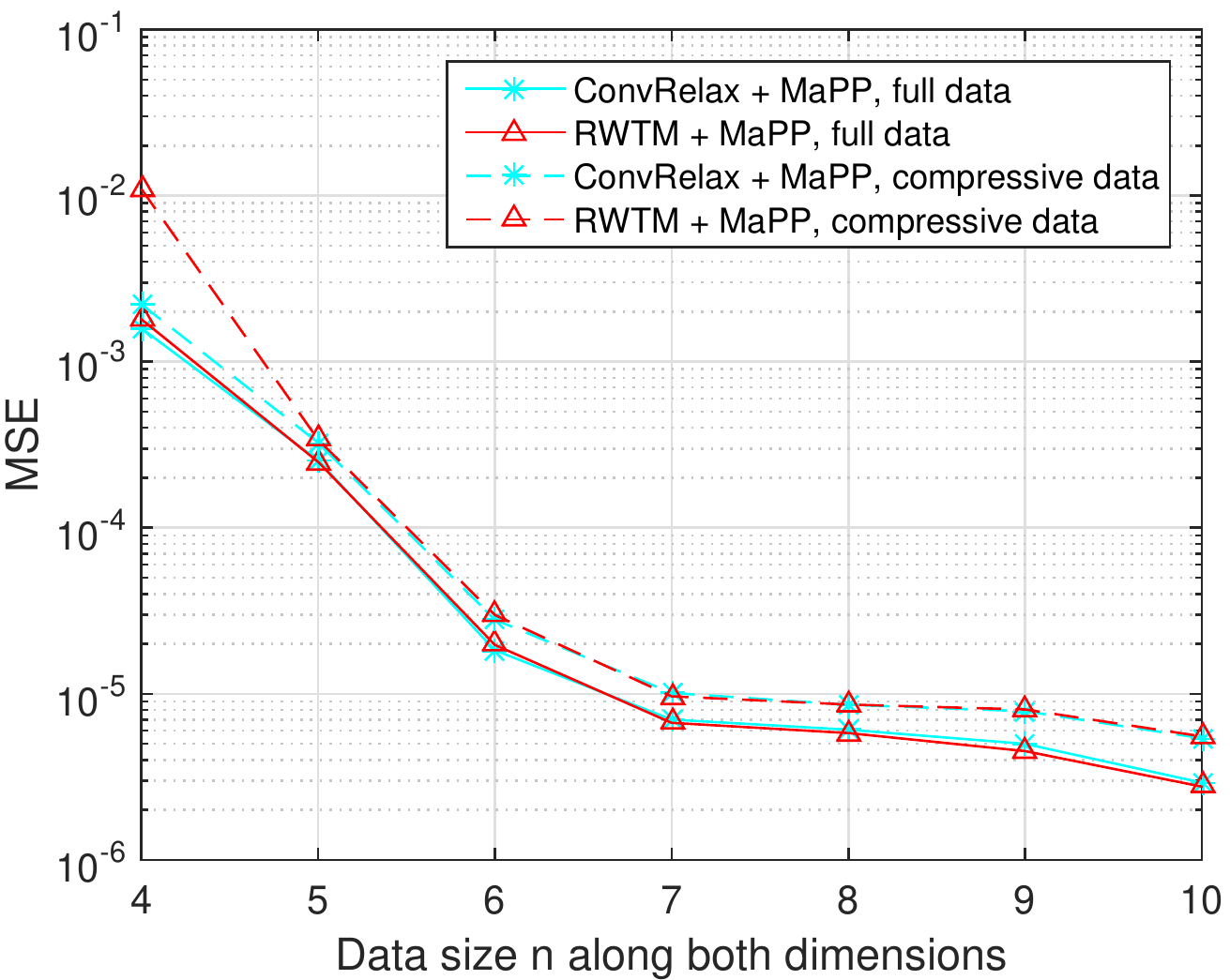}}
\centering
\caption{MSE results of 2D frequency estimation in the noisy case. (a) The full data case in which $n\times n$ uniform samples are acquired. (b) The compressive data case in which $80\%$ of the data samples are used to estimate the frequencies.} \label{Fig:comp_MSE}
\end{figure*}

  Finally, we consider the noisy case and compare the proposed super-resolution methods with a conventional subspace method. We select the weighted improved multidimensional folding (WIMDF) algorithm in \cite{liu2007multidimensional} as a benchmark. Note that the WIMDF algorithm (as in fact other subspace methods) does not work in the compressive data case and thus it is only considered in the full data case. In contrast to this, the proposed RWTM and ConvRelax methods are also considered in a compressive data case in which $80\%$ of the data samples (randomly selected) are used for frequency estimation. The proposed methods are implemented as in the noiseless case but with the feasible set $\m{\cC}$ in \eqref{eq:problem} re-defined as
\equ{\m{\cC} = \lbra{\m{y}:\; \twon{\m{z} - \m{L}\m{y}}^2\leq \eta^2}, \label{eq:Cwithnoise}}
where $\m{z}\in\bC^{M}$ denotes the vector of noisy measurements, $\m{L}\in\lbra{0,1}^{M\times N}$ is the selection matrix representing the sampling scheme, $M$ is the sample size, and $\eta^2$ is an upper bound on the noise energy. Note that $\m{L}$ is the identity matrix and $M=N$ in the full data case. We also note that RWTM will be terminated in five iterations.

In this simulation, we consider a 2D case with three ($r=3$) sinusoids that have frequencies $\sbra{0.25, 0.55}$, $\sbra{0.45, 0.55}$ and $\sbra{0.45, 0.35}$ and amplitudes $e^{i0.793\pi}$, $e^{i0.385\pi}$ and $e^{i0.076\pi}$, respectively. We let $n_1=n_2=n$ and consider different values of $n$ ranging from 4 to 10. We add white complex Gaussian noise to the $n\times n$ uniform samples and let the noise variance be $\sigma^2=\frac{0.1}{n}$ so that the signal to noise ratio (SNR) of the samples acquired following the data model in \eqref{eq:paramodel2} is approximately constant (in our simulation, the averaged SNR for each $n$ was between $14.6$ and $15.3$ dB). We set $\eta^2=\sbra{M+2\sqrt{M}}\sigma^2$ (i.e., mean + twice standard deviation) to upper bound the noise energy with high probability. This means that the noise variance is assumed to be known in the proposed methods while, somewhat similarly, WIMDF assumes that the number of frequencies $K$ is known. Since the proposed methods might produce spurious frequencies, the strongest three frequency components are used as the frequency estimates based on which the mean squared error (MSE) is computed (over 100 Monte Carlo runs). Regarding this aspect we note that RWTM + MaPP rarely overestimates the number of frequencies: this happened in only 4 out of 1400 Monte Carlo runs in our simulation (mainly due to the fact that the true noise energy can exceed $\eta^2$); on the other hand, ConvRelax + MaPP produced spurious frequencies with relatively weak powers in about $40\%$ of the runs.

\begin{figure*}
\centering
    \includegraphics[width=7in]{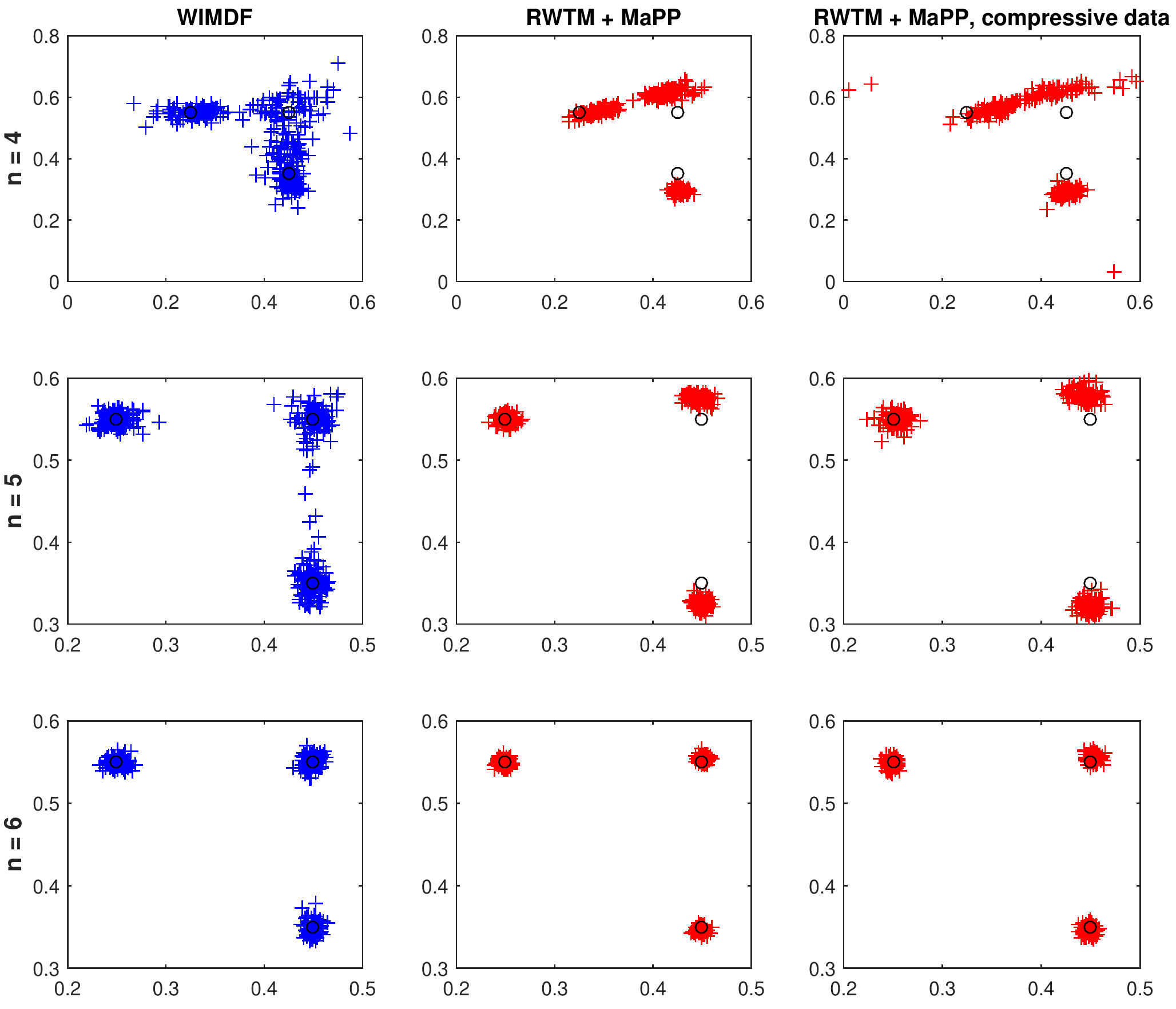}
\centering
\caption{2D frequency estimation results in the noisy case. The true frequencies are indicated using black circles while their estimates are shown as blue/red `+'. 1st row: $n=4$; 2nd row: $n=5$; 3rd row: $n=6$. 1st column: WIMDF with full data; 2nd column: RWTM + MaPP with full data; 3rd column: RWTM + MaPP with compressive ($80\%$) data.} \label{Fig:comp_freqest}
\end{figure*}

The simulation results on the MSE of frequency estimation are presented in Fig. \ref{Fig:comp_MSE}. We first note that the MSE curves of RWTM and ConvRelax almost coincide with each other. This implies that in this example ConvRelax + MaPP can accurately localize the true frequencies by selecting the strongest frequency components and the main advantage of RWTM + MaPP is to eliminate the spurious frequencies that ConvRelax + MaPP produces. In the left subfigure, the proposed methods are compared with WIMDF as well as the Cramer-Rao lower bound (CRLB), which is computed following \cite{liu2007multidimensional}, in the full data case. It can be seen that in all the scenarios considered RWTM + MaPP is either comparable with or better than WIMDF. For $n=6$ and $n=7$, compared to WIMDF, the MSE improvement of RWTM + MaPP is more than 3 dB. The right subfigure shows that for the proposed super-resolution methods $20\%$ data loss results in only a small degradation of 1.3 to 3 dB for RWTM + MaPP when $n\geq 5$. For $n=4$, the performance loss for RWTM + MaPP is larger because in some runs (14 out of 100) RWTM + MaPP can only detect two sinusoids. In contrast, WIMDF cannot work at all in the compressive data case.

We next show the frequency estimates of WIMDF and RWTM + MaPP in Fig. \ref{Fig:comp_freqest} (the results for ConvRelax + MaPP are omitted due to the presence of several spurious frequencies). For $n=4$, while WIMDF and RWTM + MaPP are comparable in terms of MSE, see Fig. \ref{Fig:comp_MSE}, it can be seen in Fig. \ref{Fig:comp_freqest} that the three frequencies can be more clearly separated by RWTM + MaPP (with a small bias though). The same behavior can be observed for $n=5$ as well. It can also be seen from this figure that for RWTM + MaPP the $20\%$ data loss causes only a small performance degradation for $n=5$ and $n=6$.

Since WIMDF is a subspace method and its main computations include a singular value decomposition and a small-scale optimization problem for parameter tuning, it is very fast in practice and needs about 0.1s on average for a single run. Because RWTM + MaPP adopts a more sophisticated optimization method, it requires 3 to 38s on average when $n$ increases from 4 to 10, while the computational time of ConvRelax + MaPP is about 1/5 of that required by RWTM + MaPP. The increased computational cost of RWTM + MaPP and ConvRelax + MaPP can be justified by the fact that, unlike WIMDF, they can be applied to the compressive data case. It is also worth noting that the proposed methods can be accelerated using faster solvers for SDP, e.g., the ADMM algorithm \cite{boyd2011distributed} (see \cite{bhaskar2013atomic, yang2015gridless,yang2014enhancing} for examples in the 1D case).

\section{Conclusion} \label{sec:conclusion}
In this paper, the Vandermonde decomposition of Toeplitz matrices was generalized from the 1D to the MD case under a rank condition. When this condition is not satisfied a numerical approach was also proposed for finding a possible decomposition. The new results were used to study the MD super-resolution problem and practical algorithms were proposed. Extensive numerical simulations were provided to validate our theoretical findings and demonstrate the effectiveness of the proposed super-resolution methods.

The result on Vandermonde decomposition presented in this paper is closely related to operator theory and structured linear algebra. Its application to these areas would be of interest. When the matrix rank is high, the question on existence of the Vandermonde decomposition is still open, which should also be studied in the future.

\section*{Acknowledgement}
We would like to thank Dr. Fredrik Andersson and Dr. Marcus Carlsson of Lund University, Sweden, for helpful discussions on the proof of Lemma \ref{lem:Tdiag}.

\appendix

\subsection{Proof of Lemma \ref{lem:Tdiag}} \label{append:Tdiag}
To prove Lemma \ref{lem:Tdiag} for $d=1$, we will use the following result.
\begin{lem} Consider a Hankel matrix $\m{H}\in\bC^{n\times n}$ defined as
\equ{\m{H} = \begin{bmatrix}h_1 & h_2 & \dots & h_n \\ h_2 & h_3 & \dots & h_{n+1} \\ \vdots & \vdots & \ddots & \vdots \\ h_n & h_{n+1} & \dots & h_{2n-1} \end{bmatrix}.}
If $\m{H}$ can be written as
\equ{\m{H} = \m{A}\sbra{\m{f}} \m{C} \m{A}^T\sbra{\m{f}}, \label{eq:HABAT}}
where $\m{C}\in\bC^{r\times r}$, $r<n$, and $f_j$, $j=1,\dots,r$ are distinct points in $\bT$, then $\m{C}$ must be a diagonal matrix. \label{lem:Hankel}
\end{lem}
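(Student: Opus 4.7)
\subsection*{Proof proposal for Lemma \ref{lem:Hankel}}

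The plan is to exploit the defining Hankel symmetry $H_{k,l}=H_{k+1,l-1}$ directly on the factorized form $\m{H}=\m{A}(\m{f})\m{C}\m{A}^T(\m{f})$ and reduce the statement to a linear system whose only solution is the zero matrix, using the fact that Vandermonde matrices formed from distinct nodes are full column rank as soon as there are enough rows.

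First, I would write out the $(k,l)$ entry of $\m{H}$ in terms of $z_j=e^{i2\pi f_j}$ as
\[
H_{k,l}=\tfrac{1}{n}\sum_{j,m} C_{jm}\,z_j^{k-1}z_m^{l-1},
\]
and then apply the Hankel identity $H_{k,l}=H_{k+1,l-1}$ for every $k\in\{1,\dots,n-1\}$ and $l\in\{2,\dots,n\}$. Subtracting the two expressions and factoring out $z_j^{k-1}z_m^{l-2}$ yields
\[
\sum_{j,m} C_{jm}(z_m-z_j)\,z_j^{k-1}z_m^{l-2}=0
\]
for all valid $k,l$. Setting $D_{jm}\coloneqq C_{jm}(z_m-z_j)$, this becomes $\m{W}\m{D}\m{W}^T=\m{0}$, where $\m{W}\in\bC^{(n-1)\times r}$ is the Vandermonde-type matrix with entries $W_{p+1,j}=z_j^p$, $p=0,\dots,n-2$.

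Since the $z_j$ are distinct and $r<n$ gives $r\leq n-1$, $\m{W}$ has full column rank, so it admits a left inverse; multiplying the identity on the left by $\m{W}^{\dag}$ and on the right by $(\m{W}^{\dag})^T$ yields $\m{D}=\m{0}$. Because $z_m\neq z_j$ for $j\neq m$, this forces $C_{jm}=0$ whenever $j\neq m$, which is exactly the claim that $\m{C}$ is diagonal.

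The only delicate step is identifying the right shift identity to exploit and recognizing that the resulting bilinear equation in the unknown matrix $\m{D}$ is a Vandermonde-sandwich whose rank condition is already guaranteed by $r<n$. No appeal to Kronecker's theorem or to any uniqueness statement for the Vandermonde decomposition of Hankel matrices is actually required, since the Hankel structure is used only through its single-shift invariance.
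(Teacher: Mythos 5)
Your proposal is correct, and it takes a genuinely different and more elementary route than the paper. The paper proves Lemma \ref{lem:Hankel} by invoking the Kronecker/Ellis--Lay structure theorem for finite-rank Hankel matrices: it first shows $\rank\sbra{\m{H}}=\rank\sbra{\m{H}_{n-1}}=\rank\sbra{\m{C}}$, factors $\m{H}=\widetilde{\m{A}}\widetilde{\m{C}}\widetilde{\m{A}}^T$ with $\widetilde{\m{A}}$ a generalized Vandermonde matrix and $\widetilde{\m{C}}$ block diagonal, and then argues that $\widetilde{\m{A}}$ must in fact consist of columns of $\m{A}\sbra{\m{f}}$, forcing $\widetilde{\m{C}}$ and hence $\m{C}$ to be diagonal. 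You instead encode the anti-diagonal constancy $H_{k,l}=H_{k+1,l-1}$ directly on the factorization, obtaining $\m{W}\m{D}\m{W}^T=\m{0}$ with $D_{jm}=C_{jm}\sbra{z_m-z_j}$ and $\m{W}$ the $(n-1)\times r$ Vandermonde matrix on the distinct nodes $z_j$; full column rank of $\m{W}$ (which is exactly where $r<n$ enters) gives $\m{D}=\m{0}$, and distinctness of the $z_j$ kills every off-diagonal $C_{jm}$. I checked the index bookkeeping: $(k-1,l-2)$ does sweep all of $\lbra{0,\dots,n-2}^2$, so the bilinear system is complete, and cancelling $\m{W}$ on the left and $\m{W}^T$ on the right via pseudo-inverses is legitimate since $\m{W}$ has full column rank and $\m{W}^T$ full row rank. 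Your argument is shorter, self-contained, and makes explicit that only single-shift invariance of the Hankel structure is needed; the paper's route yields some extra structural information (e.g., identification of the rank of $\m{C}$ and of which columns of $\m{A}$ survive), but none of that is needed for the stated conclusion, so nothing is lost by your approach.
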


\begin{proof} We make use of the Kronecker's theorem for Hankel matrices (see, e.g., \cite{ellis1992factorization}). Let $r'=\rank\sbra{\m{C}}\leq r$. Also let $\m{H}_{n-1}\in\bC^{\sbra{n-1}\times\sbra{n-1}}$ be the principal submatrix of $\m{H}$ obtained by removing the last row and column. It follows that
\equ{\m{H}_{n-1} = \m{A}_{n-1}\sbra{\m{f}}\m{C} \m{A}_{n-1}^T\sbra{\m{f}}.}
As both $\m{A}\sbra{\m{f}}$ and $\m{A}_{n-1}\sbra{\m{f}}$ have full column rank provided that $r<n$, it holds that $\rank\sbra{\m{H}} = \rank\sbra{\m{H}_{n-1}} = \rank\sbra{\m{C}} = r'$. By \cite[Theorem 3.1]{ellis1992factorization} $\m{H}$ can be factorized as
\equ{\m{H} = \widetilde{\m{A}}\widetilde{\m{C}}\widetilde{\m{A}}^T,}
where $\widetilde{\m{A}}\in\bC^{n\times r'}$ is a generalized Vandermonde matrix and $\widetilde{\m{C}}\in\bC^{r'\times r'}$ is an invertible block diagonal matrix. Interested readers are referred to \cite{ellis1992factorization} for the specific forms of $\widetilde{\m{A}}$ and $\widetilde{\m{C}}$. We will use the following facts: 1) any $n\times r'$ generalized Vandermonde matrix has full column rank if $n\geq r'$, and 2) $\widetilde{\m{C}}$ becomes a diagonal matrix if $\widetilde{\m{A}}$ is a Vandermonde matrix. From the equality
\equ{\widetilde{\m{A}}\widetilde{\m{C}}\widetilde{\m{A}}^T = \m{A} \m{C} \m{A}^T \label{eq:equality1}}
it follows that $\widetilde{\m{A}} = \m{A}\mbra{\m{C} \m{A}^T \sbra{\widetilde{\m{C}}\widetilde{\m{A}}^T}^{\dag}}$. This means that each column $\widetilde{\m{a}}_j$, $j=1,\dots,r'$ in $\widetilde{\m{A}}$ is a linear combination of the columns in $\m{A}$ and thus the $n\times (r+1)$ matrix $\mbra{\widetilde{\m{a}}_j, \m{A}}$ is rank-deficient. By the assumption that $n\geq r+1$ and the properties of generalized Vandermonde matrices mentioned above, it follows that $\widetilde{\m{a}}_j$ is a column in $\m{A}$ and thus $\widetilde{\m{A}}$ is a Vandermonde matrix formed by $r'$ columns in $\m{A}$. It also follows that $\widetilde{\m{C}}$ is a diagonal matrix. As a result, we conclude by \eqref{eq:equality1} that $\m{C}$ is a diagonal matrix whose diagonal consists of that in $\widetilde{\m{C}}$ and $r-r'$ zeros up to re-sorting.
\end{proof}

Now we can prove Lemma \ref{lem:Tdiag} in the case of $d=1$ for which \eqref{eq:TBCBH} becomes
\equ{\m{T} = \m{A}\sbra{\m{f}}\m{C}\m{A}^H\sbra{\m{f}}.}
Let $\m{H}$ and $\check{\m{A}}$ be the matrices obtained by sorting the rows of $\m{T}$ and $\m{A}$ in reverse order, respectively. It is obvious that $\m{H}$ is a Hankel matrix and $\m{H} = \check{\m{A}}\m{C} \m{A}^H$. Moreover, note that $\check{\m{A}} = \overline{\m{A}}\diag\sbra{e^{i2\pi\sbra{n-1}f_1}, \dots, e^{i2\pi\sbra{n-1}f_r}}$, where $\overline{\cdot}$ denotes the complex conjugate operator. It follows that
\equ{\overline{\m{H}} = \m{A}\diag\sbra{e^{-i2\pi\sbra{n-1}f_1}, \dots, e^{-i2\pi\sbra{n-1}f_r}} \overline{\m{C}} \m{A}^T }
which is still a Hankel matrix. By Lemma \ref{lem:Hankel} we conclude that $\diag\sbra{e^{-i2\pi\sbra{n-1}f_1}, \dots, e^{-i2\pi\sbra{n-1}f_r}} \overline{\m{C}}$ is a diagonal matrix and so is $\m{C}$.

Now suppose that Lemma \ref{lem:Tdiag} holds for $d=d_0-1$, $d_0\geq2$. By induction it suffices to show that Lemma \ref{lem:Tdiag} also holds for $d=d_0$. Let us view $\m{T}$ as an $n_1\times n_1$ block Toeplitz matrix. For the $\sbra{j+1,k+1}$th block, $0\leq j, k \leq n_1-1$ the following identity holds by \eqref{eq:TBCBH}:
\equ{\begin{split}
&\m{T}\sbra{j+1,k+1} \\
&= \m{A}_{\m{n}_{-1}}\sbra{\m{f}_{-1}} \m{Z}_1^j\m{C}\m{Z}_1^{-k}\m{A}_{\m{n}_{-1}}^H\sbra{\m{f}_{-1}}, \end{split}\label{eq:Tjkblk}}
where $\m{f}_{-1}$ denotes $\m{f}$ after removing the first row and $\m{Z}_1 = \diag\sbra{e^{i2\pi f_{11}},\dots, e^{i2\pi f_{1r}}}$.
Let $\lbra{\widetilde{\m{f}}_{-1,j}}_{j=1}^{r'_1}$, $r'_1\leq r$, denote the non-redundant collection of $\lbra{\m{f}_{-1,j}}_{j=1}^r$ and let $\m{\Gamma}\in\lbra{0,1}^{r'_1\times r}$ be the matrix which is such that $\m{f}_{-1} = \widetilde{\m{f}}_{-1} \m{\Gamma}$. Then, \eqref{eq:Tjkblk} becomes
\equ{\begin{split}
&\m{T}\sbra{j+1,k+1} \\
&= \m{A}_{\m{n}_{-1}}\sbra{\widetilde{\m{f}}_{-1}} \m{\Gamma}\m{Z}_1^j\m{C}\m{Z}_1^{-k}\m{\Gamma}^T \m{A}_{\m{n}_{-1}}^H\sbra{\widetilde{\m{f}}_{-1}}. \end{split}}
Note that $\m{T}\sbra{j+1,k+1}$, $0\leq j, k \leq n_1-1$ are all $\m{n}_{-1}$, $\sbra{d_0-1}$LT matrices and $\widetilde{\m{f}}_{-1,j}$, $j=1,\dots,r'_1$ are distinct points in $\bT^{d_0-1}$. By the assumption that Lemma \ref{lem:Tdiag} holds for $d=d_0-1$ we have that
\equ{\m{D}\sbra{j,k} \coloneqq \m{\Gamma}\m{Z}_1^j\m{C}\m{Z}_1^{-k}\m{\Gamma}^T, \quad 0\leq j, k \leq n_1-1 \label{eq:Djk}}
are all diagonal matrices.

Let $\widetilde{\m{C}}\sbra{j,k} = \m{Z}_1^j\m{C}\m{Z}_1^{-k}$. Its $\sbra{p,q}$th entry satisfies the equation:
\equ{\widetilde{C}_{pq}\sbra{j,k} = C_{pq}e^{i2\pi\sbra{jf_{1p}-kf_{1q}}}. \label{eq:Cpqjk}}
We next study some properties of $\m{\Gamma}$. Define $\m{S}_m=\lbra{j:\; \Gamma_{mj}=1}$, $m=1,\dots,r'_1$. According to the definition of $\m{\Gamma}$ it holds that
\equ{\m{S}_m=\lbra{j:\; \m{f}_{-1,j}=\widetilde{\m{f}}_{-1,m}}. \label{eq:Omegam}}
Therefore, $\m{S}_m$, $m=1,\dots,r'_1$ are disjoint subsets of $\lbra{1,\dots, r}$ with
$\bigcup_{m=1}^{r'_1} \m{S}_m = \lbra{1,\dots, r}$. Moreover, $f_{1p}$, $p\in\m{S}_m$ are distinct for any $m=1,\dots,r'_1$ since $\m{f}_{:j}$, $j=1,\dots,r$ are distinct.

Let $\m{\Gamma}_m$ be the $m$th row of $\m{\Gamma}$. Using \eqref{eq:Cpqjk} we can write the $\sbra{m,n}$th entry of $\m{D}\sbra{j,k}$ in \eqref{eq:Djk} as
\equ{\begin{split} D_{mn}\sbra{j,k}
&= \m{\Gamma}_m\widetilde{\m{C}}\sbra{j,k} \m{\Gamma}_n^T\\
&= \sum_{p\in\m{S}_m}\sum_{q\in\m{S}_n} C_{pq} e^{i2\pi\sbra{jf_{1p} - kf_{1q}}}. \end{split} \label{eq:Dmnjk}}
Note that \eqref{eq:Dmnjk} holds whenever $0\leq j, k \leq n_1-1$ and that $D_{mn}\sbra{j,k}=0$ whenever $m\neq n$. Therefore, when $m\neq n$ we have the following identity:
\equ{\m{A}_{n_1}\sbra{m} \m{C}_{\m{S}_m \m{S}_n} \m{A}_{n_1}^H\sbra{n} = \m{0},}
where $\m{A}_{n_1}\sbra{m}=\mbra{\m{a}_{n_1}\sbra{f_{1p}}}_{p\in\m{S}_m} \in\bC^{n_1\times \abs{\m{S}_m}}$, $m=1,\dots,r'_1$ all have full column rank since $f_{1p}$, $p\in\m{S}_m$ are distinct, and $\m{C}_{\m{S}_m \m{S}_n}$ is a submatrix of $\m{C}$ with rows indexed by $\m{S}_m$ and columns indexed by $\m{S}_n$. It immediately follows that
\equ{\m{C}_{\m{S}_m \m{S}_n} = \m{0} \quad \text{ when } m\neq n, \label{eq:COmOn}}
which means that $\m{C}$ is a block diagonal matrix after properly sorting its rows and columns with respect to $\m{S}_m$, $m=1,\dots,r'$.

Next, we show that $\m{C}$ is also a block diagonal matrix when its rows and columns are sorted in another way. Let $\m{\cP}$ be the permutation matrix satisfying that
\equ{\m{\cP}\m{a}\sbra{\m{f}'} = \m{a}_{n_2}\sbra{f'_{2}} \otimes \sbra{\bigotimes_{l\neq 2} \m{a}_{n_l}\sbra{f'_{l}}} }
for any $\m{f}'\in\bT^d$. Then,
\equ{\m{T}^{\sbra{2}}=\m{\cP}\m{T}\m{\cP}^T }
remains a $d$LT matrix by exchanging the roles of the first and the second dimension of the $d$D frequencies. By viewing $\m{T}^{(2)}$ as an $n_2\times n_2$ block Toeplitz matrix, we can repeat the analysis above. In particular, we can similarly define $\m{\Gamma}'$ and $\m{S}'_m$ according to the partition of $\lbra{\m{f}_{-2,j}}_{j=1}^r$, where $\m{f}_{-2,j}$ denotes $\m{f}_{:j}$ with the second element removed. Then $\m{C}$ is a block diagonal matrix when its rows and columns are sorted based on $\m{S}'_m$.

Now we are ready to show that $\m{C}$ is a diagonal matrix using contradiction. To do so, suppose that $C_{pq}\neq 0$ for some $p\neq q$. By \eqref{eq:COmOn} there must exist some $m$ such that $p,q\in\m{S}_m$. It follows from \eqref{eq:Omegam} that $\m{f}_{-1,p} = \m{f}_{-1,q}=\widetilde{\m{f}}_{-1,m}$. Similarly, there must exist some $m'$ such that $p,q\in\m{S}'_{m'}$ and thus $\m{f}_{-2,p} = \m{f}_{-2,q}$. It therefore holds that $\m{f}_{:p} = \m{f}_{:q}$, which contradicts the assumption that $\m{f}_{:j}$, $j=1,\dots,r$ are distinct.

\subsection{Proof of Proposition \ref{prop:Mepsilony}} \label{append:Mepsilony}

We complete the proof in four steps. In \emph{Step 1}, we show that the optimizer $\sbra{t_{\epsilon}^*,\, \m{T}_{\epsilon}'^*}$ of the problem in \eqref{eq:newsparsemetric} is bounded for $\epsilon\in\sbra{0,1}$. Let $\m{T}_{\epsilon}'^* = \sum_{j=1}^{N'} \lambda_{\epsilon,j}\m{q}_{\epsilon,j}\m{q}_{\epsilon,j}^H$ be the eigen-decomposition of $\m{T}_{\epsilon}'^*$, where the eigenvalues $\lambda_{\epsilon,j}$, $j=1,\dots,N'$ are sorted descendingly. Let also $r_{\epsilon} = \rank\sbra{\m{T}_{\epsilon}'^*}$ and $\overline{p}_{\epsilon,j} = \abs{\m{q}_{\epsilon,j}^H\m{y}'}^2$. Then we have that
{\lentwo{\equa{ t_{\epsilon}^*
&=& \sum_{j=1}^{r_{\epsilon}}\frac{\overline{p}_{\epsilon,j}}{\lambda_{\epsilon,j}}, \label{eq:tepsilon}\\ \cM^{\epsilon}\sbra{\m{y}}
&=& \sum_{j=1}^{N'}\ln\sbra{\lambda_{\epsilon,j} + \epsilon} + \sum_{j=1}^{r_{\epsilon}}\frac{\overline{p}_{\epsilon,j}}{\lambda_{\epsilon,j}}. \label{eq:Mepsilon}
}}}By the optimality of $\lambda_{\epsilon,j}$, it holds that $\frac{\partial \cM^{\epsilon}\sbra{\m{y}}}{\partial \lambda_{\epsilon,j}} = \frac{1}{\lambda_{\epsilon,j} + \epsilon} - \frac{\overline{p}_{\epsilon,j}}{\lambda_{\epsilon,j}^2}=0$ and thus
\equ{\overline{p}_{\epsilon,j} = \frac{\lambda_{\epsilon,j}^2}{\lambda_{\epsilon,j} + \epsilon}\in\sbra{\lambda_{\epsilon,j} - \epsilon,\, \lambda_{\epsilon,j}}, \quad j = 1,\dots,r_{\epsilon}. \label{eq:pbarleqlambda}}
Inserting \eqref{eq:pbarleqlambda} into \eqref{eq:tepsilon}, we have that $t_{\epsilon}^* \leq r_{\epsilon}\leq N'$ is bounded. Moreover, by \eqref{eq:pbarleqlambda} we also have that
\equ{\tr\sbra{\m{T}_{\epsilon}'^*} = \sum_{j=1}^{r_{\epsilon}} \lambda_{\epsilon,j} \leq \sum_{j=1}^{r_{\epsilon}} \overline{p}_{\epsilon,j} + r_{\epsilon}\epsilon \leq \twon{\m{y}'}^2 + N'}
and hence, $\m{T}_{\epsilon}'^*$ is bounded as well.

In \emph{Step 2}, we show that $\lambda_{\epsilon,\cM\sbra{\m{y}'}}\geq c$ for any $\epsilon\in\sbra{0,1}$, where $c>0$ is a constant. To do so, suppose that there exist $\epsilon_j\in\sbra{0,1}$, $j=1,2,\dots$ such that $\lambda_{\epsilon_j,\cM\sbra{\m{y}'}}< \frac{1}{j}$. Since $\sbra{t_{\epsilon}^*,\, \m{T}_{\epsilon}'^*}$ is bounded, without loss of generality, we assume that $\sbra{t_{\epsilon_j}^*,\, \m{T}_{\epsilon_j}'^*}\rightarrow \sbra{t_0^*,\, \m{T}_{0}'^*}$, as $j\rightarrow+\infty$. As a result, $\rank\sbra{\m{T}_{0}'^*}< \cM\sbra{\m{y}'}$ since $\lambda_{\epsilon_j,\cM\sbra{\m{y}'}}\rightarrow0$, as $j\rightarrow+\infty$. On the other hand, it must hold that
\equ{ \begin{bmatrix}t_0^* & \m{y}'^H \\ \m{y}' & \m{T}_{0}'^* \end{bmatrix} =  \lim_{j\rightarrow+\infty}\begin{bmatrix}t_{\epsilon_j}^* & \m{y}'^H \\ \m{y}' & \m{T}_{\epsilon_j}'^* \end{bmatrix} \geq \m{0}. }
Therefore, $\sbra{t_0^*, \; \m{T}_{0}'^*}$ is a feasible solution to the problem in \eqref{eq:rankmin_problem2}. It follows that $\cM\sbra{\m{y}'} \leq \rank\sbra{\m{T}_{0}'^*}$, contradicting the fact that $\rank\sbra{\m{T}_{0}'^*}< \cM\sbra{\m{y}'}$ as shown previously.

In \emph{Step 3}, we prove the first part of the proposition. By \eqref{eq:Mepsilon} and the inequality $\lambda_{\epsilon,\cM\sbra{\m{y}'}}\geq c$ shown in \emph{Step 2} we have that
\equ{\begin{split}\cM^{\epsilon}\sbra{\m{y}}
&\geq \sum_{j=1}^{N'}\ln\sbra{\lambda_{\epsilon,j} + \epsilon} \\
&\geq \sbra{N'-\cM\sbra{\m{y}'}}\ln\epsilon + c_1,
\end{split} \label{eq:Mlowbound}}
where $c_1=\sum_{j=1}^{\cM\sbra{\m{y}'}}\ln c$ is a constant independent of $\epsilon$.
On the other hand, since any optimizer of the problem in \eqref{eq:rankmin_problem2} is a feasible solution to the problem in \eqref{eq:newsparsemetric}, it is easy to see that
\equ{\begin{split}\cM^{\epsilon}\sbra{\m{y}}
&\leq \sbra{N'-\cM\sbra{\m{y}'}}\ln\epsilon +c_2
\end{split} \label{eq:Muppbound}}
where $c_2$ is also a constant. Combining \eqref{eq:Mlowbound} and \eqref{eq:Muppbound} proves the first part.

To prove the second part of the proposition, in \emph{Step 4}, we refine \eqref{eq:Mlowbound} as
\equ{\begin{split}\cM^{\epsilon}\sbra{\m{y}}
\geq &\sbra{N'-\cM\sbra{\m{y}'}}\ln\epsilon \\
& + \sum_{j=\cM\sbra{\m{y}'}+1}^{N'} \ln \sbra{\frac{\lambda_{\epsilon,j}}{\epsilon} + 1} + c_1.
\end{split} \label{eq:Mlowbound2}}
Combining \eqref{eq:Mlowbound2} and \eqref{eq:Muppbound}, we have that
\equ{\begin{split}
&\ln \sbra{\frac{\lambda_{\epsilon,\cM\sbra{\m{y}'}+1}}{\epsilon}+1} \\
& \leq \sum_{j=\cM\sbra{\m{y}'}+1}^{N'} \ln \sbra{\frac{\lambda_{\epsilon,j}}{\epsilon}+1} \\
& \leq c_2-c_1. \end{split}}
It follows that
\equ{\lambda_{\epsilon,N'}\leq \dots \leq \lambda_{\epsilon,\cM\sbra{\m{y}'}+1} \leq \sbra{e^{c_2-c_1}-1}\epsilon. \label{eq:lambdaepsilon}}
Moreover, by \eqref{eq:lambdaepsilon} any cluster point of $\m{T}_{\epsilon}'^*$ at $\epsilon=0$ has rank no greater than $\cM\sbra{\m{y}'}$. On the other hand, the rank is no less than $\cM\sbra{\m{y}'}$ by the result in \emph{Step 2}. This observation completes the proof.

\bibliographystyle{IEEEtran}


\end{document}